\pdfoutput=1
\documentclass[letterpaper,12pt]{article}

\usepackage{graphicx}
\usepackage{amsfonts}
\usepackage{amsmath}
\usepackage{amssymb}
\usepackage{amsthm}
\usepackage{mathrsfs}
\usepackage{enumerate}
\usepackage{color}
\usepackage{hyperref}
\hypersetup{colorlinks=true,linkcolor=blue,citecolor=blue,urlcolor=blue}
\usepackage{fancyhdr} 
\usepackage{natbib}
\usepackage{placeins} 

\DeclareRobustCommand{\rev}[1]{{#1}}

\newtheorem{proposition}{Proposition}
\newtheorem*{proposition*}{Proposition}
\newtheorem{theorem}{Theorem}
\newtheorem*{theorem*}{Theorem}

\newtheorem{obs}{Observation}
\newtheorem{corollary}{Corollary}
\newtheorem*{corollary*}{Corollary}
\newtheorem{lemma}{Lemma}
\newtheorem*{lemma*}{Lemma}
\newtheorem{claim}{Claim}

\theoremstyle{definition}
\newtheorem{definition}{Definition}
\newtheorem*{definition*}{Definition}

\newtheorem*{axiom*}{Axiom}

\newtheorem*{remark*}{Remark}
\newtheorem{example}{Example}
\newtheorem*{example*}{Example}
\newtheorem*{eff*}{Efficiency}
\newtheorem*{ir*}{Individual Rationality}
\newtheorem*{feas*}{Budget-Feasibility}
\newtheorem*{imp*}{Non-Imposition}
\newtheorem*{budget*}{Budget-Balancedness}

\newcommand{\1}{\boldsymbol{1}}

\newcommand{\argmax}{\mathop{\rm arg~max}\limits}

\begin{document}
\begin{titlepage}

\title{Information Requirements for Resolving Implementation Impossibilities
}

\author{
Yuta Nakamura\footnote{
Corresponding author.
School of Economics and Business Administration,
Yokohama City University,
22-2 Seto, Kanazawa-ku, Yokohama 236-0027, Japan.
E-mail: \texttt{y\_naka@yokohama-cu.ac.jp}.
}
}

\date{\today}
\maketitle
\thispagestyle{empty}

\begingroup
\setlength{\leftmargini}{2em}
\begin{abstract}
We develop a framework for assessing implementation impossibilities
through the information required to resolve them.
Domain covers describe information used to select a mechanism
without restricting admissible preferences.
Coarsening a cover weakens the required information.
In quasi-linear public decision problems with a status quo,
we study two impossibilities of dominant-strategy implementation:
combining \emph{efficiency} and \emph{individual rationality}
with either \emph{budget feasibility} or
\emph{budget balancedness}.
\rev{Under an interval cover based on bounds on maximal social surplus,
we characterize the social choice rules implementable in dominant
strategies and satisfying the first set of axioms.
Under a surplus-level cover based on exact surplus levels,
we obtain the corresponding characterization for the second set of axioms.}
Both covers are minimal: no strictly coarser cover permits
implementation under the respective requirements.
These economically interpretable minimal covers deepen our
understanding of the informational severity of each impossibility.
\end{abstract}
\endgroup

\medskip

{\bf JEL Classification Numbers:} D82; D79; D74

{\bf Keywords:}
Mechanism design; 
Implementation theory;
Domain selection;
Public decision;
Conflict resolution

\end{titlepage}

\section{Introduction}
Implementation impossibilities identify limits to achieving a given
social objective under incentive constraints over a specified preference
domain. They typically deliver a binary conclusion: no single mechanism
implements the desired rule, or satisfies the desired collection of
properties, on that domain. Yet this conclusion leaves open an economically
meaningful question: how much additional information about the realized
preferences would make implementation possible throughout the same domain?
Some impossibilities may be resolved with relatively coarse information,
while others may require much more precise information. This paper develops
a framework for assessing the severity of an impossibility in terms of the
information required to make implementation possible. It characterizes
tight informational requirements and uses their coarseness to compare
different impossibilities.

Two benchmarks help frame our question. Without additional information,
a single mechanism must accommodate the entire preference domain, and
the original impossibility applies. At the other extreme, complete
knowledge of the realized profile eliminates the elicitation problem:
any single-valued social choice rule can be implemented simply by
prescribing its outcome at that profile. \rev{The substantive question is which intermediate forms of information suffice
for implementation and how much they can be weakened.}

Consider the preference domain $\mathcal{D}^n$ determined by the economic
environment considered by the designer, which we call the \rev{\emph{ambient domain}}.
Suppose that no social choice rule satisfying a given set of desirable
properties can be implemented by a single mechanism on this domain.
A conventional restricted-domain result may show that a rule with these
properties can be implemented on a subset $D\subsetneq\mathcal{D}^n$.
This result alone says nothing about implementation outside $D$.
Now suppose that a rule satisfying the same properties can also be
implemented on another domain $C$, with $C\cup D=\mathcal{D}^n$, using
a different mechanism. Then, information identifying either $C$ or $D$
as containing the realized profile suffices to select a mechanism that
implements a rule satisfying those properties. The profile itself need
not be known, and every profile in the original domain is handled by
at least one of the two mechanisms. The cover $\{C,D\}$ thus identifies
information sufficient to resolve the original impossibility without
excluding any admissible profile. More generally, determining which
covers permit implementation and which do not allows us to assess an
impossibility in terms of the information needed to resolve it.

We formalize this idea through domain-contingent implementation.
Let $F$ denote a social choice rule on \rev{the ambient domain} $\mathcal{D}^n$.
The design variables are a domain cover
$\{D^\lambda\}_{\lambda\in\Lambda}$, satisfying
$\bigcup_{\lambda\in\Lambda}D^\lambda=\mathcal{D}^n$,
and a family of mechanisms $\{M^\lambda\}_{\lambda\in\Lambda}$,
both indexed by the same set $\Lambda$.
For a realized preference profile $\succsim\in\mathcal{D}^n$,
a label \rev{$\lambda\in\Lambda$} is valid if $\succsim\in D^\lambda$.
The cover describes an informational requirement: at every realized
profile, the information available at the implementation stage must
suffice to identify at least one valid label.
Informally, we say that $F$ is
$\{D^\lambda\}_{\lambda\in\Lambda}$-implementable if every valid label \rev{$\lambda$}
is associated with a mechanism $M^\lambda$ whose equilibrium outcomes
are prescribed by $F$.
Thus, for any realized preference profile, information identifying a
valid label enables an outcome prescribed by $F$ to be realized in
equilibrium by selecting and announcing the corresponding mechanism.
\rev{When the cover is the singleton $\{\mathcal{D}^n\}$,
$\{\mathcal{D}^n\}$-implementation coincides with conventional
implementation on $\mathcal{D}^n$.}
Figure~\ref{fig1} summarizes this sequence: the designer commits to
the cover and mechanisms. After the preference profile is realized,
a valid label is specified and the corresponding mechanism is announced.
Individuals then interact through that mechanism.

\begin{figure}[tbp]
    \centering
    \includegraphics[
        width=\linewidth,
        trim={106bp 192bp 130bp 191bp},
        clip
    ]{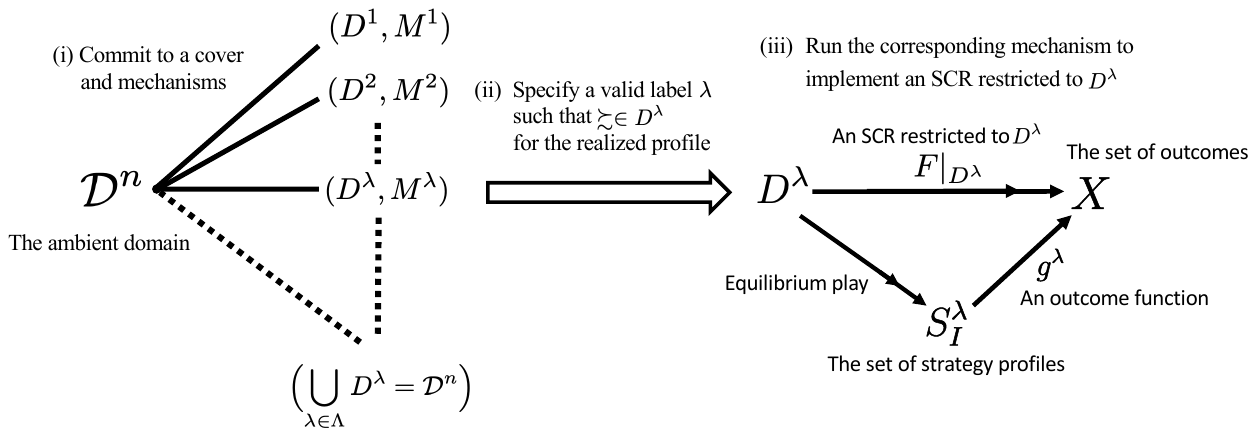}
    \caption{Domain-contingent implementation.}
    \label{fig1}
\end{figure}

A conventional restricted-domain result establishes implementability
on a particular subset of the ambient domain, but by itself leaves
implementation outside that subset unaddressed.
Our approach identifies information sufficient to make implementation
possible throughout the ambient domain, without excluding any admissible
preference profiles or relaxing the original social and incentive
requirements. By organizing suitable restricted domains into a cover
with associated mechanisms, the designer can achieve the social objective
at every admissible profile when a valid label is assigned.
The cover also allows us to examine how much information is needed to
overcome the original impossibility and how far this informational
requirement can be weakened.

To compare the informational requirements of different covers, we use
coarseness. A valid label $\lambda$ reveals that the realized preference
profile lies in $D^\lambda$. The larger $D^\lambda$ is, in the sense
of set inclusion, the more profiles remain compatible with that label
and hence the less precise the information it conveys.
Formally, $\{D^\lambda\}_{\lambda\in\Lambda}$ is coarser than
$\{C^\kappa\}_{\kappa\in K}$ if, for every $\kappa\in K$,
there exists $\lambda(\kappa)\in\Lambda$ such that
$C^\kappa\subseteq D^{\lambda(\kappa)}$.
It is strictly coarser if at least one of these inclusions is proper.
Information identifying a valid label $\kappa$ for \rev{$\{C^\kappa\}_{\kappa\in K}$}
then suffices to identify a valid label $\lambda(\kappa)$ for
\rev{$\{D^\lambda\}_{\lambda\in\Lambda}$} without additional information.
Coarseness thus compares how far the set of possible preference profiles
must be narrowed down before the corresponding mechanism is used \rev{in our implementation process}.

The informational severity of two impossibilities can then be compared.
Consider two sets of requirements on the same ambient domain and under
the same solution concept, for which implementation is possible under
covers $\{D^\lambda\}_{\lambda\in\Lambda}$ and
$\{C^\kappa\}_{\kappa\in K}$, respectively.
Suppose further that no cover strictly coarser than
$\{D^\lambda\}_{\lambda\in\Lambda}$ permits implementation for the
first problem, and no cover strictly coarser than
$\{C^\kappa\}_{\kappa\in K}$ does so for the second.
If $\{D^\lambda\}_{\lambda\in\Lambda}$ is strictly coarser than
$\{C^\kappa\}_{\kappa\in K}$, the information described by
$\{D^\lambda\}_{\lambda\in\Lambda}$ suffices for the first problem
but cannot suffice for the second.
This provides an ordinal comparison: the second impossibility is more
severe because information sufficient to resolve the first remains
insufficient to resolve the second.

The covers also allow us to assess the severity of each impossibility
in terms of the information needed for implementation.
For each problem, the corresponding cover describes the content and
precision of information sufficient to make implementation possible
throughout the ambient domain.
The absence of a strictly coarser cover permitting implementation under
the same requirements identifies a limit on weakening this information.
The cover therefore captures the severity of the impossibility in
informational terms. This assessment allows us both to compare the
severity of multiple impossibilities and to evaluate a single
impossibility on its own.

\rev{In addition, this} approach is especially useful when such covers admit a simple
economic interpretation. When two covers are described by information
about the same simple statistic, such as maximal social surplus,
the difference in their informational requirements can be expressed
in common terms. Resolving an impossibility through such a cover
therefore helps us understand its severity in economic terms.
We illustrate this approach in public decision problems using two
comparable covers defined in terms of surplus information.

\subsection{Main results}

We apply the framework to public decision problems with quasi-linear
preferences and a status quo.
The ambient domain $\mathcal{D}^n$, denoted by $V^n$
in this application, contains all valuation profiles
with the status quo normalized to zero and valuations
of the other alternatives allowed to take any real values.
Throughout this application, implementation is evaluated in dominant
strategies. On this ambient domain, the following impossibilities hold
(\citealt{GreenLaffont1977}; \citealt{Jackson2003}, Section~3.9):

\begin{itemize}
    \item[(i)]
    There exists no social choice rule that is dominant-strategy
    implementable, \emph{efficient}, \emph{individually rational},
    and \emph{budget feasible}.

    \item[(ii)]
    There exists no social choice rule that is dominant-strategy
    implementable, \emph{efficient}, \emph{individually rational},
    and \emph{budget balanced}.
\end{itemize}

\rev{Since \emph{budget feasibility} is a weaker requirement than
\emph{budget balancedness}, statement (i) implies statement (ii).}
We investigate which covers permit implementation under each set of
requirements. Both covers considered below describe information about
maximal social surplus, defined as the largest total valuation gain
from the status quo.

To overcome (i), we consider an interval cover indexed by positive
labels \rev{$r\in\mathbb{R}_{++}$}. The domain associated with $r$ consists of profiles whose
maximal social surplus is either zero or lies in
$\left[\frac{n-1}{n}r,r\right]$, where $n\geq2$ is the number of
individuals. These domains cover the entire ambient domain.
At a positive-surplus profile, identifying a valid label therefore
requires bounds on maximal surplus, while allowing its exact value
to remain unknown.

Theorem~\ref{thm1} shows that a social choice rule is
\emph{efficient}, \emph{individually rational}, \emph{budget feasible},
and implementable under this cover if and only if it is a
quasi-equal-division rule.
These rules select an alternative that maximizes social surplus
and equalize individuals' final utilities at a nonnegative level,
potentially resulting in a net outflow of transfers.
Moreover, no strictly coarser cover permits implementation of a rule
satisfying the same requirements, even if the rule and the mechanisms
are changed. The interval cover therefore identifies a limit on
weakening information sufficient to overcome (i).

This result has a simple interpretation in disputes involving a small
number of parties. Consider two neighboring landowners choosing among
boundary settlements, with the unresolved dispute as the status quo.
Each owner has a monetary valuation of each settlement, and transfers
can compensate an owner for accepting a settlement that creates a
positive combined gain.
With two owners, when the maximal combined gain is positive, a valid
label need only place it between $r/2$ and $r$.
The corresponding mechanism elicits their valuations, selects an
efficient settlement, and determines transfers that leave neither
owner worse off without an outside subsidy.
More generally, for $n$ parties and a fixed $r$, the interval has
width $r/n$. Fewer participants therefore allow less precise surplus
information under this cover.

To overcome (ii), we consider a surplus-level cover, also indexed by
$r>0$. The domain associated with $r$ consists of profiles whose
maximal social surplus is either zero or exactly $r$.
Thus, at a positive-surplus profile, a valid label identifies the exact
surplus level.

\rev{Theorem~\ref{thm2} shows that a social choice rule satisfies
\emph{efficiency}, \emph{individual rationality}, and \emph{budget balancedness}
and is implementable under this cover if and only if it is an
equal-division rule.}
These rules select an alternative that maximizes social surplus and
distribute the entire surplus equally among individuals, so transfers
sum to zero. Again, no strictly coarser cover permits implementation
under the same requirements.
In particular, enlarging its domains to pool distinct positive surplus
levels destroys budget-balanced implementation.
\rev{\emph{Budget balancedness}} therefore remains demanding even when information
about the realized preferences can be used at the implementation stage.

The two results make the informational difference between (i) and (ii)
explicit. The interval cover is strictly coarser than the surplus-level
cover, and the minimality of the latter implies that the former cannot
permit implementation under (ii).
The contrast expresses the difference in terms of the precision of
information about the same economic statistic: bounds on positive
maximal surplus suffice under \emph{budget feasibility}, whereas the
cover resolving the budget-balance impossibility requires its exact
level. \rev{Moreover, each} minimality result is also informative separately, because
it identifies a limit on weakening the surplus information sufficient
to resolve the corresponding impossibility.

\subsection{Related literature}

The paper builds on the literature examining how preference domains
shape social choice and implementation
\citep[e.g.,][]{Holmstrom1979,Moulin1980}.
Our framework organizes restricted domains into a cover of a fixed
ambient domain and interprets information identifying a valid label
as a requirement for selecting a mechanism.
This connects implementability on restricted domains to implementation
throughout the original environment.
Comparing covers then allows us to examine how far the informational
requirement can be weakened while preserving the social and incentive
requirements.

The paper also relates to mechanism design with verifiable information.
Studies of implementation with evidence include
\citet{GreenLaffont1986},
\citet{BullWatson2007},
\citet{Ben-PorathLipman2012},
\citet{KartikTercieux2012},
\citet{Peralta2019},
and \citet{BanerjeeChenSun2024}.
\rev{Other work on mechanism design with verifiable information} examines partial verification, communication,
commitment, and disclosure
\citep{DeneckereSeverinov2008,SherVohra2015,
Ben-PorathDekelLipman2019,Pram2021}.
These studies examine how evidence about agents' types or the state
affects incentives and attainable outcomes.
In the hard-evidence formulation, presenting evidence, such as a
document or record, rules out types or states at which that evidence
is unavailable.
Our framework specifies an informational requirement through a domain
cover and asks whether information identifying a valid label permits
implementation. \rev{Our focus is on characterizing the information sufficient for
mechanism selection and the limits on weakening that information
while preserving implementation.}

Other studies examine costly or imperfect verification
\rev{of information privately held by agents}
\citep{Townsend1979,GaleHellwig1985,BorderSobel1987,
MookherjeePng1989,Ben-PorathDekelLipman2014,PereyraSilva2023}.
Information available after allocation also plays a role in contracting
and mechanism design, including public ex post information
\citep{RiordanSappington1988}, subsequent elicitation of realized
utilities \citep{Mezzetti2004}, and ex post verification with limited
penalties \citep{MylovanovZapechelnyuk2017}.
These contributions study how the availability, cost, or timing of
information affects feasible contracts and allocations.
Our contribution is to assess and compare the severity of implementation
impossibilities in terms of the information needed to resolve them.

\rev{The rest of this paper is organized as follows.}
Section~\ref{quasi} introduces the framework and its formal analysis
in the public decision setting.
Section~\ref{public} presents the characterization and minimality
result under \emph{budget feasibility}.
Section~\ref{public-sec4} gives the corresponding results under
\rev{\emph{budget balancedness}}.
Section~\ref{conc} concludes.
Proofs omitted from the main text are collected in
Appendices~\ref{app:revelation}--\ref{app:verification}.

\section{The model}\label{quasi}
\subsection{Preliminaries}
Let $I=\{1,2, \dots, n\}$ ($n \geq 2$) be the finite set of \emph{individuals}, and $A=\{\phi, a_1, \dots, a_m\}$ ($|A| \geq 2$) the finite set of \emph{alternatives}, where $\phi$ denotes the \emph{status quo}.
An \emph{outcome} is a vector $(a, p) \in X \equiv A \times \mathbb{R}^n$, where $p \in \mathbb{R}^n$ denotes monetary transfers (received amount) among the individuals.
A \emph{valuation function} is a function $v_i: A \to \mathbb{R}$, where
for simplicity, we normalize each $v_i: A \to \mathbb{R}$ as $v_i(\phi)=0$.
Let $V$ be the set of all valuation functions.
Given $v_i \in V$, individual $i$'s \emph{quasi-linear utility} is $u_i(a, p;v_i) = v_i(a)+p_i$.
A \emph{valuation profile} is an $n$-tuple of valuation functions $v=(v_1, \dots, v_n) \in  V^n$.
The ambient domain in this environment is $V^n$. This is the valuation representation of the general preference domain \(\mathcal D^n\) \rev{introduced in the introduction}: each normalized valuation function represents a quasi-linear preference over outcomes. We keep this ambient domain fixed throughout the analysis.
For each $v \in  V^n$ and each $I' \subset I$, we denote $v_{I'} \equiv (v_j)_{j \in I'}$ and $v_{-I'} \equiv (v_j)_{j \in I \setminus I'}$, respectively.
For each valuation profile \(v\in V^n\), define the maximal social surplus by
$$
S(v) \equiv \max_{b\in A}\sum_{j=1}^n\bigl(v_j(b)-v_j(\phi)\bigr)
=\max_{b\in A}\sum_{j=1}^n v_j(b),
$$
where \rev{the last equality} follows from the normalization \(v_j(\phi)=0\) for every \(j\in I\). Since \(\phi\in A\), we have \(S(v)\geq 0\) for every \(v\in V^n\). Let $$V_0 \equiv \{v\in V^n:S(v)=0\}$$
denote the set of all zero-surplus profiles.

A \emph{social choice rule} is a correspondence $F: V^n \to 2^{A \times \mathbb{R}^n} \setminus \{\emptyset\}$ that associates each valuation profile $v \in V^n$ with a nonempty set of outcomes $F(v) \subset  A \times \mathbb{R}^n$.
We consider the following four axioms, each of which imposes a requirement on the outcomes prescribed by $F$ at every profile in the ambient domain.

First, {\it efficiency} requires that an outcome chosen by a social choice rule maximize the social surplus.
 \begin{eff*}
For each $v \in V^n$ and each $(a, p) \in F(v)$,
$a \in \argmax_{b \in A} \left\{ \sum_{j=1}^n v_j(b) \right\}$.
\end{eff*}

\medskip

Second, {\it individual rationality} requires that for any outcome chosen by a social choice rule, each individual's final utility must be weakly greater than that at the status quo.
\begin{ir*}
For each $i \in I$, each $v \in V^n$, and each $(a, p) \in F(v)$, $u_i\big(a, p; \, v_i \big) \geq 0$.
\end{ir*}

\medskip

Third, {\it budget feasibility} requires that for any outcome chosen by a social choice rule, net transfers among individuals must be non-positive.
Recall that each $p_i$ denotes the transfer amount received by individual $i$.

\begin{feas*}
For each $v \in V^n$ and each $(a, p) \in F(v)$, $\sum_{j=1}^n p_j \leq 0$.
\end{feas*}

\medskip

Finally, {\it budget balancedness} requires that for any outcome chosen by a social choice rule, net transfers among individuals must be zero.
Clearly, {\it budget balancedness} is a stronger requirement than {\it budget feasibility}.

\begin{budget*}
For each $v \in V^n$ and each $(a, p) \in F(v)$, $\sum_{j=1}^n p_j = 0$.
\end{budget*}

For each $i \in I$, let $S_i$ be a strategy space of individual $i$, and let $S_I \equiv S_1 \times \dots \times S_n$.
A \emph{strategy profile} is an $n$-tuple of strategies $s=(s_1, \dots, s_n) \in  S_I$.
An \emph{outcome function} is a function $g: S_I \to A \times \mathbb{R}^n$  that maps each strategy profile $s \in S_I$ to an outcome $g(s)  \in  A \times \mathbb{R}^n$.
For notational convenience, we often write $g(s)=(a(s), p(s))$.
A \emph{mechanism} is a pair $M=(S_I, g)$ of strategy spaces and an outcome function.

Given a mechanism $M$ and a valuation function $v_i \in V$,  a strategy $s_i \in S_i$ is a \emph{dominant strategy for $i$ at} $(M, v_i)$ if for each $s_{-i} \in \prod_{j \neq i}S_j$ and each $s'_{i} \in S_i$, 
\begin{align*}
u_i\left(g(s_i, s_{-i}); v_i\right) \geq u_i\left(g(s'_i, s_{-i}); v_i\right). 
\end{align*}
Let $DS_i(M, v_i)$ be the set of dominant strategies for $i$ at $(M, v_i)$.
A strategy profile $s \in S_I$ is a \emph{dominant strategy equilibrium at $(M, v)$} if for each $i \in I$, $s_i \in DS_i(M, v_i)$ holds.
Let $DE(M, v) \equiv \prod_{j=1}^n DS_j(M, v_j)$ be the set of dominant strategy equilibria at $(M, v)$.
The set of \emph{dominant strategy outcomes at $(M, v)$} is denoted by 
\[g\left(DE(M, v)\right) \equiv \left\{g(s) \in A \times \mathbb{R}^n: s \in DE(M, v) \right\}.\]

The following definition summarizes the standard notion of dominant strategy implementation in its full form.

\begin{definition}\label{DSI}
A mechanism $M$ \emph{implements a social choice rule $F$ in dominant strategies} if for each $v \in V^n$,
$g\left(DE(M, v)\right) = F(v)$.
A social choice rule $F$ is \emph{dominant strategy implementable} if there exists $M$ that implements $F$ in dominant strategies.

\end{definition}

As noted in the introduction, our purpose is to overcome the following two impossibility results due to \citet{GreenLaffont1977}: (i) there exists no social choice rule that is efficient, individually rational, budget feasible, and dominant strategy implementable; and (ii) there exists no social choice rule that is efficient, individually rational, budget balanced, and dominant strategy implementable. Since budget balancedness is stronger than budget feasibility, statement (i) immediately implies statement (ii).

\subsection{Domain-contingent implementation}\label{MDG}
We study the information requirements for resolving the impossibilities in Section 2.1 while keeping \(V^n\) as the ambient domain. We represent these requirements by a domain cover.
A \rev{\emph{domain cover}} is an indexed family \(\{D^\lambda\}_{\lambda\in\Lambda}\) of nonempty subsets of \(V^n\) such that $\bigcup_{\lambda\in\Lambda}D^\lambda=V^n.$
Each \(\lambda\in\Lambda\) is a \rev{\emph{label}}, and \(D^\lambda\) is its \rev{\emph{support}}.\footnote{Distinct labels may have identical supports. This distinction matters for implementing multivalued social choice rules, as illustrated by Observation 1.} A label \(\lambda\) is valid at a valuation profile \(v\in V^n\) if \(v\in D^\lambda\). \rev{The supports in \(\{D^\lambda\}_{\lambda\in\Lambda}\)} may overlap, so a profile may admit several valid labels.
A valid label \rev{\(\lambda\)} indicates that the realized profile belongs to its support. A larger support \rev{\(D^\lambda\)} leaves more profiles compatible with the label \rev{\(\lambda\)} and therefore conveys less information about the realized profile. The covering condition ensures that at least one valid label exists at every profile in the ambient domain.

\begin{example}
The following are examples of domain covers.
\begin{itemize}
\item \rev{The trivial cover \(\{V^n\}\) has a single label, which is valid at every profile; we omit the label index from the notation.} Supplying a valid label therefore requires no information about the realized profile.
\item The fully revealing cover \(\{\{v\}\}_{v\in V^n}\) has one label for each profile. \rev{For each realized profile \(v\), the unique valid label is \(v\) itself.}
\item Let 
$D^L \equiv \{v\in V^n:S(v)\leq100\}$
and  $D^H \equiv \{v\in V^n:S(v)>100\}$.
Under the cover \(\{D^L,D^H\}\), a valid label \rev{\(\lambda\in\{L,H\}\)} reveals whether maximal social surplus exceeds \(100\).
\end{itemize}
\end{example}

The domain cover \(\{D^\lambda\}_{\lambda\in\Lambda}\) and a corresponding family of mechanisms
$\{M^\lambda\}_{\lambda\in\Lambda}$ with
$M^\lambda=(S_I^\lambda,g^\lambda)$
are fixed in advance. Once a valuation profile is realized, a label $\lambda\in\Lambda$
is supplied by the person or institution responsible for applying
the mechanism in practice.\footnote{
We call this person or institution the \emph{executor}.
The executor's role and the consequences of invalid labels
are discussed in Section~\ref{Discussion}.}
The corresponding mechanism $M^\lambda$ is then announced. Individuals then play that mechanism. Implementation requires the prescribed outcomes to be obtained whenever the supplied label is valid, as formalized below.
 
\medskip

\begin{definition}\label{def-D}
A family of mechanisms $\left\{M^{\lambda} \right\}_{\lambda \in \Lambda}$ is said to \emph{$\{D^{\lambda}\}_{\lambda \in \Lambda}$-implement a social choice rule $F$} if the following two conditions are satisfied:
\begin{itemize}
\item[(i)] for each  $\lambda \in \Lambda$ and each $v \in D^{\lambda}$, $DE(M^{\lambda}, v) \neq \emptyset$ and $g^{\lambda} \left(DE(M^{\lambda}, v)\right) \subset F(v)$,
\item[(ii)] For each $v \in V^n$, $\bigcup_{\left\{\lambda \in \Lambda: v \in D^{\lambda}\right\}} g^{\lambda} \left(DE(M^{\lambda}, v)\right) = F(v)$.
\end{itemize}
A social choice rule $F$ is \emph{$\{D^{\lambda}\}_{\lambda \in \Lambda}$-implementable} if there exists $\left\{ M^{\lambda} \right\}_{\lambda \in \Lambda}$ that $\{D^{\lambda}\}_{\lambda \in \Lambda}$-implements $F$.
\end{definition}

Condition (i) requires every valid label to yield a nonempty set of dominant-strategy outcomes contained in \(F(v)\).\rev{\footnote{\rev{Since each $D^{\lambda} \in \{D^{\lambda}\}_{\lambda \in \Lambda}$ may not be a Cartesian product of subsets of $V$, we cannot directly define \emph{strategy-proofness} of social choice rules restricted to $D^{\lambda}$.
Therefore, we consider indirect mechanisms to explicitly define individuals' strategy spaces.
}}} Condition (ii) requires every outcome in \(F(v)\) to arise under at least one valid label. Thus, a single valid label need not generate all of \(F(v)\), but the outcome sets across valid labels must collectively exhaust \(F(v)\).
\rev{Whenever the supplied label $\lambda$ is valid at $v$, the mechanism $M^\lambda$ must have dominant-strategy equilibria, and every resulting outcome must belong to $F(v)$.}

The following observations provide two benchmarks.

\begin{obs}\label{D-exist}
Every social choice rule $F$ is implementable under a domain cover whose supports are singletons. In particular, every single-valued social choice rule is \(\{\{v\}\}_{v\in V^n}\)-implementable.
\end{obs}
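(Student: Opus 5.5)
The plan is to build the cover and the mechanisms directly, using a constant mechanism on each singleton support. Take an arbitrary social choice rule $F$. For each $v\in V^n$ and each outcome $x\in F(v)$, introduce a label $\lambda=(v,x)$ and set $D^{(v,x)}\equiv\{v\}$. This indexed family is a domain cover. Its supports are nonempty singletons, and every $v$ has at least one label because $F(v)\neq\emptyset$, so the union is $V^n$. Distinct labels may share the support $\{v\}$, which the footnote to the definition of a domain cover explicitly allows. For each label $(v,x)$, let $M^{(v,x)}$ give every individual a one-element strategy space $S_i=\{\ast\}$, and let the outcome function be constant with $g^{(v,x)}(\ast,\dots,\ast)=x$.

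Next I check the two conditions of Definition~\ref{def-D}.
\begin{itemize}
\item[(i)] Fix a label $(v,x)$ and the only profile in its support, namely $v$. Each player has a single strategy, so that strategy is trivially dominant at every valuation. Hence $DE(M^{(v,x)},v)=\{(\ast,\dots,\ast)\}\neq\emptyset$, and $g^{(v,x)}(DE(M^{(v,x)},v))=\{x\}\subset F(v)$.
\item[(ii)] Fix $v\in V^n$. The labels valid at $v$ are exactly those of the form $(v,x)$ with $x\in F(v)$. The union of their dominant-strategy outcome sets is therefore $\bigcup_{x\in F(v)}\{x\}=F(v)$.
\end{itemize}
This establishes the first sentence of the observation.

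For the second sentence, let $F$ be single-valued and write $F(v)=\{f(v)\}$. Then each $v$ carries exactly one label $(v,f(v))$. Relabelling it simply as $v$ shows that the cover is the fully revealing cover $\{\{v\}\}_{v\in V^n}$, with mechanisms $M^v$ constantly equal to $f(v)$.

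No step is a genuine obstacle; the only subtle point is the multivalued case. With only one label per singleton, a single mechanism would have to generate all of $F(v)$ as its set of dominant-strategy outcomes at $v$. That is also possible: let player 1 choose any $x\in F(v)$ and give all other players trivial strategy spaces. Since everything is evaluated at the single profile $v$, player 1's dominant strategies are the utility-maximizing choices, and these need not exhaust $F(v)$. Duplicating labels, as above, avoids this difficulty entirely. This is why the observation is stated for covers whose supports are singletons, rather than for the cover $\{\{v\}\}_{v\in V^n}$ itself, in the multivalued case.
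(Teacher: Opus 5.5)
Your construction is the paper's own proof. You use labels $(v,x)$ with $x\in F(v)$, singleton supports $\{v\}$, and constant mechanisms with outcome $x$, and in the single-valued case you keep one label per profile. The only blemish is in your closing aside, which the proof does not rely on: ``That is also possible'' is immediately contradicted by your own correct observation that player~1's dominant strategies need not exhaust $F(v)$.
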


\begin{proof}
Let $\Lambda_0 \equiv \{(v,x)\in V^n\times X:x\in F(v)\}$.
For each \((v,x)\in\Lambda_0\), set \(D^{(v,x)} \equiv \{v\}\) and choose a mechanism \(M^{(v,x)}\) with a singleton strategy space for each individual and constant outcome \(x\). Its dominant-strategy outcome set at \(v\) is \(\{x\}\). Taking the union over the labels valid at \(v\) yields \(F(v)\), so both conditions of Definition \ref{def-D} hold. If \(F\) is single valued, one label per profile suffices. 
\end{proof}

\begin{obs}\label{D-equiv} 
The following statements hold:
\begin{itemize}
\item[(i)] There exists no social choice rule that is efficient, individually rational, budget feasible, and $\{V^n\}$-implementable,
\item[(ii)] There exists no social choice rule that is efficient, individually rational, budget balanced, and $\{V^n\}$-implementable.
\end{itemize}
\end{obs}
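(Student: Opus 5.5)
Observation~\ref{D-equiv} says that $\{V^n\}$-implementation is simply full dominant-strategy implementation on $V^n$ in the sense of Definition~\ref{DSI}. My plan is therefore to establish this equivalence and then import the Green--Laffont impossibilities (i) and (ii) from Section~2.1. Since budget balancedness implies budget feasibility, (ii) follows immediately from (i), so the substantive work concerns (i).

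\textbf{Step 1: reduction to Definition~\ref{DSI}.} Under the trivial cover there is a single label, whose support is $V^n$. Condition~(ii) of Definition~\ref{def-D} then reads $g(DE(M,v))=F(v)$ for every $v\in V^n$, which is exactly Definition~\ref{DSI}. Condition~(i) adds only nonemptiness of $DE(M,v)$, and this already follows from $F(v)\neq\emptyset$. Hence $F$ is $\{V^n\}$-implementable if and only if it is dominant strategy implementable.

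\textbf{Step 2: the impossibility on $V^n$.} I would cite \citet{GreenLaffont1977} and \citet{Jackson2003}, as the paper does. For a self-contained argument, I would proceed as follows.
\begin{itemize}
\item Pick any selection $s_i^*(v_i)\in DS_i(M,v_i)$. The composite $f=g\circ s^*$ is a single-valued, strategy-proof, efficient choice function on the full quasi-linear domain $V^n$.
\item By the Green--Laffont/Holmstr\"om characterization (valid because $V$ is connected and convex), $f$ is a Groves rule:
\[
p_i(v)=\sum_{j\neq i}v_j(a(v))+h_i(v_{-i}).
\]
\item Individual rationality applied at profiles where $v_i$ is arbitrarily low on every non-status-quo alternative gives a lower bound on $h_i$. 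Concretely, taking $v_i$ very negative except at $\phi$ makes $a=\phi$ efficient there, so the bound is
\[
h_i(v_{-i})\ge -\max_{b}\sum_{j\neq i}v_j(b)\cdot 0 + \text{(the Clarke term)}.
\]
More precisely, $h_i(v_{-i})\ge -\max_b\sum_{j\ne i}v_j(b)$ fails to bind unless the $\phi$-option is taken into account. I would use the pivot-type bound $h_i(v_{-i})\geq -\max_{b}\sum_{j\neq i}v_j(b)$ combined with IR.
\item Then I would exhibit a two-alternative profile that violates budget feasibility. A natural choice is $A\supseteq\{\phi,a_1\}$ with $v_j(a_1)=1$ for all $j$ and all other alternatives very negative. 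There $S(v)=n$, and summing the Groves transfers with the IR-induced bounds on $h_i$ gives $\sum p_i\geq (n-1)n-\dots>0$. Executing this requires careful evaluation of the lower bound on $h_i$.
\end{itemize}

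\textbf{Main obstacle.} The hard step is pinning down the sharp lower bound on $h_i$ that IR forces, together with a profile at which the sum of Groves transfers is strictly positive. First I would try $v_i(a_1)=0$ for agent $i$ and $v_j(a_1)=1$ for the others. IR at that profile gives $h_i(v_{-i})\ge -(n-1)$. Evaluating at the all-ones profile then yields
\[
\sum_i p_i\ \ge\ n(n-1)-n(n-1)=0.
\]
This is not strict, so I would perturb to $v_i(a_1)=-\varepsilon$ to obtain a strict violation. Under this perturbation $a_1$ remains efficient whenever $n-1>\varepsilon$, and IR gives $h_i\ge -(n-1)+\varepsilon$. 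Hence $\sum_i p_i\ge n\varepsilon>0$, which contradicts budget feasibility and proves (i). Part (ii) then follows from (i).
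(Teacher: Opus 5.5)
Your proposal is correct and follows the paper's proof. Under the single-label cover $\{V^n\}$, condition (ii) of Definition~\ref{def-D} is exactly $g(DE(M,v))=F(v)$, and it implies condition (i) because $F(v)\neq\emptyset$; so both claims reduce to the Green--Laffont impossibilities, with (ii) following from (i).

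Your optional self-contained Groves argument also works as finally executed. You can drop the garbled middle bullet: individual rationality at a report $v_i$ that is very negative on every non-status-quo alternative (so that $S=0$) already gives $h_i(v_{-i})\geq 0$, and hence $\sum_i p_i(v)\geq (n-1)S(v)>0$ at any positive-surplus profile.
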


\begin{proof}
Under the trivial cover \(\{V^n\}\), there is a single label, valid at every profile. Definition 2 therefore reduces to
\rev{$g(DE(M,v))=F(v)$ for every $v\in V^n$,}
which is conventional dominant-strategy implementation. Both claims follow from the impossibility results in Section 2.1. 
\end{proof}

Observation 1 shows that every social choice rule can be implemented when the label reveals the entire valuation profile. Observation 2 shows that implementing rules with the desired properties is impossible under the trivial cover $\{V^n\}$, which requires no information about the realized profile. These benchmarks motivate the search for minimal information requirements for implementation. We compare such requirements through the coarseness of domain covers.

Let \(\{C^\kappa\}_{\kappa\in K}\) and \(\{D^\lambda\}_{\lambda\in\Lambda}\) be domain covers. We compare covers through enlargement of their supports. We say that \(\{D^\lambda\}_{\lambda\in\Lambda}\) is strictly coarser than \(\{C^\kappa\}_{\kappa\in K}\), written \(\{C^\kappa\}_{\kappa\in K}\prec\{D^\lambda\}_{\lambda\in\Lambda}\) if for each $\kappa \in K$, there exists $\lambda(\kappa) \in \Lambda$ such that $C^{\kappa} \subset D^{\lambda(\kappa)}$, and $C^{\kappa'} \subsetneq D^{\lambda(\kappa')}$ holds for at least one $\kappa' \in K$. If a label \(\kappa\) is valid at \(v\) under \(\{C^\kappa\}_{\kappa\in K}\), then the corresponding label \(\lambda(\kappa)\) is also valid under \(\{D^\lambda\}_{\lambda\in\Lambda}\), since \(v\in C^\kappa\subseteq D^{\lambda(\kappa)}\). Thus, any information sufficient to identify a valid label under the former cover also suffices under the latter. Enlarging a support leaves more profiles compatible with the corresponding label, so identifying that label requires less precise information about the realized profile. The comparison therefore captures a weakening of informational requirements through enlargement of supports.

For a given collection of social-choice axioms, we consider the class of domain covers under which at least one social choice rule satisfying those axioms is implementable. We seek covers that are minimal in the following sense.\footnote{We allow a domain cover that is not a partition of $V^n$, but rather covers it. This is because there may be cases where a non-partitioned domain cover requires minimal information about individual valuations within a class of domain covers. In fact, we will provide examples of non-partitioned domain covers that require minimal information about individual valuations.}

\begin{definition}
A domain cover $\{D^{\lambda}\}_{\lambda \in \Lambda}$ \emph{requires minimal information within a class of domain covers} if it belongs to that class and there exists no domain cover $\{C^{\kappa}\}_{\kappa \in K}$ in the class such that $\{D^{\lambda}\}_{\lambda \in \Lambda} \prec \{C^{\kappa}\}_{\kappa \in K}$.\footnote{Our definition of strict coarseness can classify a cover with properly nested supports as strictly coarser than itself. For example, a cover \(\{V_0,V^n\}\) is strictly coarser than itself under this definition. However, \(\{V^n\}\) is also strictly coarser than \(\{V_0,V^n\}\), so the redundant cover is excluded by minimality. \rev{More generally, any cover containing two supports $D^\lambda\subsetneq D^{\lambda'}$ is strictly coarser than itself under our definition and therefore cannot be minimal.}}
\end{definition}

A minimal cover identifies an informational requirement that permits implementation and cannot be further weakened under the coarsening relation defined above. Although minimal covers need not be unique, finding one with a simple economic interpretation, such as a cover defined by maximal social surplus, can deepen our understanding of the relevant impossibility. When minimal covers associated with different impossibilities are comparable, their coarseness relation also provides an ordinal comparison of their informational demands.

We also define when label validity can be verified from dominant-strategy play. A mechanism \(M^\lambda\) \emph{verifies the validity of label \(\lambda\)} if for each \(v\in D^\lambda\) and each \(v'\in V^n\setminus D^\lambda\), the sets
$DE(M^\lambda,v)$ and 
$DE(M^\lambda,v')$
are nonempty and disjoint. A family of mechanisms \emph{verifies label validity} if this condition holds for all labels.
This condition allows the information represented by the label to be checked ex post. Provided that individuals play dominant strategies, their observed strategy profile reveals whether the realized valuation profile lies in \(D^\lambda\): the same strategy profile cannot arise both when the label is valid and when it is invalid. The consequences of an invalid label are discussed in Section~\ref{Discussion}.

\begin{definition}
A social choice rule \(F\) is \emph{verifiably \(\{D^\lambda\}_{\lambda\in\Lambda}\)-implementable} if there exists a family of mechanisms that \(\{D^\lambda\}_{\lambda\in\Lambda}\)-implements \(F\) and verifies label validity.
\end{definition}

\section{Efficiency, individual rationality,  and budget feasibility}\label{public}
In this section, we study the information requirements for implementing social choice rules that satisfy \emph{efficiency}, \emph{individual rationality}, and \emph{budget feasibility}.
Observation~\ref{D-equiv}(i) rules out such implementation under the trivial cover $\{V^n\}$.
We consider a domain cover defined by intervals of maximal social surplus.

For each $r\in\mathbb{R}_{++}$, let
\[
D^r \equiv V_0 \cup
\left\{
v\in V^n:
\frac{n-1}{n}r \leq S(v)\leq r
\right\}.
\]
Recall that $n$ is the number of individuals and $V_0$ is the set of valuation profiles whose maximal social surplus is zero.
Clearly, $\{D^{r}\}_{r \in \mathbb{R}_{++}}$ is a domain cover, i.e., $\bigcup_{r \in \mathbb{R}_{++}} D^{r}=V^n$, and we call this family the interval cover.
Given a positive real number $r \in \mathbb{R}_{++}$, $D^r$ is the set of valuation profiles whose maximal social surplus is zero or lies within the closed interval $\big[\frac{n-1}{n}r, r\big]$.
Thus, a label \rev{$r\in\mathbb{R}_{++}$} is valid at $v$ precisely when
$S(v)=0$ or $\frac{n-1}{n}r\leq S(v)\leq r$,
as illustrated in Figure~\ref{fig2}.
A valid label provides bounds on positive maximal social surplus without ruling out zero surplus.
\begin{figure}[]
 \begin{center}
  \includegraphics[width=85mm, viewport={155 470 652 564}]{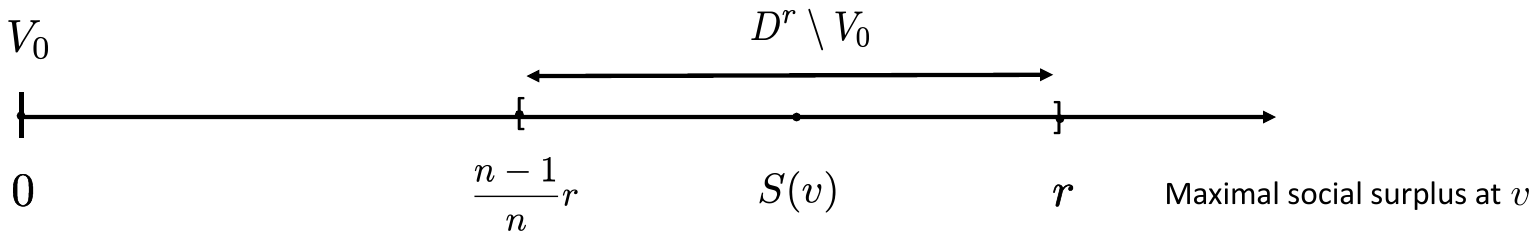}
 \end{center}
 \caption{Surplus value $S(v)$ compatible with a valid label $r$}
 \label{fig2}
\end{figure}
The informational requirement is expressed in terms of the single aggregate $S(v)$.
Supplying a valid label does not require identifying the full valuation profile or an alternative that attains the maximal surplus.
For fixed $r$, the positive-surplus interval has width $r/n$, which increases as $n$ decreases; when $n=2$, it is $[r/2,r]$.

\subsection{Quasi-equal-division rule}
We now define the class of quasi-equal-division rules.

\begin{definition}\label{SR}
A social choice rule $F^*$ is a \emph{quasi-equal-division rule} if for each $v\in V^n$, the following five conditions hold:
\begin{itemize}
\item[(i)] for each $(a,p)\in F^*(v)$,
$a \in \argmax_{b \in A} \, \sum_{j=1}^n v_j(b)$,
\item[(ii)] for each $(a,p)\in F^*(v)$, if $S(v)=0$, then $a=\phi$,
\item[(iii)] for each $(a,p)\in F^*(v)$,
$u_i(a,p;v_i)=u_j(a,p;v_j)$ for all $i,j\in I$,
\item[(iv)] for each $(a,p)\in F^*(v)$,
$0\leq\sum_{j=1}^n u_j(a,p;v_j)\leq S(v)$,
\item[(v)] for each $c\in[0,S(v)]$, there exists $(a,p)\in F^*(v)$ such that
$\sum_{j=1}^n u_j(a,p;v_j)=c$.
\end{itemize}
\end{definition}

Condition (i) states that a quasi-equal-division rule selects alternatives that maximize the social surplus. Condition (ii) specifies the way to break ties, particularly when the maximal social surplus is zero. 
Condition (iii) ensures that payments are arranged so that individuals' final utilities are equal.
Condition (iv) says that the total amount of the final utilities must lie within the closed interval between $0$ and $S(v)$.
Condition (v) requires every total utility level in $[0,S(v)]$ to be attained by some outcome in $F^*(v)$.\footnote{Condition (iv) is imposed to ensure \emph{budget feasibility}. This is because if \rev{$\sum_{j=1}^n u_j\left(a, p; v_j\right) \leq S(v)=\max_{b \in A} \sum_{j=1}^n v_j(b)$}, then condition (i) implies that 
\rev{\[\max_{b \in A} \sum_{j=1}^n v_j(b)+\sum_{j=1}^n p_j =\sum_{j=1}^n u_j\left(a, p; v_j\right)\leq \max_{b \in A} \sum_{j=1}^n v_j(b), \]}
which implies $\sum_{j=1}^n p_j \leq 0$. 
Condition (v) is necessary to guarantee the implementability of $F^*$ under $\{D^{r}\}_{r \in \mathbb{R}_{++}}$. 
Indeed, each valuation profile $v^* \in V^n \setminus V_0$ must lie in multiple valid supports $D^r \in \{D^{r}\}_{r \in \mathbb{R}_{++}}$, and distinct mechanisms for each such $D^r$ must be used to guarantee implementability of $F^*$.
This inevitably forces $F^*(v^*)$ to include multiple outcomes.
For further details, see the explanation of the intuition behind the proof in Section \ref{Intuition1}. 
}
Figure \ref{fig3} illustrates the possible final utilities of an individual under these rules for a given profile $v \in V^n$.
Here, $u_i\left(F^*(v); v_i\right)$ denotes the set of $i$'s final utilities achievable under $F^*(v)$. Conditions (iii)-(v) imply that this set coincides with the interval $\left[0, \frac{1}{n} S(v)\right]$.
\begin{figure}[!htbp]
 \begin{center}
  \includegraphics[width=130mm, viewport={20 300 489 532}]{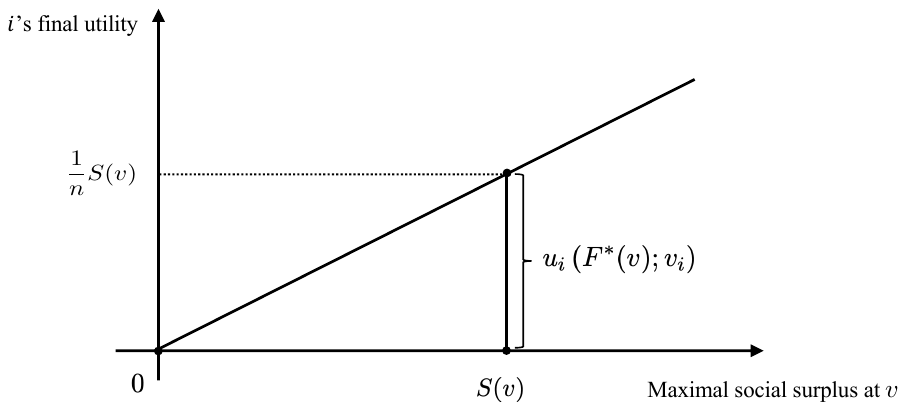}
 \end{center}
 \caption{\rev{Individual $i$'s possible final utilities under a quasi-equal-division rule}}
 \label{fig3}
\end{figure}
\FloatBarrier

\subsection{An implementing family of mechanisms}\label{Sec_mechanisms}
We introduce a family of revelation mechanisms that
$\{D^r\}_{r\in\mathbb{R}_{++}}$-implements a quasi-equal-division rule.
Each individual reports a valuation function in $V$, \rev{which can be identified with $\mathbb{R}^{|A|-1}$,} so the strategy space is finite-dimensional.\footnote{For discussions of the role of strategy spaces in implementation, see \citet{DuttaSenVohra1995} and \citet{SaijoTatamitaniYamato1996}.}

Let $\1_{\phi}:A\to\{0,1\}$ be the indicator of the status quo:
$\1_{\phi}(\phi)=1$ and $\1_{\phi}(a)=0$ for all $a\neq\phi$.
For each $r\in\mathbb{R}_{++}$, define a revelation mechanism
$M^r=(V^n,g^r)$ with outcome function
$g^r=(a^r,p_1^r,\dots,p_n^r)$ satisfying, for each 
$v\in V^n$ and each $i\in I$,
\begin{equation*}\label{mech}
\begin{aligned}
a^r(v)
&\in \argmax_{b\in A}
\left\{
\sum_{j=1}^n v_j(b)
+\frac{n-1}{n}r\,\1_{\phi}(b)
\right\},\\
p_i^r(v)
&=
\sum_{j\neq i}v_j\bigl(a^r(v)\bigr)
+\frac{n-1}{n}r\,\1_{\phi}\bigl(a^r(v)\bigr)
-\frac{n-1}{n}r,
\end{aligned}
\end{equation*}
where ties involving $\phi$ are broken in favor of a non-status-quo maximizer.
Any remaining ties are resolved according to a rule fixed in advance.
Thus, $a^r(v)$ is an efficient non-status-quo alternative when
$S(v)\geq\frac{n-1}{n}r$, and equals $\phi$ otherwise.
For each individual $i$, the transfer is zero if $a^r(v)=\phi$ and equals
$\sum_{j\neq i}v_j(a^r(v))-\frac{n-1}{n}r$ otherwise.

This class of mechanisms belongs to the class of \emph{affine maximizers} introduced by \citet{Roberts1979}, and thus under the revelation mechanism $(V^n, g^r)$, reporting the true valuation function is the unique dominant strategy for each individual of each type $v_i \in V$,
as shown in the proof of Proposition~\ref{prop1}.
Consequently, $DE(M^r,v)=\{v\}$ \rev{for all
$r\in\mathbb{R}_{++}$ and $v\in V^n$},
whether or not $r$ is valid at $v$.
Thus, $M^r$ verifies the validity of label $r$:
profiles inside and outside $D^r$ generate distinct dominant-strategy report profiles.

\begin{proposition}\label{prop1}
The family of revelation mechanisms
$\{M^r\}_{r\in\mathbb{R}_{++}}$
verifiably $\{D^{r}\}_{r \in \mathbb{R}_{++}}$-implements a quasi-equal-division rule $F^*$. 
\end{proposition}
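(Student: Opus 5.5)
The plan is to take as $F^*$ exactly the rule generated by the family of mechanisms:
\[
F^*(v)\equiv\bigcup_{\{r\in\mathbb{R}_{++}:\,v\in D^r\}} g^r\bigl(DE(M^r,v)\bigr).
\]
Once $DE(M^r,v)=\{v\}$ is established for every $r$ and $v$, this reduces to $F^*(v)=\{g^r(v): v\in D^r\}$. Condition (ii) of Definition~\ref{def-D} then holds by construction. Condition (i) holds because every dominant-strategy set is the nonempty singleton $\{v\}$. Verification of label validity is immediate: for $v\in D^r$ and $v'\notin D^r$ we have $v\neq v'$, so the sets $\{v\}$ and $\{v'\}$ are nonempty and disjoint. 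What remains is (a) the dominant-strategy claim and (b) checking that this $F^*$ satisfies conditions (i)--(v) of Definition~\ref{SR}.

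For (a), write $c\equiv\frac{n-1}{n}r$. If others report $w_{-i}$ and $i$ reports $w_i$, then $i$'s true utility is
\[
v_i(a)+\sum_{j\neq i}w_j(a)+c\,\1_\phi(a)-c,
\]
where $a=a^r(w_i,w_{-i})$. The term $-c$ is a constant, so $i$ wants $a$ to maximize $v_i(b)+\sum_{j\neq i}w_j(b)+c\1_\phi(b)$ over $b\in A$. Truthful reporting makes $a^r$ choose such a maximizer, so truth-telling is a dominant strategy; this is the Groves/affine-maximizer argument. For uniqueness, take $w_i\neq v_i$. By the normalization, some $b\neq\phi$ satisfies $w_i(b)\neq v_i(b)$. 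I will choose $w_{-i}$ assigning a very negative total to every alternative other than $b$ and $\phi$, so that only $b$ and $\phi$ compete. I then set $\sum_{j\neq i}w_j(b)$ strictly between $c-w_i(b)$ and $c-v_i(b)$. With this choice, $w_i$ and $v_i$ lead to different selections between $b$ and $\phi$, and the comparison is strict under the true objective, so the deviation is strictly worse. The main technical obstacle is the tie-breaking convention (ties with $\phi$ go to the non-status-quo alternative). Choosing the threshold strictly inside the open interval avoids ties under both reports, so the strict loss holds regardless of the convention.

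For (b), the case $v\in V_0$ is handled first. Here $\sum_j v_j(b)\le0<c$ for all $b\neq\phi$, so $a^r(v)=\phi$, all transfers are zero, and all utilities are zero. Conditions (i)--(v) of Definition~\ref{SR} then hold trivially, with $[0,S(v)]=\{0\}$.

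Now take $S(v)>0$ with $r$ valid, i.e.\ $c\le S(v)\le r$. The tie-breaking rule gives an efficient non-status-quo $a^r(v)$, which yields conditions (i) and (ii). The transfer is $p_i=\sum_{j\neq i}v_j(a)-c$, so $u_i=S(v)-\frac{n-1}{n}r$ for every $i$, which is condition (iii). The total utility is $nS(v)-(n-1)r$. This lies in $[0,S(v)]$ precisely because $\frac{n-1}{n}r\le S(v)\le r$, giving condition (iv). As $r$ ranges over the valid labels $\bigl[S(v),\frac{n}{n-1}S(v)\bigr]$, the total is continuous and decreasing in $r$. At the endpoints it equals $S(v)$ and $0$, so every $c'\in[0,S(v)]$ is attained, giving condition (v). Hence $F^*$ is a quasi-equal-division rule verifiably implemented by $\{M^r\}_{r\in\mathbb{R}_{++}}$.
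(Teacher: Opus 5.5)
Your proof is correct and is essentially the paper's argument: the dominance and uniqueness step reproduces Steps~1--2 of the \emph{if} direction of Theorem~\ref{thm1}(i) (your threshold strictly between $c-w_i(b)$ and $c-v_i(b)$ is the paper's midpoint choice), and the utility formula $S(v)-\frac{n-1}{n}r$, with valid labels sweeping $\bigl[S(v),\frac{n}{n-1}S(v)\bigr]$, is the content of its Steps~3--4. The only difference is organizational: you define $F^*$ as the rule generated by $\{M^r\}_{r\in\mathbb{R}_{++}}$ and verify Definition~\ref{SR} directly, whereas the paper starts from an arbitrary quasi-equal-division rule, implements it with augmented messages in $(V\times A)^n$, and obtains Proposition~\ref{prop1} as the case where each $A^r(v)$ is a singleton.
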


Proposition~\ref{prop1} guarantees that, for every $v\in V^n$
and every label $r$ valid at $v$, the corresponding mechanism
$M^r$ has a unique dominant-strategy outcome $g^r(v)\in F^*(v)$.
This outcome is efficient, individually rational, and budget feasible.
Therefore, every individual is weakly better off than under the
status quo, although \rev{budget balancedness} may not hold.

\subsection{Characterization}
Theorem~\ref{thm1} characterizes the efficient, individually rational, and budget-feasible social choice rules implementable under $\{D^r\}_{r\in\mathbb{R}_{++}}$.
It also establishes the cover's minimality among domain covers permitting implementation of a rule satisfying these three properties.
\begin{theorem} \label{thm1}
The following statements hold:
\begin{itemize}
\item[(i)] a social choice rule $F$ is efficient, individually rational, budget feasible,
and $\{D^{r}\}_{r \in \mathbb{R}_{++}}$-implementable if and only if $F$ is a quasi-equal-division rule,
\item[(ii)] For any domain cover
$\{D^\lambda\}_{\lambda\in\Lambda}$ such that
$\{D^r\}_{r\in\mathbb{R}_{++}}
\prec
\{D^\lambda\}_{\lambda\in\Lambda}$,
no social choice rule is efficient, individually rational,
budget feasible, and
$\{D^\lambda\}_{\lambda\in\Lambda}$-implementable.
\end{itemize}
\end{theorem}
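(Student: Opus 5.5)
The plan is to reduce everything to a Groves-type representation of transfers on each support and then use \emph{individual rationality} and \emph{budget feasibility} at the two endpoints of the surplus interval to pin down the constant term.

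\textbf{Step 1: reduction to a strategy-proof direct rule.} Fix a label $\lambda$ (first $\lambda=r$). Every $v_i\in V$ is the $i$-th coordinate of some profile in the support: complete it with $v_{-i}$ such that $S(v)=0$ or such that $S(v)\in[\frac{n-1}{n}r,r]$. Since $DS_i(M^\lambda,v_i)$ depends only on $v_i$, Definition~\ref{def-D}(i) gives a selection $\sigma_i(v_i)\in DS_i(M^\lambda,v_i)$ for every $v_i\in V$. The composite $f=g^\lambda\circ\sigma$ is then strategy-proof on all of $V^n$, and $f(v)\in F(v)$ whenever $v\in D^\lambda$.

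\textbf{Step 2: taxation principle and Groves form.} By the taxation principle, for fixed $v_{-i}$ the transfer $p_i$ depends on $v_i$ only through the chosen alternative, so $p_i=t_i(a;v_{-i})$. For profiles in the support, \emph{efficiency} lets me vary $v_i$ so that each relevant alternative is the unique efficient one while staying in $D^\lambda$. I expect to shift $v_i$ along the direction that keeps $S(v)$ fixed, or to move within a convex strip of $S$-values. Comparing the incentive constraints between two such $v_i$ should give
\[
t_i(a;v_{-i})-t_i(b;v_{-i})=\sum_{j\neq i}\bigl(v_j(a)-v_j(b)\bigr).
\]
Hence on $D^\lambda$ we have $u_i=S(v)+h_i(v_{-i})$ for some function $h_i$.

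I expect this to be the \textbf{main obstacle}. The section $\{v_i:(v_i,v_{-i})\in D^\lambda\}$ is not convex: it is the union of $\{S=0\}$ and a strip. So the standard Holmstr\"om argument must be run separately on each piece. The pieces are then linked through the common menu $t_i(\cdot;v_{-i})$, which does not depend on which piece $v_i$ lies in. Condition (ii) of Definition~\ref{SR}, namely $a=\phi$ when $S=0$, should follow from the same menu comparison. Tilting $v_i$ slightly into the strip makes $\phi$ strictly optimal at nearby zero-surplus profiles with the same $v_{-i}$. Indifference at the tie then gives utility exactly $h_i=0$ there, and it forces the $\phi$ choice through the fixed tie-breaking of the dominant-strategy outcome; I would check this carefully.

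\textbf{Step 3: pinning the constants for part (i).} Fix $v_{-i}$ that admits completions with $S=r$ and with $S=\frac{n-1}{n}r$; this is possible by moving only $v_i$. \emph{Individual rationality} at $S=\frac{n-1}{n}r$ gives $h_i\ge-\frac{n-1}{n}r$. Summing over $i$ at a profile with $S=r$, \emph{budget feasibility} gives $nr+\sum_i h_i\le r$. Chaining these through profiles that change one coordinate at a time should force $h_i\equiv-\frac{n-1}{n}r$. Then every outcome from label $r$ satisfies $u_i=S(v)-\frac{n-1}{n}r$ for all $i$, which gives equal utilities, so conditions (iii)--(iv) hold.

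For fixed $S(v)=s>0$ the valid labels are exactly $r\in[s,\frac{n}{n-1}s]$. Over this range the total utility $ns-(n-1)r$ sweeps $[0,s]$. Definition~\ref{def-D}(ii) then yields condition (v) and the equality $F=F^*$. The converse direction of part (i) is Proposition~\ref{prop1}, together with the observation that any quasi-equal-division rule is determined up to the choice among efficient alternatives and that $M^r$ generates it.

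\textbf{Part (ii).} If $\{D^r\}\prec\{D^\lambda\}$, some $D^{r'}\subsetneq D^{\lambda}$. So $D^\lambda$ contains a profile with $S=s>0$ and $s\notin[\frac{n-1}{n}r',r']$. Steps 1--2 apply verbatim to $D^\lambda$, since its sections contain those of $D^{r'}$.

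If $s>r'$, budget feasibility at $s$ gives $\sum_i h_i\le-(n-1)s<-(n-1)r'$. This contradicts the individual rationality bound $h_i\ge-\frac{n-1}{n}r'$ obtained at the lower endpoint of $D^{r'}$.

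If $0<s<\frac{n-1}{n}r'$, individual rationality at $s$ gives $h_i\ge-s$, so $\sum_i h_i\ge -ns>-(n-1)r'$. This contradicts budget feasibility at $S=r'$.

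The one-coordinate-at-a-time construction is again needed so that the profiles share $v_{-i}$.
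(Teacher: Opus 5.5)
\textbf{The gap: the ``if'' direction of (i).} Your ``only if'' argument and part (ii) are sound in outline, but the converse does not go through as written. You invoke Proposition~\ref{prop1} together with the claim that $M^r$ generates every quasi-equal-division rule. Proposition~\ref{prop1} only delivers \emph{a} quasi-equal-division rule, and the claim is false once $|A|\geq 3$. The definition allows $F^*(v)$ to contain, at one and the same utility level $c$, outcomes with different efficient alternatives; an example is the rule containing every efficient alternative at every level $c\in[0,S(v)]$.

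At a profile with $S(v)>0$, the valid labels $r\in[S(v),\frac{n}{n-1}S(v)]$ correspond one-to-one to levels $c=nS(v)-(n-1)r$. So every outcome of $F^*(v)$ at level $c$ must come from the single mechanism attached to that one label. In $M^r$, truthful reporting is the unique dominant strategy, so $g^r(DE(M^r,v))$ is a singleton and condition (ii) of Definition~\ref{def-D} fails. Choosing the tie-breaking rule profile by profile rescues only rules that are single-valued at each level.

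The missing idea is a mechanism whose dominant-strategy equilibria at $v$ produce every alternative $F^*$ prescribes at level $c^r(v)=nS(v)-(n-1)r$; the paper denotes this set $A^r(v)$. The paper lets each agent also announce an alternative. It selects the unanimous announcement when that announcement lies in $A^r(v)$, and a fixed default otherwise, and it keeps the transfer formula of $M^r$. Given truthful valuation reports, each agent's utility is then $\max\{0,S(v)-\frac{n-1}{n}r\}$ regardless of the announcement, so every announcement is part of a dominant strategy. The paper then re-proves that only truthful valuation reports are dominant, so no outcome outside $F^*(v)$ can arise.

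\textbf{Your ``only if'' route and the repairs it needs.} You first derive the Groves menu $t_i(a;v_{-i})-t_i(b;v_{-i})=\sum_{j\neq i}(v_j(a)-v_j(b))$ by pairwise comparisons and then pin $h_i$. This is a valid alternative to the paper's Lemma~\ref{lem-imple}, which pins the transfer alternative by alternative. The paper moves only $v_i$ to an endpoint profile at which the currently chosen alternative stays uniquely efficient, which avoids cross-alternative comparisons. Your non-convexity worry is moot in either route, since the direct rule is strategy-proof on all of $V^n$. Your pinning also needs no chaining: at any profile with $S=r$, individual rationality at a lower-endpoint completion gives each $h_j\geq-\frac{n-1}{n}r$, while the sum is at most $-(n-1)r$, so each equals $-\frac{n-1}{n}r$.

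Three repairs are needed:
\begin{itemize}
\item The representation $u_i=S(v)+h_i(v_{-i})$ holds only on $D^\lambda\setminus V_0$, not on all of $D^\lambda$.
\item Your zero-surplus sketch (a tie, ``$h_i=0$'', tie-breaking) does not work. The correct argument comes after pinning the menu. An outcome with $a\neq\phi$ at a zero-surplus profile gives $u_i=\sum_j v_j(a)-\frac{n-1}{n}r=-\frac{n-1}{n}r<0$, contradicting individual rationality. With $a=\phi$, individual rationality and budget feasibility force zero transfers.
\item In part (ii), in the case $0<s<\frac{n-1}{n}r'$, compare directly with the pinned value $h_i\equiv-\frac{n-1}{n}r'$. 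Budget feasibility at a different profile with $S=r'$ evaluates the $h_j$ at different arguments, so it does not give the contradiction you state.
\end{itemize}
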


The mechanisms constructed to prove the ``if'' direction of
Theorem~\ref{thm1}(i) also verify label validity,
yielding the following corollary.

\begin{corollary}\label{coro-p1}
Any quasi-equal-division rule is verifiably $\{D^{r}\}_{r \in \mathbb{R}_{++}}$-implementable.
\end{corollary}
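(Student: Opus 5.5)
The plan is to obtain the corollary by inspecting the family of mechanisms built in the ``if'' direction of Theorem~\ref{thm1}(i), rather than by a new construction. Fix an arbitrary quasi-equal-division rule $F^*$. Proposition~\ref{prop1} covers only one particular such rule. A general $F^*$ may differ from it, for example because at a positive-surplus profile $F^*(v)$ may contain several efficient non-status-quo alternatives at the same utility level. The ``if'' direction must therefore use mechanisms $\widetilde M^r$ that generalize $M^r$. I expect them to have one of two forms. In the first, each individual reports a valuation together with an auxiliary tie-breaking message, and the message selects among surplus maximizers or the status quo without affecting anyone's payoff. In the second, the tie-breaking rule in $a^r$ is chosen label by label. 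In either form the transfers keep the affine-maximizer form of $M^r$ with weight $\frac{n-1}{n}r$ on $\phi$.

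The first step is to show that in every such $\widetilde M^r$, and for every $v_i\in V$ (whether or not $r$ is valid at the realized profile), each dominant strategy of type $v_i$ has valuation component equal to $v_i$. I will rerun the Roberts/Groves argument behind Proposition~\ref{prop1}. Individual $i$'s payoff equals
\[
\sum_{j}v_j(a)+\tfrac{n-1}{n}r\,\1_{\phi}(a)-\tfrac{n-1}{n}r,
\]
evaluated with $v_i$ true and the others' reports. Truthful reporting maximizes this objective against every $s_{-i}$, so it is dominant. Conversely, for any misreport $v_i'\neq v_i$ I will construct an $s_{-i}$ under which the report $v_i'$ selects a strictly worse alternative; this makes $v_i'$ non-dominant. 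Such an $s_{-i}$ exists because the other individuals' valuations range over all of $\mathbb{R}^{|A|-1}$. The auxiliary messages never change $i$'s payoff, so they do not interfere with dominance.

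The second step is to read off the two verification properties. Nonemptiness of $DE(\widetilde M^r,v)$ holds for every $v\in V^n$, including $v\notin D^r$, because truthful reporting together with any auxiliary message is dominant everywhere. Disjointness then follows: if $v\in D^r$ and $v'\notin D^r$, then $v\neq v'$, and every equilibrium at $v$ carries the report profile $v$ while every equilibrium at $v'$ carries $v'$. Hence $\{\widetilde M^r\}_{r\in\mathbb{R}_{++}}$ verifies label validity. Combined with the implementation part of Theorem~\ref{thm1}(i), this shows that $F^*$ is verifiably $\{D^r\}_{r\in\mathbb{R}_{++}}$-implementable.

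The main obstacle is the first step for the general mechanisms. Tie-breaking messages could in principle let a misreport become weakly dominant, for instance if a misreport always coincides in outcome with the truth. I will rule this out by showing that the set of valuation reports that are dominant for $v_i$ is exactly $\{v_i\}$: any $v_i'\neq v_i$ differs from $v_i$ on some alternative $b\neq\phi$, and choosing the others' reports so that $b$ and a competing alternative are nearly tied exposes a strict loss. If the ``if'' construction instead uses pure revelation with label-dependent tie-breaking, this step reduces to the argument already given for Proposition~\ref{prop1}.
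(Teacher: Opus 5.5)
Your proposal is correct and matches the paper's argument. The ``if'' construction in Appendix~\ref{app:interval-i} is your first form: the strategy space is $(V\times A)^n$, reported alternatives only select among the payoff-equivalent elements of $A^r(v)$, and transfers have the affine form with weight $\frac{n-1}{n}r$ on $\phi$. Its Step~2 proves exactly your key fact $DS_i=\{(\bar v_i,a_i):\bar v_i=v_i\}$ for every $r$ and every $v_i\in V$, by choosing the others' reports so that the surplus at the disputed alternative straddles the threshold $\frac{n-1}{n}r$. Appendix~\ref{app:verification} then reads off nonemptiness and disjointness just as you do.

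Your fallback to pure revelation with label-by-label tie-breaking is unnecessary, and it would not work for a general $F^*$. At a positive-surplus profile, each common utility level is generated by exactly one valid label. So if $F^*(v)$ contains several efficient alternatives at the same level, that single label's mechanism must have several dominant-strategy outcomes at $v$, which unique truthful revelation cannot deliver.
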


Theorem~\ref{thm1} and Corollary~\ref{coro-p1} show that the interval cover permits verifiable implementation of every quasi-equal-division rule.
No strictly coarser cover permits implementation of any rule satisfying efficiency, individual rationality, and budget feasibility.
The cover therefore identifies a minimal informational requirement for resolving the impossibility in Observation~\ref{D-equiv}(i), expressed in terms of bounds on maximal social surplus.

\subsection{Intuition behind the proof of Theorem~\ref{thm1}}\label{Intuition1}
We first explain the ``only if'' direction of statement~(i).
Consider any implementation of an efficient, individually rational,
and budget feasible rule $F$ under the interval cover.
Fix a label $r\in\mathbb{R}_{++}$.
A version of the revelation principle adapted to our setting
(see Lemma~\ref{lem_rev0} in Appendix~\ref{app:revelation})
allows us to work with a direct-revelation mechanism $(V^n,g^r)$
in which truthful reporting is a dominant strategy and
$g^r(v)\in F(v)$ whenever $r$ is valid.
The proof shows that, at positive-surplus profiles, each individual's
final utility takes the familiar VCG form:\footnote{See
Lemma~\ref{lem-imple} in Appendix~\ref{app:interval} for the formal argument.}
\begin{equation}
u_i\bigl(g^r(v);v_i\bigr)
=
S(v)+h_i(v_{-i})
\qquad\text{for all }v\in D^r\setminus V_0,
\label{VCG}
\end{equation}
where $h_i$ depends only on the other individuals' valuations.

Holding $v_{-i}$ fixed keeps $h_i(v_{-i})$ constant.
Thus, a one-unit increase in maximal social surplus raises
individual $i$'s final utility by one unit.
In Figure~\ref{fig-proof}, this relationship is represented by
a $45^\circ$ line, whose vertical position is determined by
$h_i(v_{-i})$.
\begin{figure}[]
 \begin{center}
  \includegraphics[width=130mm, viewport={20 275 483 507}]{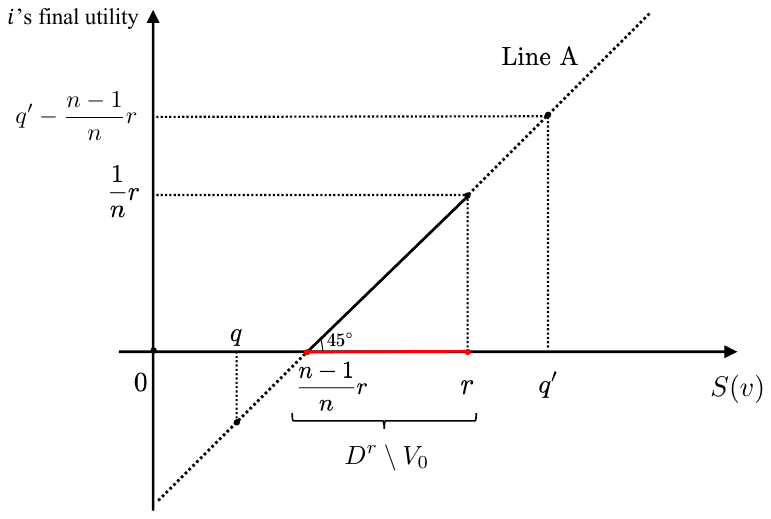}
 \end{center}
\caption{Intuition behind the proof of Theorem~\ref{thm1}}
 \label{fig-proof}
\end{figure}
First, consider the lower endpoint, where
$S(v)=\frac{n-1}{n}r$.
Individual rationality requires each individual's final utility
to be nonnegative.
The $45^\circ$ line therefore cannot lie below Line A,
which passes through $\bigl(\frac{n-1}{n}r,0\bigr)$.

Second, consider the upper endpoint, where $S(v)=r$.
Since the $45^\circ$ line cannot lie below Line A,
each individual's final utility must be at least $\frac{1}{n}r$.
Budget feasibility, however, requires the sum of all individuals'
final utilities to be no greater than \rev{$S(v)=r$}.\footnote{\label{footfeas}
\emph{Efficiency} and \emph{budget feasibility} imply that
\begin{equation}
\sum_{j=1}^n u_j\bigl(g^r(v);v_j\bigr)
=
S(v)+\sum_{j=1}^n p_j^r(v)
\leq S(v),
\label{feas-foot}
\end{equation}
so the sum of the final utilities cannot exceed  the maximal social surplus. 
}
These two requirements leave no room for anyone to receive
more than $\frac{1}{n}r$: everyone must receive exactly $\frac{1}{n}r$.
This pins down the $45^\circ$ line as Line A.
Thus, throughout the positive-surplus part of $D^r$,
\[
u_i\bigl(g^r(v);v_i\bigr)
=
S(v)-\frac{n-1}{n}r.
\]
The same argument applies to every dominant-strategy outcome
under a valid label.

Now fix a valuation profile $v$ with $S(v)>0$ and vary the valid label.
A label $r$ is valid if and only if
$\frac{n-1}{n}r\leq S(v)\leq r$,
so the valid labels are precisely
\[
r\in\left[S(v),\frac{n}{n-1}S(v)\right].
\]
At $r=S(v)$, each individual's final utility is $S(v)/n$.
As $r$ increases to $\frac{n}{n-1}S(v)$,
everyone's final utility falls continuously to zero.
Thus, varying the valid label generates every common utility
level between zero and $S(v)/n$ (see Figure~\ref{fig-proof}).
By Definition~\ref{def-D}, the outcomes across valid labels
exhaust $F(v)$, so this is exactly the utility range under $F(v)$.
The zero-surplus case is handled separately in the proof
and yields the status quo with zero transfers.
Together, these observations show that $F$ is a quasi-equal-division rule.

The same intuition explains statement~(ii).
A strictly coarser cover must contain a support $D^\lambda$
that strictly includes some $D^r$.
Suppose that such a cover permits implementation of a rule
satisfying the three properties.
The original support $D^r$ pins down Line A, and incentive
compatibility requires the same utility formula at any additional
positive-surplus profile.

Since $D^r$ already contains all zero-surplus profiles,
an added profile must have surplus either below
$\frac{n-1}{n}r$ but above zero, or above $r$.
If its surplus is $0<q<\frac{n-1}{n}r$,
Line A assigns everyone a negative utility,
violating \emph{individual rationality}.
If instead its surplus is $q'>r$,
Line A assigns everyone a utility of
\[
q'-\frac{n-1}{n}r>\frac{1}{n}q'.
\]
Total utility then exceeds the available surplus $q'$,
violating \emph{budget feasibility}.\footnote{See Footnote \ref{footfeas} once again. If the maximal social surplus is $q'$ and each individual's final utility exceeds $\frac{1}{n}q'$, then equation \eqref{feas-foot} does not hold, contradicting \emph{budget feasibility}.} 
Thus, extending the support in either direction violates
one of these two requirements.
No strictly coarser cover can therefore permit implementation
of an efficient, individually rational, and budget feasible rule.

\subsection{Utilities and transfers across valid labels}\label{Sec_utilities}
We compare the dominant-strategy outcomes of the mechanisms
$\{M^r\}_{r\in\mathbb{R}_{++}}$ in Section~\ref{Sec_mechanisms}
across labels valid at the same valuation profile.
If $S(v)=0$, every label is valid and every mechanism selects
the status quo with zero transfers.
We therefore focus on a fixed profile $v$ with $S(v)>0$.

If a label $r$ is valid at $v$, then
$\frac{n-1}{n}r\leq S(v)\leq r$, or equivalently,
\begin{equation}
S(v)\leq r\leq\frac{n}{n-1}S(v).
\label{choice-r}
\end{equation}
Under every such label, the outcome $g^r(v)$ satisfies
\emph{efficiency} (see Section~\ref{Sec_mechanisms}),
and each individual's final utility is
\[
u_i\bigl(g^r(v);v_i\bigr)
=
S(v)-\frac{n-1}{n}r.
\]
Thus, at a fixed profile, every individual obtains a higher
final utility under a smaller valid label $r$.

The aggregate net outflow from the individuals is\footnote{
Using \emph{efficiency} and the utility formula above, we obtain
\begin{align*}
-\sum_{i=1}^n p_i^r(v)
&=
\sum_{i=1}^n v_i\bigl(a^r(v)\bigr)
-\sum_{i=1}^n u_i\bigl(g^r(v);v_i\bigr)\\
&=
S(v)-n\left(S(v)-\frac{n-1}{n}r\right)\\
&=
(n-1)\bigl(r-S(v)\bigr).
\end{align*}
}
\[
-\sum_{i=1}^n p_i^r(v)
=
(n-1)\bigl(r-S(v)\bigr).
\]
A smaller valid label therefore reduces this outflow and
leaves more of the surplus with the individuals,
while maintaining \emph{efficiency}.
At the smallest valid label, $r=S(v)$,
each individual obtains utility $S(v)/n$,
and \emph{budget balancedness} holds, i.e., $\sum_{i=1}^n p_i^r(v)=0$.
At the largest valid label, $r=\frac{n}{n-1}S(v)$,
each individual obtains zero utility,
and the aggregate net outflow equals the entire surplus $S(v)$.
Within this family, \emph{budget balancedness} at a positive-surplus
profile therefore requires $r=S(v)$, whereas a label $r$ can be valid
under the interval cover without satisfying this equality.
Section~\ref{public-sec4} studies the information requirements
when \emph{budget balancedness} must hold under every valid label.

\section{Efficiency, individual rationality, and budget balancedness}\label{public-sec4}
In this section, we study the information requirements for
implementing social choice rules that satisfy \emph{efficiency},
\emph{individual rationality}, and \emph{budget balancedness}.
Observation~\ref{D-equiv}(ii) rules out such implementation
under the trivial cover $\{V^n\}$.
We consider a domain cover defined by exact levels of
maximal social surplus.
For each $r\in\mathbb{R}_{++}$, let
\[
C^r\equiv V_0\cup\{v\in V^n:S(v)=r\}.
\]
The family $\{C^r\}_{r\in\mathbb{R}_{++}}$ is a domain cover,
which we call the surplus-level cover.
A label $r$ is valid at $v$, i.e., $v \in C^r$, precisely when $S(v)=0$ or $S(v)=r$.
Thus, at a positive-surplus profile, the unique valid label
is $r=S(v)$, whereas every label is valid at a zero-surplus profile.
\rev{For every $r\in\mathbb{R}_{++}$, we have $C^r\subsetneq D^r$,
where $D^r$ is the support with label $r$ in the interval cover
introduced in Section~\ref{public}.}
Hence,
$\{C^r\}_{r\in\mathbb{R}_{++}}
\prec
\{D^r\}_{r\in\mathbb{R}_{++}}$.
\rev{The surplus-level cover therefore requires more precise information
about positive maximal social surplus than the interval cover.}

\subsection{Equal-division rules and an implementing family of mechanisms}
\label{sec4-mech}
We now define the class of equal-division rules.

\begin{definition}
A social choice rule $F^{**}$ is an \emph{equal-division rule}
if for each $v\in V^n$ and each $(a,p)\in F^{**}(v)$,
the following three conditions hold:
\begin{itemize}
\item[(i)]
$a\in\argmax_{b\in A}\sum_{j=1}^n v_j(b)$,
\item[(ii)]
if $S(v)=0$, then $a=\phi$,
\item[(iii)]
$u_i(a,p;v_i)=S(v)/n$ for all $i\in I$.
\end{itemize}
\end{definition}
Condition (i) imposes \emph{efficiency}, and condition (ii)
selects the status quo when $S(v)=0$.
Condition (iii) divides the entire maximal social surplus
equally among individuals.
Since $S(v)\geq0$, this ensures \emph{individual rationality};
together with condition (i), it also ensures
\emph{budget balancedness}.
Thus, an equal-division rule assigns the highest common utility
level permitted by a quasi-equal-division rule
(see Section~\ref{Sec_utilities}).

We use the same family $\{M^r\}_{r\in\mathbb{R}_{++}}$
introduced in Section~\ref{Sec_mechanisms}.
Truthful reporting remains the unique dominant strategy
for each individual, whether or not $r$ is valid under
the surplus-level cover.
Consequently, $M^r$ also verifies label validity:
profiles inside and outside $C^r$ generate distinct
dominant-strategy report profiles.

\begin{proposition}\label{prop2}
The family of revelation mechanisms
$\{M^r\}_{r\in\mathbb{R}_{++}}$
verifiably $\{C^r\}_{r\in\mathbb{R}_{++}}$-implements
an equal-division rule $F^{**}$.
\end{proposition}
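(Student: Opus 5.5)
The plan is to define $F^{**}$ directly from the mechanisms and then check Definition~\ref{def-D} together with the verification property. Set $F^{**}(v)\equiv\{g^{S(v)}(v)\}$ whenever $S(v)>0$. Set $F^{**}(v)\equiv\{g^r(v):r\in\mathbb{R}_{++}\}$ whenever $S(v)=0$. We will show that the latter set is the singleton consisting of the status quo with zero transfers.

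The first step is dominant-strategy incentive compatibility, which is the same argument as for Proposition~\ref{prop1} and which I would simply reuse. Fix $r$, $i$, and $v_{-i}$. Individual $i$'s utility from reporting $v_i'$ is
\[
v_i\bigl(a^r(v_i',v_{-i})\bigr)+\sum_{j\neq i}v_j\bigl(a^r(v_i',v_{-i})\bigr)+\tfrac{n-1}{n}r\,\1_{\phi}\bigl(a^r(v_i',v_{-i})\bigr)-\tfrac{n-1}{n}r .
\]
This is the affine-maximizer objective evaluated at the chosen alternative, minus a constant. Truthful reporting maximizes it, so truth is dominant. For uniqueness, take any $v_i'\neq v_i$. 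We need some $v_{-i}$ under which the report $v_i'$ strictly loses. I would pick $v_{-i}$ that makes $i$ pivotal between two alternatives where $v_i$ and $v_i'$ rank the adjusted sums differently, with margins small enough that the strict gap is realized. Since $v_{-i}$ ranges over all of $V^{n-1}$, such a profile exists. The conclusion is $DE(M^r,v)=\{v\}$ for every $r$ and every $v$.

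Because $DE(M^r,v)=\{v\}$ is the same set whether $v\in C^r$ or $v\notin C^r$, these sets are nonempty and pairwise disjoint across distinct $v$. This gives verification of label validity immediately.

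Next I compute the outcomes under valid labels.

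\emph{Case $S(v)=r>0$.} The adjusted objective gives $\phi$ the value $\frac{n-1}{n}r<r=S(v)$. A non-status-quo efficient alternative therefore attains the maximum, and $a^r(v)$ is efficient. Each individual's utility is $S(v)-\frac{n-1}{n}r=r/n=S(v)/n$. This delivers conditions (i) and (iii), and condition (ii) is vacuous here.

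\emph{Case $S(v)=0$, any $r$.} Every $b\neq\phi$ has $\sum_j v_j(b)\le 0<\frac{n-1}{n}r$, so $a^r(v)=\phi$. The transfer to each $i$ is $0+\frac{n-1}{n}r-\frac{n-1}{n}r=0$. Hence every individual gets utility $0=S(v)/n$, and $\phi$ is efficient since $S(v)=0$.

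In both cases, condition (i) of Definition~\ref{def-D} holds, because the unique dominant-strategy outcome lies in $F^{**}(v)$. Condition (ii) of Definition~\ref{def-D} holds by construction of $F^{**}$. At a positive-surplus profile, $S(v)$ is the unique valid label. At a zero-surplus profile, every label yields the same outcome $(\phi,0)$.

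Finally, $F^{**}$ satisfies conditions (i)--(iii) of the definition of an equal-division rule, so it is an equal-division rule.

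I expect the main obstacle to be the uniqueness of the dominant strategy. The rest is bookkeeping, and uniqueness is inherited from Proposition~\ref{prop1}'s proof. If that proof must be reproduced, the care needed is in choosing $v_{-i}$ when $v_i'$ and $v_i$ differ only by a constant shift. That situation is impossible here, because the normalization $v_i(\phi)=0$ pins down levels, so any $v_i'\neq v_i$ differs in some pairwise difference involving $\phi$ or two other alternatives. This is exactly where pivotality can be constructed.
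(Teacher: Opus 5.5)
Your proposal is correct and follows essentially the paper's route. The paper obtains Proposition~\ref{prop2} as the singleton-$A(v)$ case of the construction in the ``if'' direction of Theorem~\ref{thm2}(i). It uses the same uniqueness argument (making $i$ pivotal between some $\hat a\neq\phi$ with $v_i(\hat a)\neq v'_i(\hat a)$ and the status quo $\phi$) and reads verification off $DE(M^r,v)=\{v\}$. Your explicit definition of $F^{**}$ as the outcomes of $M^r$ under valid labels simply spells out the paper's implicit choice of $A(v)$ through the mechanism's tie-breaking.
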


Proposition~\ref{prop2} ensures that, once a valid label $r$
is supplied under the surplus-level cover,
simply announcing the corresponding mechanism $(V^n,g^r)$
suffices to implement an outcome prescribed by
the equal-division rule $F^{**}$.
\rev{However, supplying such a label can be demanding:}
when $S(v)>0$, $r$ must equal $S(v)$ exactly.

\subsection{Characterization}
For $|A|\geq3$, Theorem~\ref{thm2} characterizes the social choice
rules satisfying \emph{efficiency}, \emph{individual rationality},
and \emph{budget balancedness} that are implementable under
the surplus-level cover $\{C^{r}\}_{r \in \mathbb{R}_{++}}$.
It also establishes the cover's minimality among domain covers
permitting implementation of a rule satisfying these three properties.

\begin{theorem} \label{thm2}
Suppose that $|A| \geq 3$.
\begin{itemize}
\item[(i)] A social choice rule $F$ is efficient, individually rational, budget balanced, \\
and $\{C^{r}\}_{r \in \mathbb{R}_{++}}$-implementable if and only if $F$ is an equal-division rule,
\item[(ii)]For any domain cover $\{D^\lambda\}_{\lambda\in\Lambda}$ such that
$\{C^r\}_{r\in\mathbb{R}_{++}}
\prec
\{D^\lambda\}_{\lambda\in\Lambda}$, no social choice rule is efficient, individually rational, budget balanced, and $\{D^{\lambda}\}_{\lambda \in \Lambda}$-implementable.
\end{itemize}
\end{theorem}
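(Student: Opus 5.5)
Part (i), ``if'' direction: any equal-division rule is a selection-free object in the sense that every outcome gives each individual $S(v)/n$ with an efficient alternative (status quo if $S(v)=0$). Proposition~\ref{prop2} produces, for each $r$, a mechanism $M^r$ whose unique dominant-strategy outcome at $v\in C^r$ lies in $F^{**}(v)$. To recover an arbitrary equal-division rule $F$, which may contain several outcomes (different efficient alternatives, different transfer vectors giving the same utilities), I will follow the trick of Observation~\ref{D-exist}. I will enlarge the label set to pairs $(r,\sigma)$ with identical supports $C^r$. Here $\sigma$ indexes a selection from $F$, and the tie-breaking in the affine maximizer $M^r$ is modified accordingly. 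Because the utilities are pinned down, the transfers are determined by the chosen alternative. So it suffices to let each label use a different tie-breaking selection $a(v)$ among the maximizers, which preserves dominant strategies for an affine maximizer. At zero-surplus profiles the outcome is forced to $(\phi,0)$, so there is nothing to exhaust.

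Part (i), ``only if'': fix a label $r$ and apply the revelation principle (Lemma~\ref{lem_rev0}) to get a direct mechanism $g^r$ that is truthful in dominant strategies, with $g^r(v)\in F(v)$ on $C^r$. By the VCG-form argument (Lemma~\ref{lem-imple}), on $C^r\setminus V_0$ we have $u_i(g^r(v);v_i)=S(v)+h_i(v_{-i})=r+h_i(v_{-i})$. Budget balancedness with efficiency gives $\sum_i u_i=r$, hence $\sum_i h_i(v_{-i})=-(n-1)r$ on the surplus-$r$ slice. The key step is to show that each $h_i(v_{-i})$ equals $-\frac{n-1}{n}r$. My approach is to use $|A|\ge3$ to vary $v_i$ alone while keeping $S(v)=r$. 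With two non-status-quo alternatives, I can fix $v_{-i}$ and adjust the coordinates of $v_i$ so that the surplus stays at $r$ through a different alternative. I then chain such moves through the profiles: the constraint $\sum_i h_i=$ const, holding under unilateral changes that stay on the slice, forces each $h_i$ to be constant on connected components. I then need to link profiles so that the components exhaust the slice and so that all $h_i$ coincide. To get equality I will use individual rationality: a zero-surplus profile $v_i$ can be attained by individual $i$'s deviation, and the utility must be $\ge0$. Together with the sum identity, symmetry across individuals and the freedom to move each of them pins $h_i=-\frac{n-1}{n}r$, giving $u_i=r/n=S(v)/n$. Combining this with condition (ii) of Definition~\ref{def-D} and the zero-surplus argument from Theorem~\ref{thm1} yields the equal-division rule.

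Part (ii): a strictly coarser cover contains some $D^\lambda\supsetneq C^r$ (after discarding trivial nesting, as in the footnote on minimality). Since $V_0\subset C^r$, the set $D^\lambda$ contains a profile $v'$ with $S(v')=q\ne r$, $q>0$. The restricted mechanism on $C^r$ still forces the formula $u_i=S(v)-\frac{n-1}{n}r$ along the slice. By the VCG representation, which I will extend to $v'$ along a path that changes one individual's valuation at a time within $D^\lambda$, using $|A|\ge3$, the same formula holds at $v'$. Then $\sum_i u_i=nq-(n-1)r\neq q$, which contradicts budget balancedness. The main obstacle is this extension, together with the constancy of $h_i$ in part (i). I will handle both by showing that the relevant sets are connected by unilateral moves. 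If the profile $v'$ cannot be reached this way, I will instead combine monotonicity of $u_i$ in $v_i$ with individual rationality, as in the Theorem~\ref{thm1} argument.
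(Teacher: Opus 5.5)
There are two genuine gaps. The more serious one is in the ``only if'' direction, and part~(ii) inherits it.

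\textbf{The ``only if'' direction.} Lemma~\ref{lem-imple} does not apply to $C^r$. It assumes $D^r\subseteq D^\lambda$ and uses budget feasibility at profiles with surplus in $(\frac{n-1}{n}r,r)$, which $C^r$ does not contain. The slice form $u_i=r+h_i(v_{-i})$ can be re-derived from the menu structure, but it says nothing about the $h_i$. The real problem is your pinning step.

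\begin{itemize}
\item On the slice, $\sum_i h_i(v_{-i})=-(n-1)r$ only tells you that $\sum_{j\neq i}h_j(v_{-j})$ is unchanged when $v_i$ moves. For $n\ge 3$ this identity is satisfied everywhere by non-constant functions such as $h_i(v_{-i})=-\frac{n-1}{n}r+f(v_{i+1})-f(v_{i-1})$ (indices mod $n$). So no connectivity argument on the slice can force each $h_i$ to be constant.
\item Even when the $h_i$ are constants $c_i$, which your chaining does give for $n=2$, individual rationality and the sum identity only yield $-r\le c_i\le 0$. Since $F$ is not assumed anonymous, there is no ``symmetry'' to invoke.
\end{itemize}

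What is missing is control of the single decision function $\hat a^r$ on all of $V^n$, including off $C^r$. The paper obtains this from Roberts' theorem, which is where $|A|\ge 3$ enters:
\begin{itemize}
\item efficiency on $C^r$ forces $\hat a^r(v)\in\argmax_b\{\sum_j v_j(b)+\alpha\1_{\phi}(b)\}$ for a single constant $\alpha$;
\item payoff equivalence gives VCG-type transfers with some $h_i(v_{-i})$;
\item individual rationality and budget balance at zero-surplus profiles where $\phi$ is the unique affine maximizer force $h_i\equiv-\alpha$;
\item budget balance on the slice then gives $\alpha=\frac{n-1}{n}r$.
\end{itemize}

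An elementary substitute would have to establish the same fact directly. Individual rationality and budget balance on $V_0$ price $\phi$ at zero in every individual's menu. Hence $i$'s menu makes $\hat a^r(v)$ leave $\phi$ exactly when $\max_{b\neq\phi}\sum_j v_j(b)$ crosses $-h_i(v_{-i})$. Because all individuals face the same $\hat a^r$, these switching points must coincide, and one must then show, using $|A|\ge 3$, that the common switching point is constant. This cross-individual consistency is what pins $h_i$, and your plan lacks it.

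Once this is in place, part~(ii) needs no path inside $D^\lambda$, which may be just $C^r\cup\{v'\}$. Individual $i$'s menu at $v'_{-i}$ is already fixed by slice profiles. So $u_i(v')=q-\frac{n-1}{n}r$ for every $i$, and $\sum_i p_i(v')=(n-1)(q-r)\neq 0$.

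\textbf{The ``if'' direction.} As written, you prove implementability under a different cover. Definition~\ref{def-D} attaches one mechanism to each label of the given indexed family, and the paper explicitly treats duplicated labels with identical supports as a different cover. Enlarging the index set to pairs $(r,\sigma)$ therefore yields $\{C^{(r,\sigma)}\}$-implementability, not $\{C^r\}_{r\in\mathbb{R}_{++}}$-implementability.

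Under the original cover, a profile with $S(v)>0$ has exactly one valid label, $r=S(v)$. The single mechanism $M^{S(v)}$ must then generate every efficient alternative in $F(v)$, which a revelation mechanism with a unique dominant strategy cannot do.

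The paper resolves this by enlarging the strategy space to $(V\times A)^n$. On the slice, a unanimously reported alternative in $A(v)$ is selected, and otherwise a fixed $b(v)\in A(v)$, with the same affine-maximizer transfers. A truthful valuation report makes $i$'s payoff independent of all alternative reports. Hence every report $(v_i,a_i)$ with truthful $v_i$ is dominant, and the resulting multiple equilibria exhaust $F(v)$.
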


The mechanisms constructed to prove the ``if'' direction of
Theorem~\ref{thm2}(i) also verify label validity.
The construction remains valid when $|A|=2$,
so the following corollary holds for all $|A|\geq2$.

\begin{corollary}\label{coro-pp4}
Any equal-division rule is verifiably
$\{C^r\}_{r\in\mathbb{R}_{++}}$-implementable.
\end{corollary}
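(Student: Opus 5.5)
Corollary~\ref{coro-pp4} states that any equal-division rule is verifiably $\{C^r\}_{r\in\mathbb{R}_{++}}$-implementable, for every $|A|\geq 2$. The plan is to prove it directly with the family $\{M^r\}_{r\in\mathbb{R}_{++}}$ of Section~\ref{Sec_mechanisms}, without appealing to Theorem~\ref{thm2}, since that theorem assumes $|A|\geq3$.

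One point must be settled first. An equal-division rule $F^{**}$ may be multivalued: it can contain several efficient alternatives when ties occur at $S(v)>0$, and several transfer vectors supporting the common utility $S(v)/n$. A single mechanism $M^r$ selects only one outcome. So, as in Observation~\ref{D-exist}, I would enlarge the label set while keeping the same supports. For each efficient selection I would take a copy of $M^r$ whose tie-breaking among efficient non-status-quo alternatives follows that selection. If the definition of $F^{**}$ fixes the alternative, the transfers are then pinned down, because $u_i=S(v)/n$ determines $p_i=S(v)/n-v_i(a)$. Verifiability is unaffected, because every copy is an affine maximizer with truthful reporting as its unique dominant strategy.

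The steps, in order, are as follows.

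\emph{Step 1 (incentives).} For every $r$ and every $v_{-i}$, individual $i$'s utility from report $v_i'$ is
\[
v_i(a^r(v_i',v_{-i}))+\sum_{j\neq i}v_j(a^r(v_i',v_{-i}))+\tfrac{n-1}{n}r\,\1_{\phi}(a^r(v_i',v_{-i}))-\tfrac{n-1}{n}r.
\]
Reporting truthfully maximizes this by the Groves argument. I would then show that truthful reporting is the \emph{unique} dominant strategy. Given a misreport $v_i'\neq v_i$, choose $v_{-i}$ so that the affine-maximizing choice at $(v_i',v_{-i})$ is strictly worse for $i$ than the choice at $(v_i,v_{-i})$. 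This is possible because $V$ is unrestricted: let the other individuals' valuations make $i$ pivotal between two alternatives on which $v_i$ and $v_i'$ disagree. This works even when $|A|=2$, since the two reports differ at $a_1$. Consequently $DE(M^r,v)=\{v\}$ for all $v\in V^n$, which gives nonemptiness and, because $D^r$ and its complement are disjoint, verification of label validity.

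\emph{Step 2 (outcomes under valid labels).} If $S(v)=0$, then $\phi$ attains the affine maximum, with value $\tfrac{n-1}{n}r$, strictly above the value of any other alternative. So $a^r=\phi$, transfers are zero, and $g^r(v)\in F^{**}(v)$. If $S(v)=r>0$, every efficient non-status-quo alternative ties with $\phi$ at value $r$. The tie-breaking rule therefore picks an efficient non-status-quo alternative $a$, and
\[
u_i = S(v)+\sum_{j\neq i}v_j(a)-\sum_{j\neq i}v_j(a)-\tfrac{n-1}{n}r = r/n = S(v)/n.
\]

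\emph{Step 3 (exhaustion, condition (ii) of Definition~\ref{def-D}).} At $S(v)=0$, $F^{**}(v)=\{(\phi,0)\}$, and every label produces exactly this outcome. At $S(v)>0$, the only valid labels are the copies indexed by $r=S(v)$. As the tie-breaking selection varies over these copies, the outcomes cover every efficient alternative in $F^{**}(v)$, each with its forced transfers. So the union over valid labels equals $F^{**}(v)$.

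I expect the uniqueness of the dominant strategy in Step 1 to be the main obstacle. Nonemptiness of $DE$ and dominance of truth-telling are routine. Uniqueness, however, is what verification requires, and it needs a careful construction of $v_{-i}$ that also handles the $\1_{\phi}$ bonus and the fixed tie-breaking rules.
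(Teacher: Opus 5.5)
\textbf{Gap: one mechanism per label.} Your incentive argument and outcome computations match the mechanism behind Proposition~\ref{prop2}, and for single-valued equal-division rules (in particular whenever $|A|=2$) your proof goes through. The gap is how you handle multivalued $F^{**}$.

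The statement concerns the indexed cover $\{C^r\}_{r\in\mathbb{R}_{++}}$. Definition~\ref{def-D} requires a family of mechanisms indexed by the same set, that is, exactly one mechanism per $r$. Passing to labels $(r,\sigma)$ with supports $C^{(r,\sigma)}=C^r$ gives a different cover. The paper's footnote on labels stresses that distinguishing labels with identical supports matters for implementing multivalued rules. Observation~\ref{D-exist}, which you cite as a model, asserts $\{\{v\}\}_{v\in V^n}$-implementability only for single-valued rules, precisely because the general case needs duplicated labels.

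Under the actual cover, at a profile with $S(v)>0$ the unique valid label is $r=S(v)$. Condition (ii) of Definition~\ref{def-D} then requires $g^{S(v)}\bigl(DE(M^{S(v)},v)\bigr)=F^{**}(v)$. Your mechanisms have $DE(M^r,v)=\{v\}$ and so produce a single outcome. This fails whenever $F^{**}(v)$ contains two efficient alternatives, which is possible once $|A|\geq 3$.

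\textbf{Missing idea: several dominant-strategy equilibria in one mechanism.} You need a single mechanism with several dominant-strategy equilibria at $v$ that still verifies label validity. The paper enlarges the message space to $(V\times A)^n$. Let $A(v)$ be the set of alternatives appearing in $F^{**}(v)$. When the reported profile has $S(v)=r$, the outcome uses the unanimously reported alternative if it belongs to $A(v)$, and a fixed default $b(v)\in A(v)$ otherwise. Elsewhere it selects $\phi$ if $S(v)<\frac{n-1}{n}r$ and an efficient alternative otherwise; transfers are those of $M^r$.

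Once $v_i$ is reported truthfully, the alternative report does not affect $i$'s payoff, so $DS_i=\{v_i\}\times A$. Unanimous reports of each $a\in A(v)$ then generate all of $F^{**}(v)$ under the single label $r=S(v)$.

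Verification only requires $DE(M^r,v)$ and $DE(M^r,v')$ to be nonempty and disjoint for $v\in C^r$ and $v'\notin C^r$. The valuation components already separate them. So full uniqueness of the dominant strategy, which you identify as the crux, is not what verification needs. It is in fact incompatible with exhausting a multivalued $F^{**}(v)$ under one label.

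\textbf{Two smaller slips.} First, at $S(v)=r$ the status quo does not tie with the efficient alternatives: its affine value is $\frac{n-1}{n}r<r$, and ties with $\phi$ arise only at $S(v)=\frac{n-1}{n}r$. Second, in Step~2 the chosen alternative must lie in $A(v)$, not merely be efficient, since $F^{**}(v)$ need not contain every efficient alternative.
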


\medskip

The intuition behind the proof of Theorem~\ref{thm2}
parallels that of Theorem~\ref{thm1},
so we only explain the reasoning for statement~(ii).
Suppose that a strictly coarser cover permits implementation
of a rule satisfying \emph{efficiency},
\emph{individual rationality}, and \emph{budget balancedness}.
Then some support $D^\lambda$ properly contains $C^r$
for some $r\in\mathbb{R}_{++}$.
By an argument similar to that used in the proof of
Theorem~\ref{thm1}, one can show that, at every
$v\in D^\lambda\setminus V_0$,
each individual's final utility is given by
\begin{equation}
S(v)-\frac{n-1}{n}r.
\label{int-thm2}
\end{equation}
Line A in Figure~\ref{fig-proof2} depicts this utility
as a function of maximal social surplus:
when $S(v)=q$, each individual obtains
$q-\frac{n-1}{n}r$, while at $S(v)=r$,
each obtains $\frac{1}{n}r$.
\begin{figure}[]
 \begin{center}
  \includegraphics[width=133mm, viewport={17 270 483 507}]{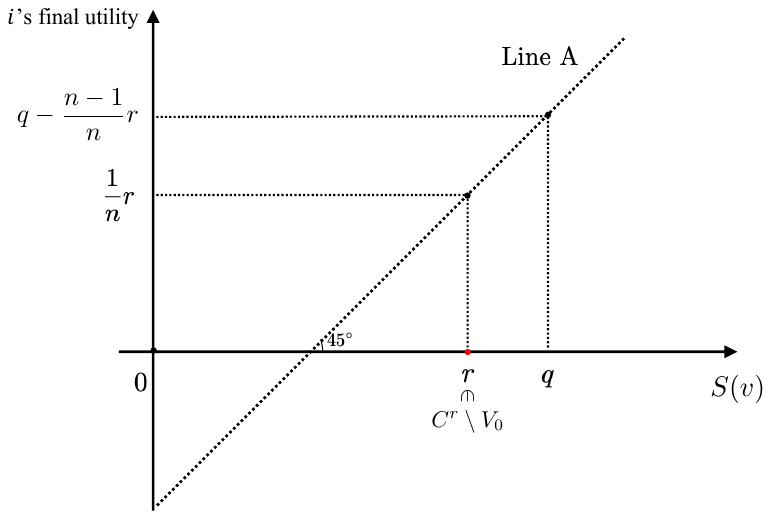}
 \end{center}
\caption{Sketch of the proof of Theorem~\ref{thm2}}
 \label{fig-proof2}
\end{figure}

Since these utilities are equal across individuals,
\emph{efficiency} and \emph{budget balancedness}
require each individual's final utility to equal $S(v)/n$
for all $v\in D^\lambda\setminus V_0$.
Since $C^r$ already contains all zero-surplus profiles
and all profiles with $S(v)=r$,
every additional profile in $D^\lambda\setminus C^r$
has $S(v)=q$ for some $q>0$ with $q\neq r$.
At such a profile, Line A gives each individual utility
$q-\frac{n-1}{n}r$, which differs from the required $q/n$.
Thus, no strictly coarser cover can permit implementation
of a rule satisfying all three properties.

Theorems~\ref{thm1} and \ref{thm2}
identify comparable minimal covers for the two impossibilities
in Observation~\ref{D-equiv}.
The interval cover permits bounds on positive surplus under
\emph{budget feasibility}, whereas the surplus-level cover
uses its exact value under \emph{budget balancedness}.
Because the former is strictly coarser than the latter,
these economically interpretable covers provide \rev{a comparison}
of the informational requirements for resolving
the two impossibilities.

\section{Conclusion}\label{Discussion}\label{conc}

This paper studies information requirements for resolving
implementation impossibilities through domain covers.
In public decision problems with a status quo,
the interval cover permits implementation of quasi-equal-division
rules satisfying \emph{efficiency},
\emph{individual rationality}, and \emph{budget feasibility},
while the surplus-level cover permits implementation of
equal-division rules satisfying \emph{efficiency},
\emph{individual rationality}, and \emph{budget balancedness}.
Both covers are minimal for their respective
requirements, and the interval cover is strictly coarser.
These economically interpretable covers clarify what information
can resolve each impossibility and provide \rev{a comparison}
of informational demands.

In applying the framework in practice, we distinguish
the designer's role from that of the \emph{executor},
the person or institution responsible for supplying a label
and applying the corresponding mechanism.
The designer fixes the domain cover and mechanism family
in advance.
Once a valuation profile $v$ is realized, the executor uses
available information to supply a label and announces
the corresponding mechanism, which individuals then play.
If the label is valid at $v$, every dominant-strategy outcome
belongs to $F(v)$.
\rev{The roles of the designer and the executor} may be performed by different parties or
by the same party at different stages.

For example, a mediator may serve as the executor in
a boundary dispute between two farmers.
If the farmers' combined valuations reflect the net
agricultural gains from resolving the dispute, then
land records and yield information may help establish bounds
on the maximal social surplus.
Sufficiently precise bounds could support a valid label under
the interval cover without identifying each farmer's valuation
function or the surplus-maximizing  boundary (efficient alternative).\footnote{
A similar interpretation applies to task allocation,
where the status quo involves no task being undertaken
and valuations reflect rewards net of task costs.}

The executor may nevertheless supply an invalid label.
For the family $\{M^r\}_{r\in\mathbb{R}_{++}}$
in Section~\ref{Sec_mechanisms}, truthful reporting remains
the unique dominant strategy throughout $V^n$,
regardless of label validity.
Provided that individuals play dominant strategies,
their observed reports therefore allow validity to be checked.

Recall that the same family $\{M^r\}_{r\in\mathbb{R}_{++}}$
is used under both the interval and surplus-level covers.
For the mechanism $M^r$, the consequences of an invalid label
depend on the realized surplus.
When $S(v)=0$, every label is valid under both covers,
and the outcome is the status quo with zero transfers.
If $0<S(v)<\frac{n-1}{n}r$, the label is invalid under
both covers, and the mechanism selects the status quo
with zero transfers: \emph{efficiency} fails,
while \emph{budget balancedness} holds.
If $\frac{n-1}{n}r\leq S(v)<r$, the label is valid under
the interval cover but invalid under the surplus-level cover.
The outcome satisfies \emph{efficiency} and
\emph{budget feasibility}, but not \emph{budget balancedness}.
Finally, if $S(v)>r$, the label is invalid under both covers.
The outcome satisfies \emph{efficiency}, but positive aggregate
transfers violate \emph{budget feasibility} and
\emph{budget balancedness}.
In all cases, \emph{individual rationality} is preserved,
so no individual is worse off than under the status quo.
Since a budget deficit arises precisely when $S(v)>r$,
an executor concerned with avoiding a deficit has a reason
to favor a larger label \rev{to reduce the risk of choosing $r<S(v)$}
when the surplus is uncertain.

A smaller label, however, leaves more surplus with
the individuals.
As shown in Section~\ref{Sec_utilities}, among valid labels
under the interval cover, a smaller $r$ gives every
individual a higher final utility.
When the surplus is uncertain, this creates a trade-off
between choosing a label large enough to avoid a deficit
and preserving individuals' utilities.
An excessively large label may even lead to the selection
of the status quo despite positive surplus,
violating \emph{efficiency}.

The present analysis takes label provision as given.
A natural extension would model the executor's objectives
alongside the costs and reliability of information acquisition.
This would allow us to study incentives to acquire information
and supply valid labels.

\clearpage
\appendix
\numberwithin{equation}{section}
\begin{center}
\Large{\bf Appendix: Omitted proofs}
\end{center}

\smallskip

\section{Proof preliminaries: an analogy to the revelation principle}
\label{app:revelation}
Before proceeding to the proofs, we provide an analogy to the \emph{revelation principle} (e.g., \citealt{DasguptaHammondMaskin1979}; \citealt{Myerson1979}) for $\{D^{\lambda}\}_{\lambda \in \Lambda}$-implementation.
For each $D^{\lambda} \subset V^n$, let $V_i(D^{\lambda})$ be the set of valuation functions possible for individual $i$ on $D^{\lambda}$, that is,
$V_i(D^{\lambda}) \equiv \left\{ v_i \in V: \exists v_{-i} \in V^{n-1}, \, (v_i, v_{-i}) \in D^{\lambda}\right\}$.
For instance, we have $V_i(D^{r})=V$ for each $r \in \mathbb{R}_{++}$.

Lemma~\ref{lem_rev0} states that if a family of mechanisms
$\{D^\lambda\}_{\lambda\in\Lambda}$-implements a social choice rule $F$,
then one can construct corresponding direct-revelation mechanisms
with strategy spaces $\prod_{j=1}^n V_j(D^\lambda)$.
Truthful reporting is a dominant strategy, and the truthful outcome
belongs to $F(v)$ whenever the label is valid.
Moreover, condition (i) allows the construction to preserve any
specified dominant-strategy outcome of the original mechanism
at a given profile.

\begin{lemma}\label{lem_rev0}
Suppose that there exists a family of mechanisms $\left\{ \left(S_I^{\lambda},g^{\lambda} \right) \right\}_{\lambda \in \Lambda}$ that $\{D^{\lambda}\}_{\lambda \in \Lambda}$-implements a social choice rule $F$. 
Fix any $\lambda \in \Lambda$, $\hat v \in D^{\lambda}$, and $\hat x \in g^{\lambda} \left(DE(\left(S_I^{\lambda},g^{\lambda} \right), \hat v)\right)$.
Then, there exists a mechanism $\left(\prod_{j =1}^n V_j (D^{\lambda}), \hat g^{\lambda} \right)$ that satisfies the following three conditions:
\begin{itemize}
\item[(i)] $\hat g^{\lambda}(\hat v)=\hat x$,
\item[(ii)] for each $v \in D^{\lambda}$, $\hat g^{\lambda}(v) \in F(v)$, 
\item[(iii)] for each $i \in I$, each $v \in \prod_{j =1}^n V_j (D^{\lambda})$, and each $v'_i \in V_i(D^{\lambda})$,
\[ u_i\left(\hat g^{\lambda}(v_i, v_{-i}); v_i\right) \geq u_i\left(\hat g^{\lambda}(v'_i, v_{-i}); v_i\right).\] 
\end{itemize}
\end{lemma}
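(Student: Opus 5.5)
The plan is the standard revelation-principle construction, adapted to a single label $\lambda$ and with one modification so that condition (i) holds at $\hat v$. Fix $\lambda$ and write $M^\lambda=(S_I^\lambda,g^\lambda)$. Since $\hat x\in g^\lambda(DE(M^\lambda,\hat v))$, there is a profile $\hat s\in DE(M^\lambda,\hat v)$ with $g^\lambda(\hat s)=\hat x$. For each $i$ and each $v_i\in V_i(D^\lambda)$ we pick a dominant strategy $\sigma_i(v_i)\in DS_i(M^\lambda,v_i)$.

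Existence of such a strategy comes from Definition~\ref{def-D}(i). By definition of $V_i(D^\lambda)$ there is $v_{-i}$ with $(v_i,v_{-i})\in D^\lambda$. Condition (i) then gives $DE(M^\lambda,(v_i,v_{-i}))\neq\emptyset$, and since $DE$ is a product, $DS_i(M^\lambda,v_i)\neq\emptyset$. We choose the selection so that $\sigma_i(\hat v_i)=\hat s_i$ for every $i$. This is legitimate because $\hat s_i\in DS_i(M^\lambda,\hat v_i)$, and it is the only place the choice matters.

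Define $\hat g^\lambda(v)\equiv g^\lambda(\sigma_1(v_1),\dots,\sigma_n(v_n))$ for $v\in\prod_j V_j(D^\lambda)$. The three conditions then follow directly.
\begin{itemize}
\item[(i)] By the choice of $\sigma$, $\hat g^\lambda(\hat v)=g^\lambda(\hat s)=\hat x$.
\item[(ii)] For $v\in D^\lambda$, we have $\sigma_j(v_j)\in DS_j(M^\lambda,v_j)$ for every $j$, so $\sigma(v)\in DE(M^\lambda,v)$. Definition~\ref{def-D}(i) then gives $\hat g^\lambda(v)\in g^\lambda(DE(M^\lambda,v))\subset F(v)$.
\item[(iii)] Fix $i$, $v$, and $v_i'\in V_i(D^\lambda)$. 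Because $\sigma_i(v_i)$ is dominant at $(M^\lambda,v_i)$, it is at least as good against the particular opponent profile $\sigma_{-i}(v_{-i})$ as the deviation $\sigma_i(v_i')$. That inequality is exactly $u_i(\hat g^\lambda(v_i,v_{-i});v_i)\ge u_i(\hat g^\lambda(v_i',v_{-i});v_i)$.
\end{itemize}

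The main subtlety is the domain of the direct mechanism. Because $D^\lambda$ need not be a product set, $\prod_j V_j(D^\lambda)$ contains profiles outside $D^\lambda$. At those profiles we claim nothing about $F$, only the incentive inequality (iii), and (iii) needs only that each type in $V_i(D^\lambda)$ has some dominant strategy.

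That nonemptiness is why the argument passes through the projection $V_i(D^\lambda)$ and the product structure of $DE$ rather than requiring $v\in D^\lambda$. No axiom of choice issue arises beyond selecting $\sigma_i$, which we take as given.
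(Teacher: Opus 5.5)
Your proposal is correct and follows the paper's proof. Like the paper, you select a dominant strategy $\sigma_i(v_i)$ for every $v_i\in V_i(D^\lambda)$, pin it to $\hat s_i$ at $\hat v_i$, and compose with $g^\lambda$ to obtain (i)--(iii); you additionally justify two steps the paper only asserts: that $DS_i(M^\lambda,v_i)\neq\emptyset$ (via a completing $v_{-i}$ and the product structure of $DE$), and that the choice at $\hat v_i$ is legitimate.
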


\begin{proof}
By the definition of $\{D^{\lambda}\}_{\lambda \in \Lambda}$-implementation, for each $i \in I$ and each $v_i \in V_i (D^{\lambda})$, 
there exists a dominant strategy $s_i(v_i)$ at $\left( \left(S_I^{\lambda},g^{\lambda} \right), v_i\right)$.
In addition, $s_i(\cdot)$'s can be constructed so that $g^{\lambda} \left( \bigl(s_i(\hat v_i)\bigr)_{i \in I} \right)=\hat x$ holds particularly when $v=\hat v$.
Let $\hat{g}^{\lambda}:\prod_{j =1}^n V_j (D^{\lambda}) \to X$ be such that $\hat{g}^{\lambda}(v) = g^{\lambda}\left(\bigl( \, s_i(v_i)\bigr)_{i \in I} \right)$ for all $v \in \prod_{j =1}^n V_j (D^{\lambda})$.
Then, conditions (i) and (ii) are clearly satisfied by the definitions of $s_i(\cdot)$'s and $\{D^{\lambda}\}_{\lambda \in \Lambda}$-implementation.
Let us show that $\left(\prod_{j =1}^n V_j (D^{\lambda}), \hat{g}^{\lambda} \right)$ satisfies condition (iii).
Take any $i \in I$, any $v \in \prod_{j =1}^n V_j (D^{\lambda})$, and any $v'_i \in V_i(D^{\lambda})$.
Since $s_i(v_i)$ is a dominant strategy at $\left((S_I^{\lambda},g^{\lambda}),v_i\right)$,
\begin{align*}
u_i\left(\hat{g}^{\lambda}(v_i, v_{-i}); v_i\right)&=u_i\left(g^{\lambda}\bigl(s_i(v_i), (s_{j}(v_{j}))_{j \neq i}\bigr); v_i\right) \\
&\geq u_i\left(g^{\lambda}\bigl(s_i(v'_i), (s_{j}(v_{j}))_{j \neq i}\bigr); v_i\right)=u_i\left(\hat{g}^{\lambda}(v'_i, v_{-i}); v_i\right).
\end{align*}
Therefore, condition (iii) is also satisfied.
\end{proof}

\section{Proof of Proposition~\ref{prop1} and Theorem~\ref{thm1}}
\label{app:interval}
Recall that \(V_0=\{v\in V^n:S(v)=0\}\) is the set of all zero-surplus profiles. We begin with the following lemma.

\begin{lemma}\label{lem-imple}
Consider a domain cover $\{D^\lambda\}_{\lambda\in\Lambda}$.
Suppose that there exists a family of mechanisms $\left\{ \left(S_I^{\lambda},g^{\lambda} \right) \right\}_{\lambda \in \Lambda}$ that $\{D^{\lambda}\}_{\lambda \in \Lambda}$-implements an efficient, individually rational, and budget feasible social choice rule $F$. 
Fix any $r\in\mathbb{R}_{++}$ and $\lambda\in\Lambda$
such that $D^r\subseteq D^\lambda$.
Let $\left(V^n, \hat g^{\lambda} \right)$ be a mechanism that satisfies conditions (ii) and (iii) in Lemma \ref{lem_rev0},
 where
we write $\hat g^{\lambda}(v)=(\hat a^{\lambda}(v), \hat p^{\lambda}(v))$ for all $v \in V^n$.\footnote{Note that $ D^{r} \subseteq D^{\lambda}$ implies that $V_i (D^{\lambda})=V$ for all $i \in I$, and thus each individual's strategy space can be written as $V$.}
Then,
\[ \hat p^{\lambda}_i(v)= \sum_{j \neq i} v_j\big(\hat a^{\lambda}(v)\big) -\frac{n-1}{n}r  \ \ \ \ \  \text{for all $i \in I$, $v \in D^{\lambda} \setminus V_0$}. \]
\end{lemma}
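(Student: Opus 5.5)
The plan is to show that, at every positive-surplus profile in $D^\lambda$, each individual's utility under $\hat g^{\lambda}$ has Groves form $u_i=S(v)+h_i(v_{-i})$, and then to pin down $h_i\equiv-\frac{n-1}{n}r$ using \emph{individual rationality} at the lower end of the interval and \emph{budget feasibility} at the upper end.

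Since $D^r\subseteq D^\lambda$, we have $V_i(D^\lambda)=V$, so $(V^n,\hat g^\lambda)$ is a revelation mechanism on the full report space, and truthful reporting is dominant there by condition (iii) of Lemma~\ref{lem_rev0}. By condition (ii), the truthful outcome lies in $F(v)$ for $v\in D^\lambda$. Hence on $D^\lambda\setminus V_0$ it is efficient, individually rational and budget feasible. Moreover, since $S(v)>0$, the selected alternative $\hat a^\lambda(v)$ differs from $\phi$.

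\textbf{Step 1 (Groves form).} Fix $i$ and $v_{-i}$, and let $R(v_{-i})=\{v_i:(v_i,v_{-i})\in D^\lambda\setminus V_0\}$. I would show that $\hat p^\lambda_i(v_i,v_{-i})-\sum_{j\neq i}v_j(\hat a^\lambda(v_i,v_{-i}))$ is constant on $R(v_{-i})$; call this constant $h_i(v_{-i})$. The argument has two parts.
\begin{itemize}
\item[(a)] If two types in $R(v_{-i})$ lead to the same alternative, applying truthful dominance in both directions forces equal transfers.
\item[(b)] For two types leading to alternatives $a\neq b$ (both non-status-quo), I would build an intermediate type $v_i''$ that makes $a$ and $b$ tied and strictly best. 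Comparing with the appropriate report yields $\hat p_i(a)-\hat p_i(b)=\sum_{j\neq i}v_j(a)-\sum_{j\neq i}v_j(b)$. The construction raises $v_i''(a)$ and $v_i''(b)$ and lowers the other entries, while keeping the total surplus at a fixed level in $[\frac{n-1}{n}r,r]$, so that $(v_i'',v_{-i})\in D^r\subseteq D^\lambda$.
\end{itemize}
Parts (a) and (b) also cover profiles in $D^\lambda\setminus D^r$: any type in $R(v_{-i})$ can be linked through such intermediate types lying in the $D^r$-slice. That slice is nonempty for every $v_{-i}$, because the value $v_i(a)$ can be chosen freely so that $S$ hits any target level. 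Efficiency then converts the transfer formula into $u_i(\hat g^\lambda(v);v_i)=S(v)+h_i(v_{-i})$ on $D^\lambda\setminus V_0$.

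\textbf{Step 2 (pinning $h_i$).} For any $v_{-i}$, choose $v_i$ so that $S(v)=\frac{n-1}{n}r$. \emph{Individual rationality} at this profile gives $h_i(v_{-i})\geq-\frac{n-1}{n}r$ for every $i$ and every $v_{-i}$.

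Next, take any $v$ with $S(v)=r$. \emph{Efficiency} and \emph{budget feasibility} give $\sum_j u_j\leq r$, that is, $nr+\sum_j h_j(v_{-j})\leq r$. Combined with the lower bounds $h_j(v_{-j})\geq-\frac{n-1}{n}r$, which sum to at least $-(n-1)r$, this forces $h_j(v_{-j})=-\frac{n-1}{n}r$ for every $j$.

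Finally, every $v_{-i}$ admits a $v_i$ with $S(v)=r$. Hence $h_i\equiv-\frac{n-1}{n}r$, which is exactly the claimed transfer formula.

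The main obstacle is Step 1. The set $D^\lambda\setminus V_0$ is not a product set, so the usual Groves/Holmstr\"om argument must be carried out on each slice $R(v_{-i})$, with care taken that the tie-making intermediate types stay inside $D^r$. I would first try to route every comparison through types whose surplus lies strictly inside $(\frac{n-1}{n}r,r)$, which gives room to adjust two coordinates at once.
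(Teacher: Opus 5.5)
Your argument is correct. Your part (a) is the paper's Step~1, and your Step~2 matches the paper's Steps~2--3: \emph{individual rationality} at the lower end of the interval and \emph{budget feasibility} at the upper end. The genuine difference is your part (b).

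The paper never shows that the constant is independent of the chosen alternative. It defines $h_i(v)\equiv\hat p^\lambda_i(v)-\sum_{j\neq i}v_j(\hat a^\lambda(v))+\frac{n-1}{n}r$ as a function of the whole profile. For a given $v\in D^\lambda\setminus V_0$ with $a=\hat a^\lambda(v)$, it replaces $v_i$ by a type under which $a$ is the only alternative with positive surplus. That surplus is set to $\frac{n-1}{n}r+\varepsilon$ for the IR test and to $r$ for the budget test. Efficiency keeps $a$ selected, so (a) alone carries $h_i(v)$ over to a test profile in $D^r\setminus V_0$. In the budget step the paper sums the bounds $h_j\geq0$ at the test profile itself. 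This is legitimate because those bounds were first proved at every profile of $D^\lambda\setminus V_0$.

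The paper's route is therefore more elementary. It uses only one-coordinate deviations that preserve the chosen alternative, with no comparison of transfers across alternatives and no tie-making. Your route yields more: the full Groves representation $u_i(\hat g^\lambda(v);v_i)=S(v)+h_i(v_{-i})$ on $D^\lambda\setminus V_0$, which is the VCG form stated in the intuition section. After that, pinning down $h_i$ is a clean computation. The cost is that (b) must be done carefully.

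As written, (b) is slightly off. At an exactly tied type the mechanism picks one of $a,b$, and truthful dominance there gives only one inequality. Write $t_i(c)$ for $i$'s transfer when $c$ is chosen given $v_{-i}$; this is well defined by (a). If $a$ is picked at the tie, you get only $t_i(a)-t_i(b)\geq\sum_{j\neq i}v_j(a)-\sum_{j\neq i}v_j(b)$.

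For the reverse inequality, use types in the $D^r$-slice where $b$ leads $a$ by $\varepsilon$ and all other alternatives are pushed far down. This gives $t_i(a)-t_i(b)\leq\sum_{j\neq i}v_j(a)-\sum_{j\neq i}v_j(b)+\varepsilon$, and you then let $\varepsilon\to0$. Every non-status-quo alternative is the unique efficient one at some type in that slice, so this yields a single constant $h_i(v_{-i})$ for all of them. Part (a) then extends it to all of $R(v_{-i})$.
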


\begin{proof}
{\bf Step 1:} Let us show that for each $i \in I$, each $v \in D^{\lambda}$, and each $v'_i \in V$ with $(v'_i, v_{-i}) \in D^{\lambda}$,
if $\hat a^{\lambda}(v_i, v_{-i}  )=\hat a^{\lambda}(v'_i, v_{-i} )$, then $\hat p^{\lambda}_i(v_i, v_{-i}  )=\hat p^{\lambda}_i(v'_i, v_{-i} )$.
Suppose, by contradiction, that there exist $i \in I$, $v \in D^{\lambda}$ and $v'_i \in V$ with $(v'_i, v_{-i}) \in D^{\lambda}$ such that $\hat a^{\lambda}(v_i, v_{-i}  )=\hat a^{\lambda}(v'_i, v_{-i} )$ and $\hat p^{\lambda}_i(v_i, v_{-i}  ) \neq \hat p^{\lambda}_i(v'_i, v_{-i} )$.
Without loss of generality, we consider the case with $\hat p^{\lambda}_i(v_i, v_{-i}  ) < \hat p^{\lambda}_i(v'_i, v_{-i} )$.
Then, we have
\begin{align*}
u_i\left(\hat g^{\lambda}(v_i, v_{-i}); v_i\right)=v_i\big(\hat a^{\lambda}(v_i, v_{-i}  )\big)+\hat p^{\lambda}_i(v_i, v_{-i} ) &< v_i\big(\hat a^{\lambda}(v'_i, v_{-i}  )\big)+\hat p^{\lambda}_i(v'_i, v_{-i}  )\\
&=u_i\left(\hat g^{\lambda}(v'_i, v_{-i}); v_i\right),
\end{align*}
contradicting condition (iii) of Lemma \ref{lem_rev0}.
 
\medskip

{\bf Step 2.}
Let us show that for each $i \in I$ and each $v \in D^{\lambda} \setminus V_0$,
\[\hat p^{\lambda}_i(v) \geq \sum_{j \neq i}  v_j\big(\hat a^{\lambda}(v)\big) - \frac{n-1}{n}r .\]
For each $i \in I$ and each $v \in D^{\lambda} \setminus V_0$, let 
\[h_i(v) \equiv \hat p^{\lambda}_i(v)- \sum_{j \neq i} v_j\big(\hat a^{\lambda}(v)\big) + \frac{n-1}{n}r .\] 
It suffices to show that $h_i(v) \geq 0$ for all $i \in I$ and $v \in D^{\lambda} \setminus V_0$.

Take any $i \in I$.
Suppose, by contradiction, that there exists  $v \in D^{\lambda} \setminus V_0$ such that 
$h_i(v) < 0$.
Since $v \notin V_0$ and $\hat g^{\lambda}(v) \in  F(v)$, by \emph{efficiency} of $F$, we have $\hat a^{\lambda}(v) \neq \phi$. 
Let $\varepsilon>0$ be such that 
$\varepsilon = \min \Big\{-\frac{1}{2}h_i(v), \ \frac{1}{n}r \Big\}$. 
Define $v'_i \in V$ by
\[ v'_i(b)= 
\begin{cases}
0  &\text{if $b=\phi$}, \\
-\sum_{j \neq i} v_j\big(\hat a^{\lambda}(v)\big)+ \frac{n-1}{n}r +\varepsilon \ \ &\text{if $b=\hat a^{\lambda}(v)$}, \\
 -\sum_{j \neq i} v_j(b)  \ \ &\text{otherwise}.
\end{cases}
\]
Then, 
\[ v'_i(b)+\sum_{j \neq i} v_j(b)= 
\begin{cases}
 \frac{n-1}{n}r +\varepsilon \ \ &\text{if $b = \hat a^{\lambda}(v)$}, \\
0  \ \ &\text{otherwise}.
\end{cases}
\]
Note that by the definition of $\varepsilon$, $(v'_i, v_{-i}) \in D^{r}  \setminus V_0  \subset D^{\lambda}  \setminus V_0 $.
Moreover,  by \emph{efficiency} of $F$,
$\hat a^{\lambda}(v'_i, v_{-i})=\hat a^{\lambda}(v)$.
Therefore, by $v \in D^{\lambda} \setminus V_0$ and Step 1, $\hat p^{\lambda}_i(v'_i, v_{-i}  )=\hat p^{\lambda}_i(v_i, v_{-i}  )$, which implies that $h_i(v'_i, v_{-i}  )=h_i(v_i, v_{-i}  )<0$.
Then, by the definitions of $h_i$ and $v'_i$, 
\begin{align*}
v'_i\big(\hat a^{\lambda}(v'_i, v_{-i}  )\big)+\hat p^{\lambda}_i(v'_i, v_{-i}  )
&= v'_i\big(\hat a^{\lambda}(v'_i, v_{-i}  )\big)+\sum_{j \neq i} v_j\big(\hat a^{\lambda}(v'_i, v_{-i}  ) \big) - \frac{n-1}{n}r  + h_i(v'_i, v_{-i}  ) \\
&= h_i(v'_i, v_{-i}  )+\varepsilon \leq h_i(v'_i, v_{-i}  )-\frac{1}{2} h_i(v_i, v_{-i}  )
= \frac{1}{2} h_i(v_i, v_{-i}  )
<0.
\end{align*}
Since $\hat g^{\lambda}(v'_i, v_{-i}) =\left(\hat a^{\lambda}(v'_i, v_{-i}), \hat p^{\lambda}(v'_i, v_{-i} )\right) \in F(v'_i, v_{-i})$ by condition (ii) of Lemma \ref{lem_rev0}, the above inequality is a contradiction to \emph{individual rationality} of $F$.

\medskip

{\bf Step 3.}
Let us show that for each $i \in I$ and each $v \in D^{\lambda} \setminus V_0$,
\[ \hat p^{\lambda}_i(v)= \sum_{j \neq i}  v_j\big(\hat a^{\lambda}(v)\big)- \frac{n-1}{n}r . \]
Take any $i \in I$.
By Step 2, it suffices to show that for each $v \in D^{\lambda}  \setminus V_0$,
\[h_i(v) \equiv \hat p^{\lambda}_i(v)- \sum_{j \neq i} v_j\big(\hat a^{\lambda}(v)\big) + \frac{n-1}{n}r \leq 0.\] 
Suppose, by contradiction, that there exists  $v \in D^{\lambda} \setminus V_0$ such that $h_i(v)>0$.

Since $v \notin V_0$ and condition (ii) of Lemma \ref{lem_rev0} is satisfied, \emph{efficiency} of $F$ implies that $\hat a^{\lambda}(v) \neq \phi$.
Let $v'_i \in V$ be such that 
\[ v'_i(b)= 
\begin{cases}
0  \ \ &\text{if $b=\phi$}, \\
  -\sum_{j \neq i}  v_j\big(\hat a^{\lambda}(v)\big) +  r  \ \ &\text{if $b=\hat a^{\lambda}(v)$}, \\
-\sum_{j \neq i} v_j(b)  \ \ &\text{otherwise}.
\end{cases}
\]
Then, by the definition of $v'_i$, $(v'_i, v_{-i}) \in D^{r}  \setminus V_0  \subset D^{\lambda}  \setminus V_0 $.
Moreover, by \emph{efficiency} of $F$ and condition (ii) of Lemma \ref{lem_rev0}, we obtain
$\hat a^{\lambda}(v'_i, v_{-i})=\hat a^{\lambda}(v)$.
Therefore, by Step 1, $\hat p^{\lambda}_i(v'_i, v_{-i}  )=\hat p^{\lambda}_i(v_i, v_{-i}  )$, and thus
$h_i(v'_i, v_{-i}  )=h_i(v_i, v_{-i}  )>0$.
Note that by Step 2, $h_j(v'_i,v_{-i}) \geq 0$ holds for all $j \in I$, as $(v'_i, v_{-i}) \in D^{\lambda}  \setminus V_0 $.
Then, by the definitions of $h_i$ and $v'_i$, \begin{align*}
\sum_{j=1}^n \hat p^{\lambda}_j(v'_i, v_{-i})
&=(n-1)\Big( v'_i\big(\hat a^{\lambda}(v'_i, v_{-i}) \big)+\sum_{j \neq i} v_j\big(\hat a^{\lambda}(v'_i, v_{-i})\big) \Big)-  (n-1)r +\sum_{j = 1}^n h_j(v'_i, v_{-i}  ) \\
&\geq (n-1)\Big( v'_i\big(\hat a^{\lambda}(v'_i, v_{-i}) \big)+\sum_{j \neq i} v_j\big(\hat a^{\lambda}(v'_i, v_{-i})\big) \Big)-   (n-1) r + h_i(v'_i, v_{-i}  ) \\
&=(n-1)\Big( v'_i\big(\hat a^{\lambda}(v) \big)+\sum_{j \neq i} v_j\big(\hat a^{\lambda}(v)\big) \Big)- (n-1) r + h_i(v'_i, v_{-i}  ) \\
&=(n-1)r - (n-1) r + h_i(v'_i, v_{-i}  ) =h_i(v'_i, v_{-i}  )>0.
\end{align*}
Since $\left(\hat a^{\lambda}(v'_i, v_{-i}), \hat p^{\lambda}(v'_i, v_{-i} )\right) \in F(v'_i, v_{-i})$, this contradicts \emph{budget feasibility} of $F$.
\end{proof}

\subsection{Proof of Proposition~\ref{prop1} and statement~(i)
of Theorem~\ref{thm1}}
\label{app:interval-i}
In the proof of ``if" direction of statement (i) of Theorem \ref{thm1}, we need to define a family of mechanisms whose strategy spaces consist of pairs $(v_i, a_i) \in V \times A$ of valuations and alternatives to $\{D^{r}\}_{r \in \mathbb{R}_{++}}$-implement a quasi-equal-division rule $F^*$. This adjustment is purely technical and is required when $F^*$ selects multiple efficient alternatives for a single profile. The mechanisms in Proposition~\ref{prop1} correspond to the case
where $A^r(v)$, defined below, is a singleton whenever
$v\in D^r\setminus V_0$.
In this case, reports of alternatives can be omitted.
Hence, we restrict our presentation here to the proof of Theorem \ref{thm1}.

\medskip

{\bf \hypertarget{If_part}{``If" direction}.}
Let $F^*$ be a quasi-equal-division rule.
Obviously, $F^*$ satisfies \emph{efficiency} and \emph{individual rationality}.
Let us show that it also satisfies \emph{budget feasibility} and \emph{$\{D^{r}\}_{r \in \mathbb{R}_{++}}$-implementability}.

\medskip

{\bf \emph{Budget feasibility}.}
Take any $v \in V^n$ and any $(a, p) \in F^*(v)$.
By condition (iv) of the definition of quasi-equal-division rules, we have $0 \leq \sum_{j=1}^n u_j\left(a, p; v_j\right) \leq S(v)$.
Then, by \emph{efficiency} of $F^*$, 
\[\sum_{j=1}^n u_j\left(a, p; v_j\right) =\sum_{j=1}^n\left(v_j(a)+p_j\right) = S(v)+\sum_{j=1}^n p_j\leq S(v),\]
and thus $\sum_{j=1}^n p_j \leq 0$.

\bigskip

{\bf \emph{$\boldsymbol{\{D^r\}_{r \in \mathbb{R}_{++}}}$-implementability}.}
For each $v \in V^n$ and each $r \in \mathbb{R}_{++}$, define
\begin{align*}
&c^r(v)= n\cdot S(v)-(n-1) r,\\
&A^r(v)= 
\left\{a \in A: \exists p \in \mathbb{R}^n  \ \text{s.t.} \  (a, p) \in F^*(v) \ \text{and} \ \sum_{j=1}^n u_j\left(a, p; v_j\right)=c^r(v) \right\},
\end{align*}
where $A^r(v)$ is the set of alternatives prescribed by $F^*(v)$
at the total final utility level $c^r(v)$.
Note that if $v \in D^r \setminus V_0$, then $0 \leq c^r(v) \leq S(v)$, and thus 
by condition (v) of the definition of the quasi-equal-division rule, $A^r(v) \neq \emptyset$.
In addition, by \emph{efficiency} of $F^*$, all alternatives in $A^r(v)$ are efficient, i.e., $A^r(v) \subset \argmax_{b \in A} \sum_{j=1}^n v_j( b)$.
For each $r \in \mathbb{R}_{++}$ and each $v \in D^r \setminus V_0$, fix some $b^r(v) \in A^r(v)$.

For each $r \in \mathbb{R}_{++}$, define a mechanism $\left((V \times A)^n, g^{r}\right)$ such that each individual reports a pair consisting of a valuation function $v_i$ and an alternative $a_i$, and for each strategy profile $s=(v_j, a_j)_{j=1}^n \in (V \times A)^n,$ the following four conditions are satisfied:
\begin{itemize}
\item[(i)] If $S(v) <\frac{n-1}{n} r$, then $a^r\left(s\right)=\phi$,
\item[(ii)] If $\frac{n-1}{n} r \leq S(v) \leq r$,
 then $a^r\left(s\right)$ is determined as follows:
\begin{itemize}
\item If all individuals' reports on alternatives are identical, i.e., $a_i \equiv a$ for all $i \in I$, and $a \in A^r(v)$, then $a^r\left(s\right)=a$,
\item Otherwise, $a^r\left(s\right)=b^r(v)$,
\end{itemize}
\item[(iii)] If $S(v) >r $, then $a^r\left(s\right) \in \argmax_{b \in A} \sum_{j=1}^n v_j( b)$,
\item[(iv)] For each $i \in I$, 
$p^r_i\left( s \right)=\sum_{j \neq i} v_j\left( a^r\left( s \right) \right)+\frac{n-1}{n}r  \cdot \1_{\phi}\left( a^r\left( s \right)\right) -\frac{n-1}{n}r$. 
\end{itemize}

Let us show that $F^*$ is $\{D^r\}_{r \in \mathbb{R}_{++}}$-implemented by  $\{\left((V \times A)^n, g^{r}\right) \}_{r \in \mathbb{R}_{++}}$.

\medskip

{\bf Step 1.}
Let us show that for each $r \in \mathbb{R}_{++}$, each $i \in I$, and each $(v_i, a_i) \in V \times A$, reporting $(v_i, a_i)$ is a dominant strategy at $\bigl(\left((V \times A)^n, g^{r}\right), v_i\bigr)$.
In particular, once $i$ truthfully reports $v_i$, the choice of $a_i$ has no impact on $i$'s payoff.
Take any $r \in \mathbb{R}_{++}$, any $i \in I$, and any $s=(v_{j}, a_{j})_{j=1}^n \in (V \times A)^n$. 
It suffices to show that for each $s'_i=(v'_i, a'_i) \in V \times A$, 
\begin{align}
u_i\left(g^{r} \left(s_i, s_{-i}\right); v_i\right) \geq u_i\left(g^{r} \left(s'_i, s_{-i}\right); v_i\right). \label{v-de} 
\end{align}
As noted earlier, if $a \in A^r(v)$, then $a \in \argmax_{b \in A} \sum_{j=1}^n v_j( b)$.
Thus, by the definition of $g^{r}$, if $S(v) \geq \frac{n-1}{n} r$, then $a^r\left(s\right) \in \argmax_{b \in A} \sum_{j=1}^n v_j( b)$,
and if $S(v) < \frac{n-1}{n} r$, then $a^r\left(s\right)=\phi$.
Therefore, condition (iv) of the definition of $g^{r}$ yields that 
\begin{align*}
u_i\left(g^{r} \left(s_i, s_{-i}\right); v_i\right) =v_i\left(a^{r}(s) \right)+p^{r}_i(s) 
&=\max\Big\{0, \, S(v)-\frac{n-1}{n}r    \Big\},
\end{align*}
and
\begin{align*}
u_i\left(g^{r} \left(s'_i, s_{-i}\right); v_i\right)&=v_i\left(a^{r}(s'_i, s_{-i}) \right)+p^{r}_i(s'_i, s_{-i}) \\
&=\sum_{j =1}^n v_j\left(  a^{r}(s'_i, s_{-i}) \right)+\frac{n-1}{n}r  \cdot \1_{\phi}\left(  a^{r}(s'_i, s_{-i})\right) -\frac{n-1}{n}r\\
&\leq \max\Big\{0, \, \sum_{j=1}^n v_j\left( a^{r}(s'_i, s_{-i})\right)-\frac{n-1}{n}r    \Big\}.
\end{align*}
Since
$S(v) \geq \sum_{j=1}^n v_j\left( a^{r}(s'_i, s_{-i})\right)$,
we obtain equation \eqref{v-de}.

\medskip

{\bf Step 2.}
Let us show that for each $r \in \mathbb{R}_{++}$, each $i \in I$, and each $v_i \in V$, 
\begin{equation*}
DS_i\big(\left((V \times A)^n, g^{r}\right), v_i\big)=\left\{(v'_i, a_i) \in V \times A: v'_i=v_i\right\}.\label{dom-unique}
\end{equation*}
Take any $r \in \mathbb{R}_{++}$, any $i \in I$, and any $v_i \in V$.
By Step 1, it suffices to show that the left hand side of the above equation is a subset of the right hand side of that.
Suppose, by contradiction, that there exists $(v'_i, a_i) \in V \times A$ with $v'_i \neq v_i$ such that reporting $(v'_i, a_i)$ is a dominant strategy at $\big(\left((V \times A)^n, g^{r}\right), v_i\big)$.
Then, there exists $\hat a \in A$ such that $v'_i(\hat a) \neq v_i(\hat a)$.
Since $v'_i(\phi)=v_i(\phi)=0$, it must be $\hat a \neq \phi$.
Let $v_{-i} \in V^{n-1}$ be such that 
\begin{align*}
&\sum_{j \neq i} v_j(b) = 
\begin{cases}
0 \ \ &\text{if $b=\phi$}, \\
 - \frac{1}{2}  \bigl( v_i(\hat a) + v'_i(\hat a)\bigr) + \frac{n-1}{n}r  \ \ &\text{if $b=\hat a$}, \\
-  \max \{v_i(b), v'_i(b) \}  \ \ &\text{otherwise}. \\
\end{cases}
\end{align*}

For notational convenience, we write $s_i=(v_i, a_i)$, $s'_i=(v'_i, a_i)$, and $s_{-i}=(v_{j}, \hat a)_{j \neq i}$.
We consider two cases.

\medskip

{\bf Case 1.}  
Consider the case with $v_i(\hat a) > v'_i(\hat a)$.
By construction, $\hat a$ is the unique maximizer of
$\sum_{j=1}^n v_j(b)$ and
$S(v)>\frac{n-1}{n}r>S(v'_i,v_{-i})$.
Hence, by conditions (i)--(iii) of the definition of $g^r$,
$a^r(s_i,s_{-i})=\hat a$ and
$a^r(s'_i,s_{-i})=\phi$,
regardless of the reported alternatives.
Then,
\begin{align*}
u_i\left(g^{r} \left(s_i, s_{-i}\right); v_i\right)&=v_i\left(a^{r}(s_i, s_{-i})\right)+p_i^{r}(s_i, s_{-i}) \\
&=v_i(\hat a)+\sum_{j \neq i} v_j(\hat a) -\frac{n-1}{n}r 
=\frac{1}{2} v_i(\hat a) -\frac{1}{2} v'_i(\hat a) +\frac{n-1}{n}r  -\frac{n-1}{n}r \\
&=\frac{1}{2} \bigl(v_i(\hat a)- v'_i(\hat a) \bigr)\\
&> 0 =v_i\left(a^{r}(s'_i, s_{-i})\right)+p_i^{r}(s'_i, s_{-i})=u_i\left(g^{r} \left(s'_i, s_{-i}\right); v_i\right),
\end{align*}
contradicting the assumption that reporting $s'_i=(v'_i, a_i)$ is a dominant strategy at $\big(\left((V \times A)^n, g^{r}\right), v_i\big)$.

\medskip

{\bf Case 2.} 
Consider the case with $v_i(\hat a) < v'_i(\hat a)$.
By an argument similar to that in case 1, we have $a^{r}(s_i, s_{-i})= \phi$ and $a^{r}(s'_i, s_{-i})= \hat a$.
Then,
\begin{align*}
u_i\left(g^{r} \left(s'_i, s_{-i}\right); v_i\right)&=v_i\left(a^{r}(s'_i, s_{-i})\right)+p_i^{r}(s'_i, s_{-i}) \\
&=v_i(\hat a)+\sum_{j \neq i} v_j(\hat a) -\frac{n-1}{n}r 
=\frac{1}{2} v_i(\hat a) -\frac{1}{2} v'_i(\hat a) +\frac{n-1}{n}r  -\frac{n-1}{n}r \\
&=\frac{1}{2} \bigl(v_i(\hat a)- v'_i(\hat a) \bigr)\\
&< 0 =v_i\left(a^{r}(s_i, s_{-i})\right)+p_i^{r}(s_i, s_{-i})=u_i\left(g^{r} \left(s_i, s_{-i}\right); v_i\right),
\end{align*}
contradicting the assumption that reporting $s'_i=(v'_i, a_i)$ is a dominant strategy at $\big(\left((V \times A)^n, g^{r}\right), v_i\big)$.

\bigskip

{\bf Step 3.} 
Let us show that for each $r \in \mathbb{R}_{++}$ and each $v \in D^r$, $DE\big(\left((V \times A)^n, g^{r}\right), v\big) \neq \emptyset$ and $g^{r} \left(DE\big(\left((V \times A)^n, g^{r}\right), v\big)\right) \subset F^*(v)$.
By Step 2, the first part is clear.
Take any $r \in \mathbb{R}_{++}$, any $v \in D^r$, and any $s=(v_j, a_j)_{j=1}^n \in DE\big(\left((V \times A)^n, g^{r}\right), v\big)$.
Note that by Step 2, each $j \in I$ must report his/her true valuation $v_j$ as a dominant strategy.

First, consider the case with $S(v)=0$.
By the definition of $g^r$,  $a^r(s)=\phi$ and $p_i^r(s)=0$ for all $i \in I$.
Moreover, by condition (v) of the definition of $F^*$, there exists $(a, p) \in F^*(v)$ such that $u_i\left(a, p; v_i\right)=0$ for all $i \in I$.
Since $S(v)=0$, conditions (ii)-(iv) of the definition of $F^*$ imply that $a=\phi$ and $p_i=0$ for all $i \in I$.
Therefore, $a^r(s)=\phi=a$ and $p_i^r(s)=0=p_i$ for all $i \in I$, and thus
$g^r(s) \in F^*(v)$.

Second, consider the case with $v \in D^r \setminus V_0$, i.e., $\frac{n-1}{n} r \leq S(v) \leq r$.
By the definition of $g^r$, it follows that
\begin{align*}
&a^r(s) \in  A^r(v),\\
&p^r_i\left(s\right)=\sum_{j \neq i} v_j\left( a^r\left(s\right) \right)-\frac{n-1}{n}r  \ \text{for all $i \in I$.}
\end{align*} 
On the other hand, by the definition of $A^r(v)$, there exists $p \in \mathbb{R}^n$ such that $(a^r(s), p) \in F^*(v)$ and
$\sum_{j=1}^n u_j\left(a^r(s), p; v_j\right)=c^r(v)$.
This relation and condition (iii) of the definition of $F^*$ imply that
\begin{align}
u_i\left(a^r(s), p; v_i\right)=v_i\left(a^r(s)\right)+p_i=\frac{1}{n} c^r(v)=S(v)-\frac{n-1}{n} r  \ \ \text{for all $i \in I$,} \label{eq-util}
\end{align}
where the last equation follows from the definition of $c^r(v)$.
Note that by \emph{efficiency} of $F^*$,  we have
$S(v)= \sum_{j=1}^n v_j\left(a^r(s)\right)$.
Then, equation \eqref{eq-util} can be reformulated as 
\begin{align*}
p_i=\sum_{j \neq i} v_j\left(a^r(s)\right)-\frac{n-1}{n} r= p^r_i\left(s\right) \ \ \text{for all $i \in I$,}
\end{align*}
which in turn implies that $g^r(s)=\left(a^r(s), p^r(s)\right) = \left(a^r(s), p \right)\in F^*(v)$.

\bigskip

{\bf Step 4.} 
Let us show that for each $v \in V^n$, 
\[\bigcup_{\{r \in \mathbb{R}_{++}:  v \in D^{r}\}} g^{r} \left(DE\big(\left((V \times A)^n, g^{r}\right), v\big)\right)  = F^*(v).\]
Since the relation $\subset$ follows from Step 3, we show the opposite side.
Take any  $v \in V^n$ and $(a, p) \in F^*(v)$.
If $v \in V_0$, then by the definition of $F^*$, $a=\phi$ and $p_i=0$ for all $i \in I$, and by the definition of $g^r$ and Step 2, $(a, p) \in g^{r}\left(DE\big(\left((V \times A)^n, g^{r}\right), v\big)\right)$ for all $r \in \mathbb{R}_{++}$.

Suppose that $v \notin V_0$.
By the definition of $F^*$, $0 \leq \sum_{j=1}^n u_j\left(a, p; v_j\right) \leq S(v)$.
Let 
\[r=\left(\frac{n}{n-1}\right) \left(S(v)- \frac{1}{n}\sum_{j=1}^n u_j\left(a, p; v_j\right)\right)>0,\]
where the inequality follows from $S(v)>0$ as $v \notin V_0$.
Then,
\begin{align*}
\frac{n-1}{n}r
&=S(v)- \frac{1}{n}\sum_{j=1}^n u_j\left(a, p; v_j\right)\notag \\
&\leq S(v)=\left(\frac{n}{n-1}\right) \left(S(v)-\frac{1}{n} S(v) \right)\notag \\
&\leq \left(\frac{n}{n-1}\right) \left(S(v)- \frac{1}{n}\sum_{j=1}^n u_j\left(a, p; v_j\right) \right) = r,\label{v-in-Dr}
\end{align*}
and hence $v \in D^r$.
Moreover, by its own definition,
\[c^r(v)= n\cdot S(v)-(n-1)r=\sum_{j=1}^n u_j\left(a, p; v_j\right) .\]
Since $(a, p) \in F^*(v)$ holds, the above relation and the definition of $A^r(v)$ imply that $a \in A^r(v)$.
Therefore, by $\frac{n-1}{n}r \leq S(v) \leq r$ and the definition of $g^r$, we have $a^r\big((v_j, a)_{j=1}^n \big)=a$, where we will write $s \equiv (v_j, a)_{j=1}^n$.

Note that by $v \notin V_0$ and \emph{efficiency} of $F^*$, $a^r(s)=a \neq \phi$ and $S(v)=\sum_{j=1}^n v_j(a)$ must hold.
In addition, by condition (iii) of the definition of $F^*$, we have $u_i\left(a, p; v_i\right)=\frac{1}{n} \sum_{j=1}^n u_j\left(a, p; v_j\right) $ for all $i \in I$.
Therefore, for each $i \in I$, we obtain
\[u_i\left(a, p; v_i\right)=v_i(a)+p_i=\frac{1}{n} \sum_{j=1}^n u_j\left(a, p; v_j\right) = S(v)-\frac{n-1}{n}r,\]
where the last equation follows from the definition of $r$.
Thus, it follows that 
\begin{align*}
p_i&=S(v)-v_i(a)-\frac{n-1}{n}r
= \sum_{j=1}^n v_j(a)-v_i(a)-\frac{n-1}{n}r\\
&=\sum_{j \neq i} v_j(a)-\frac{n-1}{n}r
=\sum_{j \neq i} v_j\left(a^r(s)\right)-\frac{n-1}{n}r\\
&=\sum_{j \neq i} v_j\left(a^r(s)\right)+\frac{n-1}{n}r  \cdot \1_{\phi}\left(a^r(s)\right) -\frac{n-1}{n}r =p^r_i\left(s\right).
\end{align*}
By Step 2, $s=(v_j, a)_{j=1}^n \in DE\big(\left((V \times A)^n, g^{r}\right), v\big)$, and hence this and $v \in D^r$ together imply that
\begin{align*}
(a, p)=\left(a^r\left(s\right), p^r\left(s\right)\right) 
&\in g^r\left(DE\big(\left((V \times A)^n, g^{r}\right), v\big)\right) \\
&\subset  \bigcup_{\{r \in \mathbb{R}_{++}:  v \in D^{r}\}} g^{r} \left(DE\big(\left((V \times A)^n, g^{r}\right), v\big)\right).
\end{align*}
Thus, $F^*$ is $\{D^r\}_{r \in \mathbb{R}_{++}}$-implemented by the family of mechanisms $\{\left((V \times A)^n, g^{r}\right) \}_{r \in \mathbb{R}_{++}}$.
\hfill
$\square$

\bigskip

{\bf ``Only if" direction.}
\begin{claim}\label{charact1}
Let $F$ be a social choice rule that is efficient, individually rational, budget feasible, and $\{D^{r}\}_{r \in \mathbb{R}_{++}}$-implementable.
Fix any $ D^{r} \in \{D^{r}\}_{r \in \mathbb{R}_{++}}$.
Consider a mechanism $\left(V^n, \hat g^{r} \right)$ that satisfies conditions (ii) and (iii) in Lemma \ref{lem_rev0},
that is,
\begin{itemize}
\item for each $v \in D^{r}$, $\hat g^{r}(v) \in F(v)$; and
\item for each $i \in I$, each $v \in V^n$, and each $v'_i \in V$,
\[ u_i\left(\hat g^{r}(v_i, v_{-i}); v_i\right) \geq u_i\left(\hat g^{r}(v'_i, v_{-i}); v_i\right).\] 
\end{itemize}
Then, for each $i \in I$ and each $v \in D^{r}$, 
\[ \hat p^r_i(v)=\sum_{j \neq i} v_j\big(\hat a^r(v)\big)+\frac{n-1}{n}r   \cdot \1_{\phi}\big(\hat a^r(v)\big) -\frac{n-1}{n}r . \]
\end{claim}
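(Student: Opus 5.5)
The claim has two parts, according to whether $v\in D^r$ has positive or zero surplus. When $S(v)>0$, efficiency forces $\hat a^r(v)\neq\phi$. The indicator term then vanishes, so the formula reduces to $\hat p^r_i(v)=\sum_{j\neq i}v_j(\hat a^r(v))-\frac{n-1}{n}r$. This is exactly Lemma~\ref{lem-imple} applied with $\lambda=r$, since trivially $D^r\subseteq D^r$. So the positive-surplus part is immediate. What remains is $v\in V_0$, where the target reads $\hat p^r_i(v)=\sum_{j\neq i}v_j(\hat a^r(v))$ if $\hat a^r(v)\neq\phi$, and $\hat p^r_i(v)=0$ if $\hat a^r(v)=\phi$.

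For $v\in V_0$, I first establish that every individual's utility is zero. Individual rationality gives $u_i(\hat g^r(v);v_i)\geq 0$ for each $i$. Efficiency and budget feasibility give
\[
\sum_j u_j=S(v)+\sum_j\hat p^r_j\leq 0 .
\]
Together these force $u_i(\hat g^r(v);v_i)=0$ for all $i$, hence $\hat p^r_i(v)=-v_i(\hat a^r(v))$.

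The claimed formula therefore reduces to showing $-v_i(\hat a^r(v))=\sum_{j\neq i}v_j(\hat a^r(v))$, i.e. $\sum_j v_j(\hat a^r(v))=0$. This holds by efficiency, because $\sum_j v_j(\hat a^r(v))=S(v)=0$. When $\hat a^r(v)=\phi$, the formula gives $\hat p^r_i(v)=0+\frac{n-1}{n}r-\frac{n-1}{n}r=0$, consistent with $-v_i(\phi)=0$. When $\hat a^r(v)\neq\phi$ is another zero-surplus maximizer, the formula gives $\sum_{j\neq i}v_j(\hat a^r(v))=-v_i(\hat a^r(v))$, again consistent. So the zero-surplus case needs no incentive argument at all, only IR, efficiency and budget feasibility; the transfer formula is not asserted to be unique beyond what these force.

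I expect no genuine obstacle. The only points needing care are:
\begin{itemize}
\item verifying the hypotheses of Lemma~\ref{lem-imple}: the claim assumes $F$ is $\{D^r\}$-implementable, and the mechanism $(V^n,\hat g^r)$ satisfies conditions (ii) and (iii) of Lemma~\ref{lem_rev0}, which is exactly what that lemma needs;
\item the indicator bookkeeping when $\hat a^r(v)$ is $\phi$ versus a non-status-quo tie.
\end{itemize}
The substantive work, the two-sided pinning via individual rationality at the lower endpoint and budget feasibility at the upper endpoint, is already done in Steps 2--3 of Lemma~\ref{lem-imple}.
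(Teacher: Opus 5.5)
Your positive-surplus case (via Lemma~\ref{lem-imple}) is fine, and so is your derivation that all utilities vanish on $V_0$, which matches the paper's Step~1. The gap is in the zero-surplus case with $\hat a^r(v)=a\neq\phi$. There $\1_{\phi}(a)=0$ removes only the middle term, so the claimed formula reads $\hat p^r_i(v)=\sum_{j\neq i}v_j(a)-\frac{n-1}{n}r$, exactly as you wrote in the positive-surplus case, not $\sum_{j\neq i}v_j(a)$. You have shown that individual rationality, efficiency and budget feasibility force $\hat p^r_i(v)=-v_i(a)=\sum_{j\neq i}v_j(a)$. Since $r>0$, the formula would therefore \emph{fail} in this subcase.

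So the claim contains the nontrivial assertion that $\hat a^r(v)=\phi$ at every zero-surplus profile, and your remark that this case ``needs no incentive argument'' is wrong. The axioms applied pointwise at $v$ cannot rule out $a$: the outcome $(a,p)$ with $p_i=-v_i(a)$ is efficient, individually rational and budget feasible at $v$.

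The missing step is the paper's Step~2. Suppose $v\in V_0$ and $\hat a^r(v)=a\neq\phi$. Define $v'_i$ by $v'_i(\phi)=0$, $v'_i(a)=r-\sum_{j\neq i}v_j(a)$, and $v'_i(b)=-\sum_{j\neq i}v_j(b)$ otherwise. Then $S(v'_i,v_{-i})=r$ with $a$ the unique maximizer, so $(v'_i,v_{-i})\in D^r\setminus V_0$, and efficiency gives $\hat a^r(v'_i,v_{-i})=a$. Truth-telling for type $v'_i$ against the misreport $v_i$ yields $\hat p^r_i(v'_i,v_{-i})\geq \hat p^r_i(v)=\sum_{j\neq i}v_j(a)$. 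But Lemma~\ref{lem-imple} gives $\hat p^r_i(v'_i,v_{-i})=\sum_{j\neq i}v_j(a)-\frac{n-1}{n}r$, a contradiction. Hence $\hat a^r(v)=\phi$ and $\hat p^r_i(v)=0$, which agrees with the formula. This step is load-bearing: the paper reuses it to establish condition~(ii) of the quasi-equal-division definition.
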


\begin{proof}
{\bf Step 1.}
Let us show that  for each $i \in I$ and each $v \in V_0$,  $\hat p^r_i(v)=\sum_{j \neq i} v_j\big( \hat a^r(v)\big)$. 
First, we show that for each $i \in I$ and each $v \in V_0$, $\hat p^r_i(v) \geq \sum_{j \neq i} v_j\big( \hat a^r(v)\big)$. 
Suppose, by contradiction, that there exist  $i \in I$ and  $v \in V_0$ such that  $\hat p^r_i(v) < \sum_{j \neq i} v_j\big( \hat a^r(v)\big)$.
By $v \in V_0 \subset D^r$, we have $S(v)=0$ and $\hat g^{r}(v) =(\hat a^r(v), \hat p^r(v)) \in F(v)$.
Then, by $\hat p^r_i(v) < \sum_{j \neq i} v_j\big( \hat a^r(v)\big)$ and \emph{efficiency} of $F$,
\begin{align*}
v_i\big(\hat a^r(v)\big)+\hat p^r_i(v)&< v_i\big(\hat a^r(v)\big)+\sum_{j \neq i} v_j\big( \hat a^r(v)\big)=\sum_{j = 1}^n v_j\big( \hat a^r(v)\big) =S(v)=0,
\end{align*}
a contradiction to \emph{individual rationality} of $F$.

Next, let us show that for each $i \in I$ and each $v \in V_0$,  $\hat p^r_i(v) \leq \sum_{j \neq i} v_j\big( \hat a^r(v)\big)$. 
Suppose, by contradiction, that there exist $i \in I$ and  $v \in V_0$ such that $\hat p^r_i(v) >\sum_{j \neq i} v_j\big( \hat a^r(v)\big)$. 
Then, since $\hat p^r_j(v) \geq \sum_{k \neq j} v_k\big( \hat a^r(v)\big)$ holds for all $j \in I$,
\begin{align*}
\sum_{j=1}^n \hat p^r_j(v) 
> \sum_{j=1}^n \left\{ \sum_{k \neq j} v_k\big( \hat a^r(v)\big) \right\} 
=(n-1)\sum_{j=1}^n v_j\big( \hat a^r(v)\big)
=(n-1)S(v)  =0,
\end{align*}
contradicting \emph{budget feasibility} of $F$.

\bigskip

{\bf Step 2.}
Let us show that for each $v \in V_0$,  $\hat a^r(v)=\phi$. 
To show this, fix any $i \in I$ and any $v \in V_0$.
Suppose, by contradiction, that  $\hat a^r(v) \neq \phi$. 
Let $v'_i \in V$ be such that 
\[ v'_i(b)= 
\begin{cases}
0  \ \ &\text{if $b=\phi$}, \\
  -\sum_{j \neq i}  v_j\big(\hat a^r(v)\big) +  r  \ \ &\text{if $b=\hat a^r(v)$}, \\
-\sum_{j \neq i} v_j(b)  \ \ &\text{otherwise}.
\end{cases}
\]
Note that since $(v'_i, v_{-i}) \in D^r \setminus V_0$, we have
$\hat g^{r}(v'_i, v_{-i}) =(\hat a^r(v'_i, v_{-i}), \hat p^r(v'_i, v_{-i})) \in F(v'_i, v_{-i})$.
Then, by \emph{efficiency} of $F$, we have $\hat a^r(v'_i, v_{-i}) =\hat a^r(v)$.
In addition, by the assumption on $\hat g^{r}$, it follows that
\begin{align*}
v'_i\big(\hat a^r(v'_i, v_{-i}  )\big)+\hat p^r_i(v'_i, v_{-i})
\geq 
v'_i\big(\hat a^r(v_i, v_{-i}   )\big)+\hat p^r_i(v_i, v_{-i}).
\end{align*}
Then, by  $\hat a^r(v'_i, v_{-i}) =\hat a^r(v)$ and Step 1, 
\begin{equation} \label{peq}
\hat p^r_i(v'_i, v_{-i}) \geq \hat p^r_i(v_i, v_{-i})=\sum_{j \neq i} v_j\big( \hat a^r(v)\big).
\end{equation}
On the other hand, since  $(v'_i, v_{-i}) \in D^r \setminus V_0$, by invoking Lemma \ref{lem-imple}, we obtain
\[\hat p^r_i(v'_i, v_{-i})=\sum_{j \neq i} v_j\big( \hat a^r(v'_i, v_{-i})\big) -\frac{n-1}{n}r =\sum_{j \neq i} v_j\big( \hat a^r(v)\big)  -\frac{n-1}{n}r <\sum_{j \neq i} v_j\big( \hat a^r(v)\big), \]
which is a contradiction to equation \eqref{peq}.
Therefore, $\hat a^r(v) = \phi$ holds. 

\bigskip

{\bf Step 3.}
Let us show that for each $i \in I$ and each $v \in D^{r}$, 
\[ \hat p^r_i(v)=\sum_{j \neq i} v_j\big(\hat a^r(v)\big)+\frac{n-1}{n}r   \cdot \1_{\phi}\big(\hat a^r(v)\big) -\frac{n-1}{n}r . \]
Take any $i \in I$ and any  $v \in D^{r}$. 
Consider the case with  $v \in D^r \setminus V_0$.
Then, by $\hat g^{r}(v) =(\hat a^r(v), \hat p^r(v)) \in F(v)$ and \emph{efficiency} of $F$, $\hat a^r(v) \neq \phi$.
Therefore, by Lemma \ref{lem-imple},
\begin{align*}
\hat p^r_i(v)&=\sum_{j \neq i} v_j\bigl(\hat a^r(v)\bigr) -\frac{n-1}{n}r \\
&=\sum_{j \neq i} v_j\big(\hat a^r(v)\big)+\frac{n-1}{n}r  \cdot \1_{\phi}\big(\hat a^r(v)\big) -\frac{n-1}{n}r . 
\end{align*}
Next, consider the case with $v \in V_0$.
Then, by Step 2, $\hat a^r(v)=\phi$,
and thus by Step 1,
\[ \hat p^r_i(v)=\sum_{j \neq i} v_j\bigl(\hat a^r(v)\bigr)=\sum_{j \neq i} v_j\big(\hat a^r(v)\big)+\frac{n-1}{n}r  \cdot \1_{\phi}\big(\hat a^r(v)\big) -\frac{n-1}{n}r . \]
\end{proof}

Let us show that $F$ satisfies conditions (i)-(v) of the definition of the quasi-equal-division rules.

\medskip

\noindent
\emph{Conditions (i)--(iv):}
Take any $\hat v \in V^n$ and any $(\hat a, \hat p) \in F(\hat v)$.
Since $F$ is $\{D^{r}\}_{r \in \mathbb{R}_{++}}$-implementable, there exists a family of mechanisms $\left\{ \left(S_I^{r},g^{r} \right) \right\}_{r \in \mathbb{R}_{++}}$ that $\{D^{r}\}_{r \in \mathbb{R}_{++}}$-implements a social choice rule $F$. 
By the definition of $\{D^{r}\}_{r \in \mathbb{R}_{++}}$-implementation, there exists $r \in \mathbb{R}_{++}$ with $\hat v \in D^r$ such that $(\hat a, \hat p) \in g^{r} \left(DE(\left(S_I^{r},g^{r}\right), \hat v)\right)$.
Then, by Lemma \ref{lem_rev0}, there exists a mechanism $\left(V^n, \hat g^{r} \right)$ such that 
\begin{itemize}
\item[(i)] $\hat g^{r}(\hat v)=(\hat a, \hat p)$,
\item[(ii)] for each $v \in D^{r}$, $\hat g^{r}(v) \in F(v)$, 
\item[(iii)] for each $i \in I$, each $v \in V^n$, and each $v'_i \in V$,
\[ u_i\left(\hat g^{r}(v_i, v_{-i}); v_i\right) \geq u_i\left(\hat g^{r}(v'_i, v_{-i}); v_i\right).\] 
\end{itemize}

By \emph{efficiency} of $F$, we have $\hat a \in \argmax_{b \in A} \, \sum_{j=1}^n \hat v_j(b)$. Since $\hat v \in V^n$ and $(\hat a, \hat p) \in F(\hat v)$ were chosen arbitrarily, condition (i) of the definition of the quasi-equal-division rules holds.
Moreover, by the same argument as Step 2 of the proof of Claim \ref{charact1}, we can confirm that condition (ii) is also satisfied.

By  Claim \ref{charact1} and $\hat g^{r}(\hat v)=(\hat a, \hat p)$, for each $i \in I$, we have
\begin{align}
u_i\left(\hat a, \hat p; \hat v_i\right)=\hat v_i\left(\hat a\right)+\hat p_i&=\sum_{j =1}^n \hat v_j\big(\hat a\big)+\frac{n-1}{n}r   \cdot \1_{\phi}\big(\hat a\big) -\frac{n-1}{n}r \notag \\
&=S(\hat v) +\frac{n-1}{n}r   \cdot \1_{\phi}\big(\hat a\big) -\frac{n-1}{n}r, \label{eq1-quasi-claim2}
\end{align}
where the last equality follows from $\hat a \in \argmax_{b \in A} \, \sum_{j=1}^n \hat v_j(b)$.
Since the right hand side of \eqref{eq1-quasi-claim2} does not depend on $i$, condition (iii) is satisfied.

In addition, if $\hat v \in D^r \setminus V_0$, then $\hat a \neq \phi$ and $\frac{n-1}{n}r \leq S(\hat v) \leq r$.
If $\hat v\in V_0$, then $\hat a=\phi$ and $ S(\hat v)=\sum_{j=1}^n \hat v_j(\hat a)=0$.
Therefore, for each $i \in I$, 
\begin{align}
0  &\leq S(\hat v) +\frac{n-1}{n}r   \cdot \1_{\phi}\big(\hat a\big) -\frac{n-1}{n}r , \notag \\
& \leq S(\hat v) +\frac{n-1}{n}  S(\hat v) \cdot \1_{\phi}\big(\hat a\big) - \frac{n-1}{n} \cdot S(\hat v) \notag  \\
&=\frac{1}{n}S(\hat v).\label{eq2-quasi-claim2}
\end{align}
Combining equations \eqref{eq1-quasi-claim2} and \eqref{eq2-quasi-claim2}, we obtain $0 \leq  \sum_{j=1}^n u_j\left(\hat a, \hat p; \hat v_j\right) \leq S(\hat v)$, and thus condition (iv) is satisfied.

\medskip

\noindent
\emph{Condition (v):}
Consider any $\bar v \in V^n$ and any $0 \leq c \leq S(\bar v)$.
If $\bar v \in V_0$, let $r$ be any element of $\mathbb{R}_{++}$.
Otherwise let $r=\left(\frac{n}{n-1}\right) \left(S(\bar v)-\frac{1}{n}c\right)>0$.
If $\bar v\in V_0$, then clearly $\bar v\in D^r$.
If $\bar v \notin V_0$, then
\begin{align*}
\frac{n-1}{n}r
&=S(\bar v)- \frac{1}{n}c \leq S(\bar v)\\
&=\left(\frac{n}{n-1}\right) \left(S(\bar v)-\frac{1}{n} S(\bar v) \right)\notag \\
&\leq \left(\frac{n}{n-1}\right) \left(S(\bar v)- \frac{1}{n} c\right) = r,
\end{align*}
which implies that $\bar v \in D^r$.

Note that by Lemma \ref{lem_rev0}, there exists a mechanism $\left(V^n, \hat g^{r} \right)$ such that the two assumptions in Claim \ref{charact1} are met.
In particular, we have $\hat g^{r} (\bar v)=(\hat a^r(\bar v), \hat p^r(\bar v)) \in F(\bar v)$.
If $\bar v \in V_0$, then $0 \leq c \leq S(\bar v)=0$, and thus $c=0$.
Moreover, by the same argument as Steps 1 and 2 of Claim \ref{charact1}, we have $\hat a^r(\bar v)=\phi$ and $\hat p^r_i(\bar v)=0$ for all $i \in I$.
Therefore, $\sum_{j=1}^n u_j\left(\hat a^r(\bar v), \hat p^r(\bar v); \bar v_j\right)=0$,
 and thus condition (v) is satisfied.
Suppose $\bar v \notin V_0$. Then, by Claim \ref{charact1} and the definition of $r$, for each $i \in I$,
\begin{align*}
u_i\left(\hat a^r(\bar v), \hat p^r(\bar v);\bar v_i\right)&=\sum_{j =1}^n \bar v_j\big(\hat a^r(\bar v)\big)+\frac{n-1}{n}r   \cdot \1_{\phi}\big(\hat a^r(\bar v)\big) -\frac{n-1}{n}r \\
&=S(\bar v)  -\frac{n-1}{n}r =\frac{1}{n}c.
\end{align*}
The above equation implies $\sum_{j=1}^n u_j\left(\hat a^r(\bar v), \hat p^r(\bar v); \bar v_j\right)=c$, and hence condition (v) holds.
\hfill $\square$

\subsection{Proof of statement~(ii) of Theorem~\ref{thm1}}
\label{app:interval-ii}
Consider any domain cover $\{D^{\lambda}\}_{\lambda \in \Lambda}$ such that  $\{D^{r}\}_{r \in \mathbb{R}_{++}} \prec \{D^{\lambda}\}_{\lambda \in \Lambda}$.
Suppose, by contradiction, that there exists a social choice rule $F$ that is efficient, individually rational, budget feasible and $\{D^{\lambda}\}_{\lambda \in \Lambda}$-implementable.
Since $\{D^{r}\}_{r \in \mathbb{R}_{++}} \prec \{D^{\lambda}\}_{\lambda \in \Lambda}$, there exist $r \in \mathbb{R}_{++}$ and $\lambda \in \Lambda$ such that $D^r \subsetneq D^{\lambda}$, and $V_i (D^{\lambda})=V$ holds for each $i \in I$.
Then, by Lemma \ref{lem_rev0}, there exists a mechanism $\left(V^n, \hat g^{\lambda} \right)$ such that 
\begin{itemize}
\item for each $v \in D^{\lambda}$, $\hat g^{\lambda}(v) \in F(v)$; and
\item for each $i \in I$, each $v \in V^n$, and each $v'_i \in V$,
\[ u_i\left(\hat g^{\lambda}(v_i, v_{-i}); v_i\right) \geq u_i\left(\hat g^{\lambda}(v'_i, v_{-i}); v_i\right).\] 
\end{itemize}
Moreover, since $D^r \subsetneq D^{\lambda}$, there exists $v \in D^{\lambda} \setminus D^r$ such that one of the following two conditions holds:
\begin{itemize}
\item[(i)] $0< S(v) <\frac{n-1}{n}r$,
\item[(ii)]$r < S(v)$.
\end{itemize}

Suppose that  condition (i) holds.
Since $v \in  D^{\lambda}$, we have $\hat g^{\lambda}(v)=(\hat  a^{\lambda}(v), \hat p^{\lambda}(v)) \in F(v)$, and $v \notin D^r$ implies that $v \notin  V_0$.
Then, by Lemma \ref{lem-imple} and \emph{efficiency} of $F$,
\begin{align*}
   v_i\big(\hat a^{\lambda}(v)\big)+\hat p^{\lambda}_i(v)
   &= v_i\big(\hat a^{\lambda}(v)\big)+   \sum_{j \neq i}v_j\big(\hat a^{\lambda}(v)\big)-  \frac{n-1}{n}r 
   =  \sum_{j =1}^n v_j\big(\hat a^{\lambda}(v)\big) -  \frac{n-1}{n}r   \\
  &=  S(v)-  \frac{n-1}{n}r 
  <   \frac{n-1}{n}r - \frac{n-1}{n}r =0, 
\end{align*}
a contradiction to \emph{individual rationality} of $F$.

Suppose that condition (ii) holds.
Then, by $v  \in D^{\lambda} \setminus V_0$ and Lemma \ref{lem-imple},
\begin{align*}
\sum_{i=1}^n \hat p^{\lambda}_i(v)&=\sum_{i=1}^n \left(\sum_{j \neq i}v_j\big(\hat a^{\lambda}(v)\big)-  \frac{n-1}{n}r \right)\\
&=(n-1) \sum_{j=1}^n v_j\big(\hat a^{\lambda}(v)\big)  - (n-1)r 
=(n-1) S(v)  - (n-1) r \\
&>(n-1) r - (n-1) r=0,
\end{align*}
a contradiction to \emph{budget feasibility} of $F$.
\hfill
$\square$

\section{Proof of Proposition~\ref{prop2} and Theorem~\ref{thm2}}
\label{app:surplus}
The mechanisms in Proposition~\ref{prop2} correspond to the case
where $A(v)$, defined below, is a singleton in the construction
for the ``if'' direction of statement~(i) of Theorem~\ref{thm2}.
In this case, reports of alternatives can be omitted.
 Accordingly, we confine our presentation here to the proof of Theorem \ref{thm2}.

\subsection{Proof of statement~(i) of Theorem~\ref{thm2}}
\label{app:surplus-i}
{\bf \hypertarget{If_part2}{``If" direction}.}
Let $F^{**}$ be an equal-division rule.
Obviously, $F^{**}$ is \emph{efficient}, \emph{individually rational}, and \emph{budget balanced}.
Let us confirm that $F^{**}$ is $\{C^{r}\}_{r \in \mathbb{R}_{++}}$-implementable.
For each $v \in V^n$, let $A(v)=\left\{a \in A: \exists p \in \mathbb{R}^n \ \text{s.t.} \ (a, p) \in F^{**}(v) \right\}$.
By the definition of the equal-division rule, $A(v) \neq \emptyset$.
For each $v \in V^n$, fix some $b(v) \in A(v)$.

For each $r \in \mathbb{R}_{++}$, define a mechanism $\left((V \times A)^n, g^{r}\right)$ by the following four conditions: for each strategy profile $s=(v_j, a_j)_{j \in I} \in (V \times A)^n$, 
\begin{itemize}
\item[(i)] if $S(v) <\frac{n-1}{n} r$, then $a^r\left(s\right)=\phi$,
\item[(ii)] if $S(v) = r$, i.e., $v \in C^r \setminus V_0$, then $a^r\left(s\right)$ is determined as follows:
\begin{itemize}
\item if all individuals' reports on alternatives are identical, i.e., $a_i \equiv a$ for all $i \in I$, and $a \in A(v)$, then $a^r\left(s\right)=a$,
\item otherwise, $a^r\left(s\right)=b(v)$,
\end{itemize}
\item[(iii)] if $\frac{n-1}{n} r \leq S(v) <r$ or $r<S(v) $, then
\[a^r\left(s\right) \in \argmax_{b \in A} \sum_{j=1}^n v_j( b),\]
\item[(iv)] for each $i \in I$, 
\[p^r_i\left(s\right)=\sum_{j \neq i} v_j\left( a^r\left(s\right) \right)+\frac{n-1}{n}r  \cdot \1_{\phi}\left( a^r\left(s\right) \right) -\frac{n-1}{n}r .\] 
\end{itemize}

Let us show that $F^{**}$ is $\{C^{r}\}_{r \in \mathbb{R}_{++}}$-implemented by $\left\{\left((V \times A)^n, g^{r}\right)\right\}_{r \in \mathbb{R}_{++}}$.
Note that for each $r \in \mathbb{R}_{++}$, the mechanism $\left((V \times A)^n, g^{r}\right)$ takes a quite similar form to that defined in the proof of Theorem \ref{thm1} (``if" direction of statement (i) in Appendix~\ref{app:interval-i}).
Then, by the same argument as before, for each $r \in \mathbb{R}_{++}$, each $i \in I$, and each $v_i \in V$, we have
\begin{equation}
DS_i\big(\left((V \times A)^n, g^{r}\right), v_i\big)=\left\{(v'_i, a_i) \in V \times A: v'_i=v_i\right\}.\label{dom-unique2}
\end{equation}

\medskip

{\bf Step 1.} 
Let us show that for each $r \in \mathbb{R}_{++}$ and each $v \in C^r$, $DE\big(\left((V \times A)^n, g^{r}\right), v\big) \neq \emptyset$ and $g^{r} \left(DE\big(\left((V \times A)^n, g^{r}\right), v\big)\right) \subset F^{**}(v)$.
By equation \eqref{dom-unique2}, the first part is obvious.
Take any $r \in \mathbb{R}_{++}$, any $v \in C^r$, and any $s=(v_j, a_j)_{j \in I} \in DE\big(\left((V \times A)^n, g^{r}\right), v\big)$.
First, consider the case with $v \in V_0$.
Then, by the definition of $g^r$, $a^r\left(s\right)=\phi$ and $p_i^r\left(s\right)=0$ for all $i \in I$.
Moreover, by $S(v)=0$ and the definition of $F^{**}$, for any $(a, p) \in F^{**}(v)$, we have $a=\phi$ and $u_i\left(a, p; v_i\right)=0$ for all $i \in I$, which in turn implies that $p_i=0$ for all $i \in I$.
Therefore, we obtain $\left(a^r\left(s \right), p^r\left(s\right)\right)=(a, p) \in F^{**}(v)$.

Second, consider the case with $v \in C^r \setminus V_0$.
By the definition of $g^r$, we have
\begin{align*}
&a^r\left(s \right) \in A(v),\\
&p^r_i\left(s\right)=\sum_{j \neq i} v_j\left( a^r\left(s\right) \right)-\frac{n-1}{n}r  \ \text{for all $i \in I$.}
\end{align*} 
Moreover, by the definitions of $A(v)$ and $F^{**}$, there exists $p \in \mathbb{R}^n$ such that $(a^r\left(s \right), p) \in F^{**}(v)$  and
\begin{align}
v_i\left(a^r\left(s\right)\right)+p_i&=\frac{1}{n}S(v)  \ \ \text{$\forall i \in I$.} \label{eq-util2}
\end{align}
Since $S(v)=r$ and $a^r(s)$ is an efficient alternative, we also have
\begin{align}
\frac{1}{n}S(v)&=S(v)-\frac{n-1}{n} S(v)=S(v)-\frac{n-1}{n} r \notag\\
&=\sum_{j=1}^n v_j\left(a^r(s)\right)-\frac{n-1}{n} r. \label{eq-util3}
\end{align}
Equations \eqref{eq-util2} and \eqref{eq-util3} together imply that
\begin{align*}
p_i=\sum_{j \neq i} v_j\left(a^r(s)\right)-\frac{n-1}{n} r= p^r_i\left(s\right) \ \ \text{$\forall i \in I$.}
\end{align*}
Therefore, we obtain $\left(a^r(s), p^r(s)\right) =\left(a^r(s), p\right) \in F^{**}(v)$.

\bigskip

{\bf Step 2.} 
Let us show that for each $v \in V^n$, 
\begin{equation}
\bigcup_{\{r \in \mathbb{R}_{++}:  v \in C^{r}\}} g^{r} \left(DE\big(\left((V \times A)^n, g^{r}\right), v\big)\right)  = F^{**}(v).\label{thm2-imple2}
\end{equation}
Since the relation $\subset$ follows from Step 1, we show the opposite direction. 
Take any  $v \in V^n$ and $(a, p) \in F^{**}(v)$, and
define $s \equiv (v_j, a)_{j=1}^n \in (V \times A)^n$.
Note that by equation \eqref{dom-unique2}, we have $s \in  DE\big(\left((V \times A)^n, g^{r}\right), v\big)$.

First, suppose that $v \in V_0$.
Then, by the definition of $F^{**}$, $a=\phi$ and $p_i=0$ for all $i \in I$.
In addition, by the definition of $g^r$, we have $a^r(s)=\phi=a$ and $p_i^r(s)=0=p_i$ for all $i \in I$. Since $s \in  DE\big(\left((V \times A)^n, g^{r}\right), v\big)$,
it follows that $(a, p)=(a^r(s), p^r(s)) \in g^r\left(DE\big(\left((V \times A)^n, g^{r}\right), v\big)\right)$ for all $r \in \mathbb{R}_{++}$.

Second, suppose that $v \notin V_0$.
By the definition of $F^{**}$, we have $u_i\left(a, p; v_i\right)=\frac{1}{n}S(v)$ for all $i \in I$.
Let $r \equiv S(v)>0$. 
Then, we have $v \in C^r $,
and by the definition of $A(v)$, $a \in A(v)$ holds.
Thus, by the definition of $g^r$, we obtain $a^r(s)=a$.

Since $(a, p) \in F^{**}(v)$, \emph{efficiency} of $F^{**}$ implies that $a \neq \phi$ and $r=S(v)=\sum_{j=1}^n v_j(a)$.
Thus, for each $i \in I$, we have
\[v_i(a)+p_i=\frac{1}{n}S(v)= \sum_{j=1}^n v_j(a)-\frac{n-1}{n}r,\]
or equivalently,
\begin{align*}
p_i&= \sum_{j \neq i} v_j(a)-\frac{n-1}{n}r=\sum_{j \neq i} v_j\left(a^r(s)\right)-\frac{n-1}{n}r\\
&=\sum_{j \neq i} v_j\left( a^r\left(s\right) \right)+\frac{n-1}{n}r  \cdot \1_{\phi}\left( a^r\left(s\right) \right) -\frac{n-1}{n}r =p^r_i\left(s\right),
\end{align*}
where the second and third equations follow from $a^r(s)=a$ and $a^r(s) \neq \phi$, respectively.

Then, by $s=(v_j, a)_{j \in I} \in DE\big(\left((V \times A)^n, g^{r}\right), v\big)$, we have 
\begin{align*}
(a, p)=\left(a^r\left(s\right), p^r\left(s\right)\right) 
\in g^r\left(DE\big(\left((V \times A)^n, g^{r}\right), v\big)\right).
\end{align*}
Since $v \in C^r$, the above relation yields equation \eqref{thm2-imple2}, and thus $F^{**}$ is $\{C^r\}_{r \in \mathbb{R}_{++}}$-implemented by the family of mechanisms $\left\{\left((V \times A)^n, g^{r}\right)\right\}_{r \in \mathbb{R}_{++}}$.
\hfill
$\square$

\bigskip

{\bf ``Only if" direction.}

\begin{claim}\label{charact-balance1}
Assume that $|A|\geq3$.
Consider any social choice rule $F$ that is efficient, individually rational, budget balanced, and $\{C^{r}\}_{r \in \mathbb{R}_{++}}$-implementable.
Fix any $r \in \mathbb{R}_{++}$.
Let $\left(V^n, \hat g^{r} \right)$ be any mechanism that satisfies conditions (ii) and (iii) in Lemma \ref{lem_rev0},\footnote{Note that $V_i (C^r)=V$ for all $i \in I$, and thus each individual's strategy space becomes $V$.}
 that is,
\begin{itemize}
\item for each $v \in C^r$, $\hat g^{r}(v) \in F(v)$, 
\item for each $i \in I$, each $v \in V^n$, and each $v'_i \in V$,
$ u_i\left(\hat g^{r}(v_i, v_{-i}); v_i\right) \geq u_i\left(\hat g^{r}(v'_i, v_{-i}); v_i\right).$
\end{itemize}
Then, for each $i \in I$ and each $v \in V^n$, we have
\begin{align}
&\hat a^r(v) \in \argmax_{b \in A} \ \left\{\sum_{j=1}^n v_j(b) + \frac{n-1}{n}r  \cdot \1_{\phi}(b)\right\}, \label{balance-mech1} \\
&\hat p_i^r(v)= \sum_{j\neq i} v_j(\hat a^r(v) ) + \frac{n-1}{n}r  \cdot \1_{\phi}(\hat a^r(v) )-\frac{n-1}{n}r .\label{balance-mech2}
\end{align}
\end{claim}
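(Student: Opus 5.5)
The approach is to work agent by agent with the taxation-principle structure of the truthful mechanism $(V^n,\hat g^r)$. The profiles in $C^r$, where $\hat g^r(v)\in F(v)$, pin down each agent's menu of transfers. Dominant-strategy incentive compatibility, which holds on all of $V^n$, then extends the formula to every profile.

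\emph{Step 1: zero-surplus profiles.} Repeat Steps 1--2 of Claim~\ref{charact1} with \emph{budget balancedness} in place of \emph{budget feasibility}. Individual rationality at $v\in V_0$ gives $\hat p^r_i(v)\geq\sum_{j\neq i}v_j(\hat a^r(v))$, and budget balancedness forces equality, so every individual gets utility $0$ on $V_0$. Next I will show $\hat a^r(v)=\phi$ on $V_0$. The argument is the one in Step 2 of Claim~\ref{charact1}: deviate to a $v'_i$ with $(v'_i,v_{-i})\in C^r\setminus V_0$ selecting the same alternative. This step needs the utility formula of Step 2 below, so I will prove the two in tandem.

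\emph{Step 2: menu structure for fixed $v_{-i}$.} As in Step~1 of Lemma~\ref{lem-imple}, for fixed $v_{-i}$ the transfer to $i$ depends only on the chosen alternative, so write it $t_i(b;v_{-i})$. For every $v_{-i}$ and every $b\neq\phi$, there is a $v_i$ with $(v_i,v_{-i})\in C^r\setminus V_0$ and $b$ the unique efficient alternative. Set $v_i(b)=r-\sum_{j\neq i}v_j(b)$ and make all other non-status-quo sums negative. Hence every $t_i(b;v_{-i})$, $b\neq\phi$, is realised on $C^r$. Also $t_i(\phi;v_{-i})=0$ by Step 1, using a $v_i$ that makes every non-status-quo sum negative.

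I then compare two such types, one selecting $a$ and one selecting $b$. I perturb one of them so that the sum at the other alternative is $r-\varepsilon$, keeping it inside $C^r$ with the same choice. The two IC inequalities then give
\[
t_i(a;v_{-i})-\sum_{j\neq i}v_j(a)=t_i(b;v_{-i})-\sum_{j\neq i}v_j(b)\equiv h_i(v_{-i})
\]
for all $a,b\neq\phi$. Thus $u_i=S(v)+h_i(v_{-i})$ on $C^r\setminus V_0$, the VCG form \eqref{VCG}.

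\emph{Step 3: pinning $h_i$.} This is the main obstacle. Budget balancedness on $C^r\setminus V_0$ gives
\[
\sum_i h_i(v_{-i})=-(n-1)r
\]
for all $v$ with $S(v)=r$. I would use $|A|\geq3$ to vary $v_j$ while keeping $S(v)=r$: move the maximiser from one non-status-quo alternative $a_1$ to another $a_2$, and change the level of non-maximal alternatives. This should show that each $h_i$ is independent of $v_{-i}$, i.e.\ a constant $c_i$. With only two alternatives, the constraint $S=r$ freezes $\sum_j v_j(a_1)$ and this variation is unavailable.

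To get $c_i=-\frac{n-1}{n}r$, I would try to exploit the $V_0$ boundary. Take a profile where $\sum_jv_j(a)=0$ ties with $\phi$. By Step 1, everyone there gets $0$ and the outcome is $\phi$; this is a tie in $i$'s menu. Combining it with nearby $C^r$ profiles reached by raising different agents' valuations should force $c_i\leq -\frac{n-1}{n}r$ for each $i$. The sum identity then forces equality. The tie-breaking requirement from Step 1, $\hat a^r=\phi$ on $V_0$, is what should break the asymmetry among the $c_i$. If this fails, I would instead let the threshold at which $i$'s menu switches from $\phi$ to $b$ be observed from different agents' viewpoints. Each agent's threshold is $\sum_jv_j(b)=-c_i$, and consistency of $\hat a^r$ across deviating agents forces all the $-c_i$ to coincide, hence to equal $\frac{n-1}{n}r$.

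\emph{Step 4: all of $V^n$.} With the menu $t_i(\phi)=0$ and $t_i(b)=\sum_{j\neq i}v_j(b)-\frac{n-1}{n}r$ for $b\neq\phi$ now known for every $v_{-i}$, dominant-strategy IC implies that at any $v\in V^n$, $\hat a^r(v)$ maximises $v_i(b)+t_i(b;v_{-i})$. This objective equals $\sum_jv_j(b)+\frac{n-1}{n}r\1_\phi(b)-\frac{n-1}{n}r$, which gives \eqref{balance-mech1}. The transfer is read off the menu, which gives \eqref{balance-mech2}.
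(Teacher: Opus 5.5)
Your Steps 1, 2 and 4 are sound; the menu argument in Step 2 is a neat elementary substitute for the payoff-equivalence theorem the paper cites. The gap is in Step 3, at the claim that $|A|\geq3$ and the identity $\sum_i h_i(v_{-i})=-(n-1)r$ on $\{S(v)=r\}$ make each $h_i$ independent of $v_{-i}$. For $n\geq3$ this cannot follow from that identity, however you vary profiles. Take $h_1(v_{-1})=c_1+f(v_3)$, $h_2(v_{-2})=c_2-f(v_3)$ and $h_k\equiv c_k$ for $k\geq3$, with $\sum_k c_k=-(n-1)r$ and $f$ arbitrary. These satisfy the identity at every profile. (For $n=2$ your chain argument through two non-status-quo alternatives does work.)

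What excludes such solutions is that a single decision rule $\hat a^r$ must agree with all agents' menus at once. Write $\theta_i(v_{-i})\equiv-h_i(v_{-i})$. Then at no profile can $\max_{b\neq\phi}\sum_j v_j(b)$ lie strictly between $\theta_i(v_{-i})$ and $\theta_k(v_{-k})$. You invoke this only in your fallback, which already presupposes that the $h_i$ are constants; given constancy, that fallback plus the budget identity is correct. The $V_0$ route, by contrast, is circular, because Step 1 defers $\hat a^r=\phi$ on $V_0$ to the formula being proved. In any case, the tie at $S(v)=0$ yields only $c_i\leq0$.

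The paper gets the missing constancy wholesale from Roberts' theorem. The rule $\hat a^r$ is an affine maximizer, and efficiency on $C^r$ forces equal weights and equal $w$ on $A\setminus\{\phi\}$, so there is a single threshold $\alpha$. Payoff equivalence, together with individual rationality and budget balancedness on $V_0$, then gives $h_i\equiv-\alpha$, and budget balancedness on $C^r$ gives $\alpha=\frac{n-1}{n}r$.

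If you prefer to stay elementary, the consistency condition suffices, in two steps. First, $\theta_i$ is weakly decreasing in each $v_k(b)$, $k\neq i$, $b\neq\phi$. Suppose raising $v_k(b)$ raised $\theta_i$ from $s$ to $s'>s$. Choose $v_i$ (which does not affect $\theta_i$) so that the maximal non-status-quo sum is $T_1\in(s,s')$ before the raise, attained at some $a\neq b$, and is $T_2$ after it, slightly above $T_1$ and attained at $b$. Agent $i$'s menu then forces a non-status-quo alternative before and $\phi$ after. Agent $k$, however, faces the same $v_{-k}$ and hence the same threshold $q$, which requires $T_2\leq q\leq T_1$, a contradiction.

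Second, take any single-coordinate change of $v_k$, and pick $v_i$ so that both profiles have $S(v)=r$, attained at another non-status-quo alternative. Then $\theta_k$ is unchanged, and every other $\theta_j$ moves weakly in one direction by the first step. The identity $\sum_j\theta_j=(n-1)r$ at both profiles then forces all of them to stay put. Chaining such changes shows that each $\theta_i$ is independent of every $v_k$, hence constant, after which your fallback finishes the proof.
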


\medskip

\begin{proof}
{\bf Step 1.}
Let us show that there exist $\alpha \in \mathbb{R}$ and $\left\{h_i: V^{n-1} \to \mathbb{R}\right\}_{i \in I}$ such that for each $i \in I$ and each $v \in V^n$,
\begin{align} 
&\hat a^r(v) \in \argmax_{b \in A} \left\{\sum_{j=1}^n v_j(b) + \alpha \cdot \1_{\phi}(b)\right\},\label{a^r} \\
&\hat p_i^r(v)= \sum_{j\neq i} v_j( \hat a^r(v)) + \alpha \cdot \1_{\phi}( \hat a^r(v))+ h_i(v_{-i}). 
\label{p^r}
\end{align}

Since the second half of the assumption means that the outcome function $\hat g^r$ satisfies \emph{strategy-proofness}, we can apply Roberts' (1979) Theorem (in the form of \citealt{CarbajalMcLennanTourky2013}; Theorem 1) to show that one of the following three cases occurs\footnote{Case 1 follows from the fact that any \emph{lexicographic affine maximizer} is also an \emph{affine maximizer}.}:

\begin{quote}
{\bf Case 1.}  there exist $\sigma \in \mathbb{R}^n_+$ with $\sigma_1+ \dots +\sigma_n=1$ and $w: A   \to \mathbb{R}$ such that for each $v \in V^n$,
\[\hat a^r(v) \in \argmax_{b \in A} \sum_{j=1}^n \sigma_j v_j(b)+w(b),\]

\medskip

{\bf Case 2.} there exist distinct $a, b \in A  $ with $\hat a^r(V^n) = \{a, b\}$ such that for each $i \in I$ and each $v_{-i} \in V^{n-1}$, there exists $\delta_i(v_{-i}) \in \mathbb{R} \cup \{-\infty, +\infty\}$ for which for each  $v_i \in V$,
\begin{align*}
&v_i(a)-v_i(b)> \delta_i(v_{-i}) \Longrightarrow \hat a^r(v)=a, \ \ \text{and} \\
&v_i(a)-v_i(b)< \delta_i(v_{-i}) \Longrightarrow \hat a^r(v)=b.
\end{align*}

\medskip

{\bf Case 3.} The decision function $\hat a^r$ is a constant function.
\end{quote}

By the first half of the assumption and \emph{efficiency} of $F$, $\hat a^r$ must choose an efficient alternative on $C^r$, implying that cases 2 and 3 lead to contradictions as $|A| \geq 3$.

Let us consider case 1.
Suppose, by contradiction, that $\sigma_i < \sigma_j$ for some $i, j \in I$.
Then, since $A$ is finite, there exists $\beta \in \mathbb{R}_{++}$ such that for each $b \in A \setminus \{\phi\}$,
\begin{equation}\label{sigma}
\sigma_j r+(\sigma_i -\sigma_j) \beta + w(b)< w(\phi).
\end{equation}
Let $v_i, v_j \in V$ be such that for each $b \in A \setminus \{\phi\}$, $v_i(b)=\beta$ and $v_j(b)=r-\beta$.
For each $k \in I \setminus \{i,j\}$, let $v_k \in V$ be such that $v_k(b)=0$ for all $b \in A$.
Then, $v \in C^r \setminus V_0$ holds, and by \emph{efficiency} of $F$, $\hat a^r(v) \neq \phi$.
On the other hand, by equation \eqref{sigma}, for each $b \in A \setminus \{\phi\}$,
\[  \sum_{j=1}^n \sigma_j v_j(b)+w(b)=\sigma_j r+(\sigma_i -\sigma_j) \beta + w(b) <w(\phi).  \]
Since case 1 occurs, the above relation implies that $\hat a^r(v)=\phi$, which is a contradiction.
Therefore,  we have $\sigma_i = \sigma_j$ for all $i, j \in I$. 

Next, suppose, by contradiction, there exist $a, b \in A \setminus \{\phi\}$ such that $w(a) < w(b)$.
Let $\sigma\equiv \sigma_1=\dots=\sigma_n$.
Since $\hat a^r$ chooses an efficient alternative on $C^r$, we have $\sigma>0$.
Then, we can easily construct $v \in C^r$ that satisfies the following three conditions:
(i) $ \sum_{j=1}^n v_j(a)=r$, (ii) $\sum_{j=1}^n v_j(c)<r$ for all $c \in A \setminus \{a\}$, and (iii) $\sigma\sum_{j=1}^n v_j(a) +w(a) < \sigma \sum_{j=1}^n v_j(b) +w(b)$.
However, this means that $\hat a^r(v) \neq a$, contradicting the fact that $\hat a^r$ chooses an efficient alternative on $C^r$.
Therefore, we have $w(a) = w(b)$ for all $a, b \in A \setminus \{\phi\}$,
and thus equation \eqref{a^r} holds. 
Equation \eqref{p^r} follows from the famous payoff equivalence theorem (see \citealt{ChungOlszewski2007}; Theorem 4).

\medskip

{\bf Step 2.}
Let us show that for each $i \in I$ and each $v_{-i} \in V^{n-1}$, $h_i(v_{-i})=-\alpha$.
First, we show that $h_i(v_{-i}) \geq -\alpha$.
Take any $i \in I$ and any $v_{-i} \in V^{n-1}$.
Define $v_i \in V$ in such a way that $S(v)=0$ and $\argmax_{b \in A} \left\{\sum_{j=1}^n v_j(b) + \alpha \cdot \1_{\phi}(b)\right\}=\{\phi\}$.
Then, $v \in V_0 \subset C^r$.
In addition, by Step 1, we have $\hat a^r(v)=\phi$ and $\hat p_i^r(v)=\alpha+h_i(v_{-i})$, and thus
$u_i\left(\hat g^r(v); v_i\right)=\alpha+h_i(v_{-i})$.
Since $\hat g^{r}(v) \in F(v)$ follows from the first half of our assumption,
\emph{individual rationality} of $F$ implies that $h_i(v_{-i}) \geq -\alpha$.

Second, we show that for each $i \in I$ and each $v_{-i} \in V^{n-1}$, $h_i(v_{-i}) \leq -\alpha$.
Suppose, by contradiction, that there exist $i \in I$ and $v_{-i} \in V^{n-1}$ such that $h_i(v_{-i}) > -\alpha$.
Let $v_i \in V$ be such that $S(v)=0$ and $\argmax_{b \in A} \left\{\sum_{j=1}^n v_j(b) + \alpha \cdot \1_{\phi}(b)\right\}=\{\phi\}$.
Then, $\hat a^r(v)=\phi$, $\hat p_i^r(v)=\alpha+h_i(v_{-i})>0$, and $\hat p_j^r(v) =\alpha+h_j(v_{-j}) \geq 0$ for all $j \neq i$, which imply that $\sum_{j=1}^n \hat p_j^r(v)>0$.
However, since $\hat g^{r}(v) \in F(v)$, this is a contradiction to \emph{budget balancedness} of $F$.

\medskip

{\bf Step 3.}
Let us show that $\alpha=\frac{n-1}{n}r $.
Consider any $v \in C^r \setminus V_0$, i.e., $S(v)=r$.
Then, by the first half of our assumption, $\hat g^r(v)  \in F(v)$ holds.
In addition, by \emph{efficiency} of $F$, we have $\hat a^r(v) \neq \phi$. 
Therefore, by \emph{budget balancedness} of $F$ and Steps 1 and 2, 
\begin{align*}
0=\sum_{j=1}^n \hat p^r_j(v)= (n-1)S(v) + \sum_{j=1}^n h_j(v_{-j})
= (n-1)r - n \alpha,   
\end{align*} 
or equivalently, $\alpha=\frac{n-1}{n}r $.
Hence, by Steps 1--3, equations \eqref{balance-mech1} and \eqref{balance-mech2} hold as desired.
\end{proof}

\medskip

To complete the proof, 
consider any social choice rule $F$ that is efficient, individually rational, budget balanced, and $\{C^{r}\}_{r \in \mathbb{R}_{++}}$-implementable.
Take any  $\hat v \in V^n$ and any $\hat x=(\hat a, \hat p) \in F(\hat v)$.
By condition (ii) of Definition~\ref{def-D}, there exists
$r\in\mathbb{R}_{++}$ such that $\hat v\in C^r$ and
$\hat x$ is a dominant-strategy outcome of the corresponding
implementing mechanism at $\hat v$.
By Lemma~\ref{lem_rev0}, there exists a mechanism
$\left(V^n,\hat g^r\right)$ such that
\begin{itemize}
\item[(i)] $\hat g^{r}(\hat v)=(\hat a, \hat p)$,
\item[(ii)] for each $v \in C^r$, $\hat g^{r}(v) \in F(v)$, 
\item[(iii)] for each $i \in I$, each $v \in V^n$, and each $v'_i \in V$,
$ u_i\left(\hat g^{r}(v_i, v_{-i}); v_i\right) \geq u_i\left(\hat g^{r}(v'_i, v_{-i}); v_i\right).$
\end{itemize}

By \emph{efficiency} of $F$ and $(\hat a, \hat p) \in F(\hat v)$, we have $\hat a \in \argmax_{b \in A} \sum_{j=1}^n \hat v_j(b)$, meaning that condition (i) of the definition of the equal-division rules is satisfied for $\hat a$.
In addition, by Claim \ref{charact-balance1}, if $S(\hat v)=0$, then $\hat a^{r}(\hat v)=\hat a=\phi$, and thus condition (ii) holds. 

Let us show that condition (iii) of the definition of the equal-division rules is satisfied.
Consider any $i \in I$.
If $\hat v \in V_0$, then by Claim \ref{charact-balance1},
\begin{align*}
u_i(\hat a, \hat p; \hat v_i)&=u_i(\hat a^{r}(\hat v), \hat p^{r}(\hat v); \hat v_i)=\hat v_i(\hat a^{r}(\hat v))+\hat p_i^{r}(\hat v)\\
&=\max_{b \in A} \, \left(\sum_{j=1}^n \hat v_j(b) +\frac{n-1}{n}r   \cdot \1_{\phi}(b) \right) -\frac{n-1}{n}r \\
& =\sum_{j=1}^n \hat v_j(\phi) +\frac{n-1}{n}r -\frac{n-1}{n}r =0=\frac{1}{n}S(\hat v). 
\end{align*}
On the other hand, if $\hat v \in C^r \setminus V_0$, then by Claim \ref{charact-balance1} and $S(\hat v)=r$, we have
\begin{align*}
u_i(\hat a, \hat p; \hat v_i)&=u_i(\hat a^{r}(\hat v), \hat p^{r}(\hat v); \hat v_i)=\hat v_i(\hat a^{r}(\hat v))+\hat p_i^{r}(\hat v)\\
&=\max_{b \in A} \,  \left(\sum_{j=1}^n \hat v_j(b) +\frac{n-1}{n}r   \cdot \1_{\phi}(b) \right) -\frac{n-1}{n}r \\
&=S(\hat v) -\frac{n-1}{n}r \\
&=S(\hat v) -\frac{n-1}{n}S(\hat v) 
=\frac{1}{n}S(\hat v). 
\end{align*}
Therefore, condition (iii) of the definition of the equal-division rules holds.
\hfill$\square$

\subsection{Proof of statement~(ii) of Theorem~\ref{thm2}}
\label{app:surplus-ii}
Consider any domain cover $\{D^\lambda\}_{\lambda\in\Lambda}$
such that $\{C^r\}_{r\in\mathbb{R}_{++}}
\prec
\{D^\lambda\}_{\lambda\in\Lambda}$.
Suppose, by contradiction, that there exists a social choice rule $F$
that is efficient, individually rational, budget balanced, and $\{D^{\lambda}\}_{\lambda \in \Lambda}$-implementable.
Since $\{C^r\}_{r\in\mathbb{R}_{++}}
\prec
\{D^\lambda\}_{\lambda\in\Lambda}$, there exist $r\in\mathbb{R}_{++}$ and $\lambda\in\Lambda$
such that $C^r\subsetneq D^\lambda$.
Then, by Lemma~\ref{lem_rev0} and the same argument as in
the proof of Claim~\ref{charact-balance1}, we can confirm the existence of a mechanism $\left(V^n, \hat g^r \right)$ such that 
\begin{itemize}
\item[(i)] for each $v \in D^{\lambda}$, $\hat g^{r}(v) \in F(v)$, 
\item[(ii)] for each $v \in V^n$, $\hat a^r(v) \in \argmax_{b \in A} \ \left\{\sum_{j=1}^n v_j(b) + \frac{n-1}{n}r  \cdot \1_{\phi}(b)\right\},$
\item[(iii)] for each $v \in V^n$ and each $i \in I$, 
$\hat p_i^r(v)= \sum_{j\neq i} v_j(\hat a^r(v) ) + \frac{n-1}{n}r  \cdot \1_{\phi}(\hat a^r(v) )-\frac{n-1}{n}r $.
\end{itemize}

Consider any $v \in D^{\lambda} \setminus C^{r}$.
Define $r' \in \mathbb{R}$ by $r' \equiv S(v)$.
Note that since $v \notin C^{r}$, we have $v \notin V_0$, and thus $r' \neq 0$ and $r' \neq r$.
Moreover, since $v \in  D^{\lambda}$, $\hat g^{r}(v)=(\hat a^r(v), \hat p^r(v)) \in F(v)$.
Therefore, by \emph{efficiency} of $F$, $\hat a^r(v) \neq \phi$ holds.
Then, by conditions (ii) and (iii) above, it follows that
\begin{align*}
\sum_{i=1}^n \hat p_i^r(v) 
&=\sum_{i=1}^n \left\{\sum_{j \neq i} v_j\left(\hat a^r(v)\right) -\frac{n-1}{n}r \right\}
=(n-1) \sum_{j=1}^n v_j\left(\hat a^r(v)\right)-r(n-1)\\
&=(n-1) S(v)-r(n-1)=r'(n-1)-r(n-1) \neq 0,
\end{align*}
contradicting \emph{budget balancedness} of $F$, as $(\hat a^r(v), \hat p^r(v)) \in F(v)$.
\hfill
$\square$

\section{Proof of Corollaries~\ref{coro-p1} and \ref{coro-pp4}}
\label{app:verification}
In the proofs of statement~(i) of Theorems~\ref{thm1}
and \ref{thm2}
(Appendices~\ref{app:interval-i} and \ref{app:surplus-i}),
we showed that for each quasi-equal-division rule $F^*$
(equal-division rule $F^{**}$), there exists a family of mechanisms
$\{\left((V\times A)^n,g^r\right)\}_{r\in\mathbb{R}_{++}}$
that $\{D^r\}_{r\in\mathbb{R}_{++}}$-implements $F^*$
($\{C^r\}_{r\in\mathbb{R}_{++}}$-implements $F^{**}$).
We confirmed that 
for each $r \in \mathbb{R}_{++}$, each $v \in V^n$, and each $i \in I$, \begin{align*}
&DS_i\big(\left((V \times A)^n, g^{r}\right), v_i\big)=\left\{(\bar v_i, a_i) \in V \times A: \bar v_i=v_i\right\}.
\end{align*}
Therefore, for each $r \in \mathbb{R}_{++}$, each $v \in D^{r}$ ($ \in C^r$) and each $v' \in V^n \setminus D^{r}$ ($ \in V^n \setminus C^r$), $DE\big(\left((V \times A)^n, g^{r}\right), v\big)$ and $DE\big(\left((V \times A)^n, g^{r}\right), v'\big)$ are nonempty and disjoint.
Hence, these families of mechanisms verify label validity under
$\{D^r\}_{r\in\mathbb{R}_{++}}$
and $\{C^r\}_{r\in\mathbb{R}_{++}}$, respectively.
\hfill
$\square$

\section*{Acknowledgements}

This paper was previously titled
``Mechanism Design Incorporating Executor's Domain Selection.''
The author thanks Toyotaka Sakai, Dov Samet, Shuhei Otani,
Yoko Kawada, Noriaki Okamoto, Takako Fujiwara-Greve,
Toru Hokari, Shinsuke Nakamura, and participants at the
14th Meeting of the Society for Social Choice and Welfare.
This research is financially supported by JSPS KAKENHI
Grant Number JP24K16334.

\section*{Declaration of generative AI and AI-assisted technologies
in the manuscript preparation process}

The author used ChatGPT (OpenAI) for language editing and for assistance in surveying the related literature. All results and substantive content are
the author's own. The author reviewed and edited the content
as needed and takes full responsibility for the manuscript.

\end{document}